\newcommand{\SU}{\mathrm{SU}}
\newcommand{\SO}{\mathrm{SO}}
\newcommand{\so}{\mathfrak{so}}
\newtheorem{lemma}{Lemma}
\newtheorem{theorem}{Theorem}
\newtheorem{note}{Note}
\newtheorem{definition}{Definition}
\newtheorem{prop}{Proposition}
\newtheorem{fact}{Fact}
\newtheorem{cor}{Corollary}
\providecommand{\keywords}[1]
{
  \small	
  \textbf{\textit{Keywords---}} #1
}
\begin{document}

\title{Hyperbolic monopoles with continuous symmetries}
\author{C. J. Lang\thanks{E-mail address: cjlang@uwaterloo.ca}}
\affil{Department of Pure Mathematics \\ 200 University Avenue West \\ University of Waterloo, Canada \\ N2L 3G1}
\date{\today}

\maketitle
\abstract{We provide a framework to classify hyperbolic monopoles with continuous symmetries and find a Structure Theorem, greatly simplifying the construction of all those with spherically symmetry. In doing so, we reduce the problem of finding spherically symmetric hyperbolic monopoles to a problem in representation theory. Additionally, we determine constraints on the structure groups of such monopoles. Using these results, we construct novel spherically symmetric $\mathrm{Sp}(n)$ hyperbolic monopoles.}\newline\newline
\keywords{hyperbolic monopoles, monopoles, symmetry, representation theory}

\section{Introduction}

Let $M$ be a 3-manifold, $E\rightarrow M$ a vector bundle over $M$ with structure group $\mathrm{Sp}(n)$, equipped with a connection $A$, of curvature $F_A$, and a section $\Phi$ of $\mathrm{End}(E)$, called the Higgs field. Monopoles are solutions to the Bogomolny equations $F_A=\star D_A\Phi$, with finite energy $\varepsilon=\frac{1}{2\pi}\int_M |F_A|^2d\mathrm{vol}_M$. If $M$ has constant sectional curvature, then twistor methods can be used. When $M=\mathbb{R}^3$, we have Euclidean monopoles. When $M=H^3$, we have hyperbolic monopoles, which have not received as much study as their Euclidean counterparts. This is seen below, as much of the research on hyperbolic monopoles, including this work, has arisen in an effort to find hyperbolic analogues to results about Euclidean monopoles. When $M=S^3$, in fact whenever $M$ is compact, both sides of the Bogomolny equation must vanish. As such, the monopoles are flat and this case is not as deep as the previous two. In addition to their deep mathematical structure, hyperbolic monopoles are interesting due to their connection with monopoles in Anti de-Sitter space and Skyrmions~\cite{atiyah_skyrmions_2005,manton_skyrmions_1990,sutcliffe_monopoles_2011}.

The main difference between hyperbolic monopoles and their Euclidean counterparts is the idea of mass, the eigenvalues of the Higgs field at infinity. For Euclidean monopoles, as long as the limit of one eigenvalue is non-zero, we can always change the length scale such that the limit has modulus $1$. For hyperbolic monopoles, there is already a length scale, determined by the curvature of the underlying hyperbolic space. Thus, when changing the length scale to modify the mass, we change the curvature of the space. In this paper, we consider hyperbolic space to have constant sectional curvature $-1$ (the ball model with radius one). Another difference between the two types of monopoles is their motion. Indeed, the natural $L^2$ metric used to study the motion of Euclidean monopoles is infinite in the hyperbolic case.

The study of hyperbolic monopoles started when Atiyah used the conformal equivalence between $S^4\setminus S^2$ and $S^1\times H^3$ to associate hyperbolic monopoles with integral mass with circle-invariant instantons on $\mathbb{R}^4$, leading to a correspondence between these monopoles and based rational maps~\cite{atiyah_instantons_1984,atiyah_magnetic_1984}.
Given a vector bundle $\mathbb{E}\rightarrow\mathbb{M}$ over a 4-manifold $\mathbb{M}$, equipped with a connection $\mathbb{A}$, an instanton is a solution to the self-dual equations $\star F_{\mathbb{A}}= F_{\mathbb{A}}$, with finite action $\int_{\mathbb{M}}|F_{\mathbb{A}}|^2d\mathrm{vol}_{\mathbb{M}}$. With regards to hyperbolic monopoles, we are only interested in instantons on $\mathbb{M}=\mathbb{R}^4$, equivalently instantons on $\mathbb{M}=S^4$. The integral mass condition above ensures that the corresponding instanton on $S^4\setminus S^2$ extends to all of $S^4$; otherwise, we get non-trivial holonomy around $S^2\subseteq S^4$~\cite{atiyah_magnetic_1984}. 

In this paper, we only investigate hyperbolic monopoles with integral mass, due to their correspondence with circle-invariant instantons. Nonetheless, we review the literature for arbitrary mass hyperbolic monopoles. Because the zero curvature limit of hyperbolic space is Euclidean space, Atiyah conjectured that hyperbolic monopoles correspond to Euclidean monopoles in the zero curvature limit~\cite{atiyah_instantons_1984,atiyah_magnetic_1984}. Chakrabarti quickly gave explicit examples of hyperbolic monopoles, providing evidence to support Atiyah's curvature conjecture~\cite{chakrabarti_construction_1986}. Jarvis and Norbury would later confirm this conjecture in general~\cite{jarvis_zero_1997}. 

The ADHM transform provides a correspondence between instantons on $\mathbb{R}^4$ (equivalently $S^4$) and ADHM data $\mathcal{A}$, the set of quaternionic matrices $\hat{M}:=\begin{bmatrix}
L \\ M
\end{bmatrix}$ such that $M$ is symmetric and $\hat{M}$ satisfies the following non-linear constraint. Let $x\in\mathbb{H}\simeq \mathbb{R}^4$ and define $\Delta(x):=\begin{bmatrix}
L \\ M-Ix
\end{bmatrix}$. Then $\Delta(x)^\dagger \Delta(x)$ must be symmetric and non-singular for all $x\in\mathbb{H}$~\cite{atiyah_construction_1978}. In this paper, we look at a subset of $\mathcal{A}$ whose instantons are invariant under a particular circle action, giving us hyperbolic monopoles.

Given the similarities between Euclidean and hyperbolic monopoles, Braam and Austin sought to find a hyperbolic analogue to the ADHMN transform, which is a correspondence between Euclidean monopoles and Nahm data, solutions to the Nahm equation, an ordinary differential equation. They succeeded, finding a correspondence between $\SU(2)$ hyperbolic monopoles with integral mass and solutions to their discrete Nahm equation: a matrix-valued difference equation~\cite{braam_boundary_1990}. They also discovered a difference between hyperbolic and Euclidean monopoles reminiscent of the AdS-CFT correspondence: $\SU(2)$ hyperbolic monopoles with integral mass are determined by their boundary values at infinity. From this we got a new viewpoint for hyperbolic monopoles: holomorphic spheres, which are holomorphic embeddings of the Riemann sphere in projective space. The discrete Nahm equation was later generalized by Chan to produce $\SU(N)$ hyperbolic monopoles~\cite{chan_discrete_2018}. Just as in the Euclidean case, the discrete Nahm equations were shown to be integrable~\cite{ward_two_1999}. 

At this point, we have multiple ways of looking at hyperbolic monopoles with integral mass: solutions to the Bogomolny equation, rational maps, spectral curves, discrete Nahm data, and holomorphic spheres. The first four are hyperbolic analogues to Euclidean monopoles, however, no analogue of holomorphic spheres exists for Euclidean monopoles, as Euclidean monopoles are not determined by their value at infinity. There are hyperbolic monopoles with non-integral mass. Indeed, Nash used the JNR ansatz (a generalization of 't Hooft's ansatz for instantons) to give charge one spherically symmetric hyperbolic monopoles with arbitrary mass~\cite{nash_geometry_1986}. 
This result was further improved by Sibner and Sibner, who used Taubes' gluing argument to show that hyperbolic monopoles exist with arbitrary charge and mass~\cite{sibner_hyperbolic_2012}. Thus, the study of hyperbolic monopoles turned to finding relationships between these viewpoints for arbitrary mass~\cite{murray_spectral_1996, murray_hyperbolic_2003, murray_complete_2000, norbury_asymptotic_2001, norbury_boundary_2004, norbury_spectral_2007}. However, while the holomorphic sphere viewpoint was generalized to arbitrary mass and charge, spectral curves were found in general for arbitrary mass for charge one and two hyperbolic monopoles as well as for $\SU(N)$ monopoles satisfying specific boundary conditions. Additionally, substantial work has been done analyzing the spectral curves of hyperbolic monopoles with charge three and four~\cite{norbury_spectral_2007}.

As many had focused on the theoretical aspects of hyperbolic monopoles, there were very few explicit examples of these objects, with or without integral mass. Using Jarvis' construction of Euclidean monopoles from rational curves, Ioannidou and Sutcliffe generated some spherically symmetric $\SU(N)$ hyperbolic monopoles, for $N=3,4$ that descended to spherically symmetric Euclidean monopoles in the zero curvature limit~\cite{ioannidou_monopoles_1999}. Harland created spherically symmetric $\SU(2)$ hyperbolic monopoles with arbitrary mass by first constructing $\SO(3)$-symmetric hyperbolic calorons (instantons on $S^1\times H^3$) and taking a limit, shrinking the circle~\cite{harland_hyperbolic_2008}. Oliveira found that given any non-parabolic, spherically symmetric metric on $\mathbb{R}^3$, there is a one-parameter family of spherically symmetric $\SU(2)$ monopoles with arbitrary mass, all of which vanish at the origin~\cite[Appendix A]{oliveira_monopoles_2014}. In particular, this is true for hyperbolic and Euclidean space. In contrast to this result, in this paper, Proposition~\ref{prop:n+n} gives us a spherically symmetric, hyperbolic $\mathrm{Sp}(4)$ monopole that vanishes nowhere. Cockburn used the discrete Nahm equation to generate hyperbolic monopoles with integral mass and axial symmetry and deformed these monopoles to replace axial symmetry with dihedral symmetry~\cite{cockburn_symmetric_2014}. Franchetti and Maldonado found examples of hyperbolic monopoles constructed from solutions to the Helmholtz equation and from vortices~\cite{franchetti_monopoles_2016, maldonado_hyperbolic_2017}. 

The circle action used by Braam and Austin to generate their discrete Nahm equation led to using the upper-half-space model of hyperbolic space when considering the corresponding monopoles. While axial symmetries are easy to see in this space, others are difficult, such as Platonic symmetries. To that end, Manton and Sutcliffe used a different circle action to end up with the ball model, where rotational symmetries are very easy to see~\cite{manton_platonic_2014}. They found a set of quaternionic ADHM data that always give a $\SU(2)\simeq \mathrm{Sp}(1)$ hyperbolic monopole with unit mass~\cite{manton_platonic_2014}. In this paper, we generalize this set of data, finding a set of ADHM data that always gives $\mathrm{Sp}(n)$ hyperbolic monopoles with unit mass. In particular, we provide a framework to classify all spherically symmetric hyperbolic monopoles that have ADHM data in our set. To generate Platonic monopoles, Manton and Sutcliffe used the JNR ansatz, which would generate a circle invariant instanton if all poles were on the boundary of hyperbolic space in $\mathbb{R}^4$. In addition, they used ADHM data that was previously used to find Platonic instantons to find Platonic monopoles. Using this approach, Bolognesi et al. found exact examples of hyperbolic magnetic bags~\cite{bolognesi_magnetic_2015}.

Given the importance of the JNR ansatz and Manton and Sutcliffe's ADHM data to generating examples of hyperbolic monopoles, work was done to find formulae for computing the spectral curve and rational map of monopoles constructed from this data~\cite{bolognesi_hyperbolic_2014, sutcliffe_spectral_2021}. Additionally, work was done to compute the holomorphic sphere and energy density for monopoles constructed from the JNR ansatz~\cite{murray_jnr_2021}.

In earlier work, we created a Structure Theorem that generates spherically symmetric Euclidean monopoles~\cite{charbonneau_construction_2022}. However, this theorem has some hypotheses, meaning it is not known if it generates all such Euclidean monopoles. Just like others before, given the connection between Euclidean and hyperbolic monopoles, I sought an analogue of this theorem for the hyperbolic case, allowing me to add to the small, but growing list of examples of hyperbolic monopoles. By giving a more abstract proof of the Structure Theorem, I am able to remove the aforementioned hypotheses not only for the case of hyperbolic monopoles, but also Euclidean monopoles. This means that not only do the Structure Theorems provide novel examples of spherically symmetric monopoles with higher rank structure groups, but they generate all spherically symmetric Euclidean monopoles and all spherically symmetric hyperbolic monopoles that can be obtained from the subset of ADHM data I identify below. 

The main results of this paper are Theorem~\ref{thm:axialsym}, Theorem~\ref{thm:sphersym}, and Theorem~\ref{thm:struct}, the Structure Theorem. The first two theorems are able to linearize the equations of symmetry by working with Lie algebras. They do so for axial and spherical symmetry, respectively. These theorems reduce the problem of finding symmetric hyperbolic monopoles to a problem in representation theory. The Structure Theorem outlines all solutions to the equation of spherical symmetry, solving a key step in finding spherically symmetric hyperbolic monopoles. Using this theorem, we generate many infinite families of spherically symmetric, hyperbolic monopoles. 

In Section~\ref{sec:RotatingMonopoles}, I introduce the ADHM data that we study, as well as what it means for it to be equivariant under a rotation. In Section~\ref{sec:AxiallySym}, we investigate hyperbolic monopoles with axial symmetry. 
In Section~\ref{sec:SphericallySym}, we investigate hyperbolic monopoles with spherical symmetry and prove the Structure Theorem, which greatly simplifies the construction of such monopoles. We then provide examples of spherically symmetric hyperbolic monopoles and discuss constraints on the structure group of these monopoles, providing a framework for the classification of spherically symmetric hyperbolic monopoles. 

\section{Monopole data}\label{sec:RotatingMonopoles}

Recall that ADHM data corresponds to instantons on $\mathbb{R}^4$ (equivalently $S^4$) via the ADHM transform. Additionally, Atiyah's work gave a relationship between circle-invariant instantons and hyperbolic monopole with integral mass. In this section, following Manton and Sutcliffe, we investigate a set of ADHM data whose instantons are invariant under a specific circle action, giving us hyperbolic monopoles~\cite[Section 4]{manton_platonic_2014}. We then examine what it means for such data to be equivariant under a rotation.

The following is a generalization of the ADHM data that Manton and Sutcliffe used to generate Platonic $\mathrm{SU}(2)\simeq \mathrm{Sp}(1)$ hyperbolic monopoles~\cite[(4.9)-(4.11)]{manton_platonic_2014}. This data produces hyperbolic monopoles. 
\begin{definition}
Let $\mathcal{M}_{n,k}$ be the set of $(n+k)\times k$ quaternionic matrices $\hat{M}=\begin{bmatrix}
L \\ M
\end{bmatrix}$ such that 
\begin{itemize}
\item $M=M_1i+M_2j+M_3k\in\mathrm{Mat}(k,k,\mathbb{H})$ is symmetric and $M_1,M_2,M_3$ are real;
\item $L\in\mathrm{Mat}(n,k,\mathbb{H})$ is such that $LL^\dagger$ is a positive definite matrix;
\item $L^\dagger L-M^2=I_k$;
\item for $x\in\mathbb{H}\simeq \mathbb{R}^4$, let $\Delta(x):=\begin{bmatrix}
L \\ M-I_kx
\end{bmatrix}$, then $\Delta(x)^\dagger \Delta(x)$ is non-singular for all $x\in\mathbb{H}$. 
\end{itemize}\label{def:Mnk}
\end{definition}
In particular, the hyperbolic monopoles corresponding to data in $\mathcal{M}_{n,k}$ has structure group $\mathrm{Sp}(n)$ and instanton number $k$, which is the second Chern number of the instanton.

\begin{note}
Throughout this paper, $k$ indicates both the instanton number and the standard quaternion ($k=ij$) and its use at any given time is contextual.
\end{note}

When $n=1$, $LL^\dagger$ is a $1\times 1$ matrix. Moreover, it is a non-negative real number. Thus, the condition that $LL^\dagger$ is positive definite just becomes $L$ is non-zero, just as in the definition of Manton and Sutcliffe~\cite[(4.9)-(4.11)]{manton_platonic_2014}. Such a condition is natural, as requiring $LL^\dagger$ to be positive-definite only rules out those instantons which are trivial embeddings of smaller rank instantons. That is, those whose connection matrices have the form $A_\nu(x)=\tilde{A}_\nu(x)\oplus 0$. 
The final condition is required to use the ADHM transform.

\begin{note}
The condition $LL^\dagger$ being positive definite implies that $n\leq k$, as the rows of $L$ must be linearly independent.\label{note:ineq}
\end{note}

Manton and Sutcliffe's definition of $\mathcal{M}_{n,k}$ (where $n=1$) includes a left eigenvalue $\mu$~\cite[(4.11)]{manton_platonic_2014}. The existence of $\mu$ is guaranteed by the first three conditions for $\mathcal{M}_{n,k}$.
\begin{lemma}
Given $\hat{M}$ satisfying the first three conditions of $\mathcal{M}_{n,k}$. Then there exists a unique $\mu\in\mathfrak{sp}(n)$ such that $\mu L=LM$. Specifically, $\mu:=LML^\dagger(LL^\dagger)^{-1}$. Moreover, $LL^\dagger -\mu^2=I_{n}$.
\end{lemma}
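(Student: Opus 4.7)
The plan is to extract from the defining constraints a single commutation identity, and then verify all the claimed properties by direct algebraic manipulation. Uniqueness is essentially free: if any $\mu$ satisfies $\mu L = LM$, right-multiplying by $L^\dagger$ and inverting $LL^\dagger$ (which is possible because it is positive definite) forces $\mu = LML^\dagger (LL^\dagger)^{-1}$. The substance is to show that this candidate actually satisfies $\mu L = LM$, lies in $\mathfrak{sp}(n)$, and obeys $LL^\dagger - \mu^2 = I_n$.

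The key identity comes from rewriting the third condition as $L^\dagger L = I_k + M^2$. Since $M$ commutes with $M^2$ and with $I_k$, this immediately yields
\[
L^\dagger L \cdot M = M \cdot L^\dagger L.
\]
Sandwiching with $L$ on the left and $L^\dagger$ on the right gives $LL^\dagger \cdot LML^\dagger = LML^\dagger \cdot LL^\dagger$, so $LL^\dagger$ commutes with $LML^\dagger$. Consequently $\mu$ can equally well be written as $(LL^\dagger)^{-1} LML^\dagger$, a fact I will use repeatedly.

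To see that $\mu \in \mathfrak{sp}(n)$, I would use that $M$ is symmetric with purely imaginary quaternionic components, so $M^\dagger = -M$, while $LL^\dagger$ is self-adjoint (as is its inverse). Taking the dagger of $\mu = (LL^\dagger)^{-1} LML^\dagger$ then yields $\mu^\dagger = -LML^\dagger (LL^\dagger)^{-1} = -\mu$. For the equation $\mu L = LM$, the slickest route is to first compute
\[
L^\dagger \mu = L^\dagger \cdot LML^\dagger (LL^\dagger)^{-1} = M \cdot L^\dagger L L^\dagger (LL^\dagger)^{-1} = M L^\dagger,
\]
again pushing $M$ through by the commutation identity. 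Taking daggers and using $\mu^\dagger = -\mu$ together with $M^\dagger = -M$ gives $\mu L = LM$.

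For the final relation, iterating $\mu L = LM$ gives $\mu^2 L = LM^2$; right-multiplying by $L^\dagger$ and substituting $M^2 = L^\dagger L - I_k$ produces $\mu^2 LL^\dagger = LL^\dagger \cdot LL^\dagger - LL^\dagger$, so $\mu^2 = LL^\dagger - I_n$ after cancelling the invertible factor $LL^\dagger$. I do not anticipate a significant obstacle in any of this: the entire argument is a chain of short algebraic steps, and the only genuine insight is recognizing that the single commutation $L^\dagger L M = M L^\dagger L$ forced by the third condition propagates to the larger products needed to handle $\mu$.
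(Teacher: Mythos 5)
Your proposal is correct and follows essentially the same route as the paper: both derive the commutation $[L^\dagger L, M]=0$ (hence $[LML^\dagger, LL^\dagger]=0$) from $L^\dagger L = I_k + M^2$, verify skew-adjointness and $\mu L = LM$ by pushing $M$ through these products, get uniqueness from invertibility of $LL^\dagger$, and obtain the last identity by sandwiching $L^\dagger L - M^2 = I_k$ between $L$ and $L^\dagger$. The only cosmetic difference is that you derive $\mu L = LM$ by first computing $L^\dagger\mu = ML^\dagger$ and taking daggers, where the paper computes $\mu L$ directly; both are valid.
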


\begin{proof}
Using $L^\dagger L-M^2=I_k$, we can see that $[L^\dagger L,M]=0$ and $[LML^\dagger,LL^\dagger]=0$. 

Let $\mu:=LML^\dagger (LL^\dagger)^{-1}$, we see that
\begin{align*}
\mu^\dagger&=(LL^\dagger)^{-1}L(-M)L^\dagger =-LML^\dagger (LL^\dagger)^{-1}=-\mu, \quad\textrm{and}\\
\mu L&=(LL^\dagger)^{-1}LML^\dagger L=(LL^\dagger)^{-1}LL^\dagger LM=LM.
\end{align*}

That $\mu$ is unique comes from the invertibility of $LL^\dagger$. For the last identity, note that $LM^2=\mu^2 L$. As $L^\dagger L-M^2=I_k$, we multiply by $L$ on the left and $L^\dagger$ on the right to get 
\begin{equation*}
(LL^\dagger -\mu^2-I_n)LL^\dagger=0.
\end{equation*}
As $LL^\dagger$ is invertible, we obtain the desired identity.
\end{proof}

After identifying $\mathbb{R}^4\simeq\mathbb{H}$, 
Manton and Sutcliffe introduce the following circle action on $\mathbb{R}^4$:
\begin{equation}
x\mapsto \left(\cos\left(\frac{\alpha}{2}\right)x+\sin\left(\frac{\alpha}{2}\right)\right)\left(-\sin\left(\frac{\alpha}{2}\right)x+\cos\left(\frac{\alpha}{2}\right)\right)^{-1}.
\end{equation}
Note that while this circle action is not well-defined on $\mathbb{R}^4$, we can extend it to a well-defined circle action on $S^4=\mathbb{R}^4\cup\{\infty\}$. Moreover, note that instantons on $\mathbb{R}^4$ correspond to instantons on $S^4$. 

The following was proven for the $n=1$ case~\cite[Section 4]{manton_platonic_2014}, though the proof works for arbitrary $n$.
\begin{prop}
All $\hat{M}\in\mathcal{M}_{n,k}$ correspond to instantons invariant under this circle action. Hence, all of our ADHM data corresponds to hyperbolic monopoles.\label{prop:allmono}
\end{prop}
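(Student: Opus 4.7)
The plan is to find, for each $\alpha$, an $x$-independent matrix $T(\alpha)\in\mathrm{Sp}(n+k)$ satisfying the covariance identity
\begin{equation*}
\Delta(x')\,(p-qx)=T(\alpha)\,\Delta(x),
\end{equation*}
where $p=\cos(\alpha/2)$, $q=\sin(\alpha/2)$, and $x'=(px+q)(p-qx)^{-1}$ is the image of $x$ under the circle action. Since the scalar factor $(p-qx)$ does not affect $\ker\Delta^\dagger$, unitarity of $T(\alpha)$ forces $\Psi(x')=T(\alpha)\,\Psi(x)\,g(\alpha,x)$ for some $g(\alpha,x)\in\mathrm{Sp}(n)$, where $\Psi$ is the ADHM-normalised frame. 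Because $T(\alpha)$ is constant in $x$, a direct computation with the ADHM connection $A=\Psi^\dagger d\Psi$ shows that pulling back under the circle action sends $A$ to $g^{-1}Ag+g^{-1}dg$, proving circle invariance; Atiyah's correspondence then converts the instanton into a hyperbolic monopole with integral mass.

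To locate $T(\alpha)$, I would compute $\Delta(x')(p-qx)$ block by block, using the identity $x'(p-qx)=px+q$. The top block, $L(p-qx)=pL-qLx$, rewrites as $(pI_n-q\mu)L+qL(M-I_kx)$ upon invoking $\mu L=LM$ from the preceding lemma. The bottom block expands to $(pM-qI_k)-(qM+pI_k)x$, and subtracting $(pI_k+qM)(M-I_kx)$ leaves the residue $-q(I_k+M^2)=-qL^\dagger L$ by $L^\dagger L-M^2=I_k$. Reading off the blocks forces
\begin{equation*}
T(\alpha)=\begin{bmatrix} pI_n-q\mu & qL \\ -qL^\dagger & pI_k+qM \end{bmatrix}.
\end{equation*}

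The final step is to verify $T(\alpha)^\dagger T(\alpha)=I_{n+k}$. Each block of the product simplifies through one of the structural identities: the diagonal blocks collapse to $(p^2+q^2)I$ using $LL^\dagger-\mu^2=I_n$ and $L^\dagger L-M^2=I_k$ respectively, while the off-diagonal blocks vanish thanks to $\mu L=LM$ and its adjoint $L^\dagger\mu=ML^\dagger$ (which follows from $\mu^\dagger=-\mu$ and $M^\dagger=-M$, the latter using that $M_1,M_2,M_3$ are real and $M$ is symmetric). I expect the main obstacle to be simply guessing the correct form of $T(\alpha)$; once guided by the eigenvalue $\mu$ from the preceding lemma and the anti-self-adjointness of $M$, the rest is entirely mechanical. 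Nothing in this argument restricts $n$, so it extends Manton and Sutcliffe's $n=1$ calculation verbatim.
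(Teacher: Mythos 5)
Your proof is correct and is essentially the argument the paper has in mind: the paper defers to Manton and Sutcliffe's $n=1$ computation, which proceeds by exactly this covariance identity $\Delta(x')(p-qx)=T(\alpha)\Delta(x)$ with a constant unitary $T(\alpha)$, and your block computation (using $\mu L=LM$, $L^\dagger L-M^2=I_k$, and $LL^\dagger-\mu^2=I_n$) is the correct generalization to arbitrary $n$. The verification that $T(\alpha)^\dagger T(\alpha)=I_{n+k}$ and the passage from the kernel relation to gauge-equivalence of the pulled-back connection both check out.
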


Using the conformal equivalence of $\mathbb{R}^4\setminus\mathbb{R}^2\equiv S^1\times H^3$, the model of hyperbolic space produced by this circle action is the ball model:  $H^3=\{X=X_1i+X_2j+X_3k\in\mathfrak{sp}(1)\mid R^2:=X_1^2+X_2^2+X_3^2<1\}$ with metric given by 
\begin{equation*}
g_{ij}(X)=\frac{4}{(1-R^2)^2}\delta_{ij}.
\end{equation*}

Given $\hat{M}\in\mathcal{M}_{n,k}$, the $\mathrm{Sp}(n)$ monopole $(\Phi,A)$ with instanton number $k$ is constructed as follows~\cite[Section 4]{manton_platonic_2014}. Let $X:=X_1i+X_2j+X_3k\in H^3$ and recall $\Delta(X):=\begin{bmatrix}
L \\ M-I_kX
\end{bmatrix}$. The kernel of $\Delta(X)^\dagger$ is $n$-dimensional. Let $\psi(X)$ be a $(n+k)\times n$ quaternionic matrix whose columns give an orthonormal basis for the kernel of $\Delta(X)^\dagger$. That is, $\psi(X)^\dagger \Delta(X)=0$ and $\psi(X)^\dagger \psi(X)=I_n$. Then the Higgs field is given by~\cite[(4.21)]{manton_platonic_2014}
\begin{equation}
\Phi(X)=\frac{1}{2}\psi(X)^\dagger \begin{bmatrix}
-\mu & L \\
-L^\dagger & M
\end{bmatrix}\psi(X).\label{eq:Phi}
\end{equation}
The connection $A=A_1dX_1+A_2dX_2+A_3dX_3$ is given by, for $i\in\{1,2,3\}$~\cite[(4.18)]{manton_platonic_2014},
\begin{equation}
A_i(X)=\psi(X)^\dagger \partial_i \psi(X).
\end{equation}

\begin{lemma}
If $\hat{M}$ satisfied the first three conditions of $\mathcal{M}_{n,k}$ in Definition~\ref{def:Mnk} and satisfies the final condition for all points in $\overline{H^3}=\{X=X_1i+X_2j+X_3k\in\mathfrak{sp}(1)\mid R^2\leq 1\}$, then $\hat{M}\in\mathcal{M}_{n,k}$. That is, we can relax the final condition to just points in $\overline{H^3}$.
\end{lemma}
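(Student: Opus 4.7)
The plan is to expand $\Delta(x)^\dagger \Delta(x)$ into a form that exposes its singular set, then compare the parameters ranging over $\overline{H^3}$ with those ranging over all of $\mathbb{H}$.

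First I would expand $\Delta(x)^\dagger \Delta(x) = L^\dagger L + (M - I_k x)^\dagger (M - I_k x)$ using $M^\dagger = -M$ (which holds because $M$ is symmetric with pure-imaginary entries) together with the constraint $L^\dagger L - M^2 = I_k$, obtaining
$$\Delta(x)^\dagger \Delta(x) = (1 + |x|^2) I_k + Mx - \bar{x} M.$$
Writing $x = x_0 + x'$ with $x_0 \in \mathbb{R}$ and $x' = \mathrm{Im}(x)$ pure imaginary, the cross terms collapse to $Mx' + x'M$. Since $M = M_1 i + M_2 j + M_3 k$ with $M_a$ real, the quaternionic Clifford relation $e_a e_b + e_b e_a = -2\delta_{ab}$ applied entrywise then yields
$$\Delta(x)^\dagger \Delta(x) = (1 + |x|^2) I_k - 2\tilde{M}(x'), \qquad \tilde{M}(x') := x_1 M_1 + x_2 M_2 + x_3 M_3,$$
a \emph{real} symmetric $k \times k$ matrix depending on $x$ only through $|x|^2$ and $x'$.

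Next I would reinterpret non-singularity as an eigenvalue-avoidance condition. For $x' \neq 0$ with $R := |x'|$ and $\hat{u} := x'/R$, the matrix above is singular precisely when $(1 + |x|^2)/(2R)$ is an eigenvalue of $\tilde{M}(\hat{u})$. The hypothesis (non-singularity on $\overline{H^3}$, i.e.\ $x_0 = 0$ and $R \in [0,1]$) therefore states that no value of the form $(1+R^2)/(2R)$ with $R \in (0,1]$ is an eigenvalue of $\tilde{M}(\hat{u})$, for any $\hat{u} \in S^2$. Because $R \mapsto (1+R^2)/(2R)$ already surjects onto $[1,\infty)$ on the interval $(0,1]$, this is equivalent to the assertion that every eigenvalue of $\tilde{M}(\hat{u})$ lies strictly below $1$, uniformly in $\hat{u}$.

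Finally, for an arbitrary $x \in \mathbb{H}$ the case $x' = 0$ gives $(1 + x_0^2) I_k$, which is trivially non-singular; for $x' \neq 0$, AM--GM gives
$$\frac{1 + |x|^2}{2|x'|} \;\geq\; \frac{1 + |x'|^2}{2|x'|} \;\geq\; 1,$$
so this ratio lies in $[1,\infty)$ and therefore cannot equal an eigenvalue of $\tilde{M}(\hat{u})$. Hence $\Delta(x)^\dagger \Delta(x)$ is non-singular for every $x \in \mathbb{H}$, giving $\hat{M} \in \mathcal{M}$. The only step requiring genuine attention is the initial quaternionic algebra that reduces $\Delta(x)^\dagger \Delta(x)$ to the real symmetric form $(1+|x|^2)I_k - 2\tilde{M}(x')$; after that, the lemma is essentially the observation that $R \mapsto (1+R^2)/(2R)$ already sweeps out on $(0,1]$ every real value that could conceivably trigger a singularity elsewhere in $\mathbb{H}$.
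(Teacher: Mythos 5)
Your proof is correct, and it takes a genuinely different route from the paper's. The paper's argument is essentially a citation: it invokes Manton and Sutcliffe's toroidal coordinates $(X,\chi)\mapsto x$ and their identity $\Delta(x)=U\Delta(X)$ with $U\in\mathrm{Sp}(n+k)$, so that $\Delta(x)^\dagger\Delta(x)=\Delta(X)^\dagger\Delta(X)$ and surjectivity of the coordinate map finishes the job. You instead compute $\Delta(x)^\dagger\Delta(x)=(1+|x|^2)I_k-2\tilde{M}(\mathrm{Im}\,x)$ directly from $M^\dagger=-M$, $L^\dagger L=I_k+M^2$, and the Clifford relations, reducing non-singularity to the statement that $(1+|x|^2)/(2|\mathrm{Im}\,x|)$ avoids the spectrum of the real symmetric matrix $\tilde{M}(\hat u)$; since this ratio already sweeps out all of $[1,\infty)$ as $X$ ranges over $\overline{H^3}\setminus\{0\}$ and never leaves $[1,\infty)$ for general $x$ (by AM--GM), the hypothesis on $\overline{H^3}$ forces non-singularity everywhere. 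Your computation is self-contained and makes the singular locus completely explicit (it also shows, after replacing $\hat u$ by $-\hat u$, that the condition is exactly $\|\tilde M(\hat u)\|<1$ for all $\hat u$, which is a reusable criterion); the paper's argument is shorter but leans on the external unitary-equivalence identity, which it needs anyway for Proposition~\ref{prop:allmono}. One cosmetic point: the relevant fact is only that no eigenvalue of $\tilde M(\hat u)$ lies in $[1,\infty)$, which is precisely what the hypothesis gives, so your phrasing ``every eigenvalue lies strictly below $1$'' is accurate but the symmetry $\hat u\mapsto-\hat u$ is what upgrades it to a two-sided bound if you want one.
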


\begin{proof}
In the proof of Proposition~\ref{prop:allmono}, Manton and Sutcliffe use the following relationship between points $x$ in $\mathbb{R}^4\simeq\mathbb{H}$ and points $X\in \overline{H^3}$. Using toroidal coordinates $(X,\chi)$ introduced by Manton and Sutcliffe~\cite[(4.4)]{manton_platonic_2014}, we have 
\begin{equation}
x=\frac{2X+(1-R^2)\sin\chi}{1+R^2+(1-R^2)\cos\chi}.
\end{equation}
The circle action is the rotation $\chi\mapsto \chi+\alpha$. This map $(X,\chi)\mapsto x$ is surjective, and Manton and Sutcliffe show that $\Delta(x)=U\Delta(X)$ for some $U\in\mathrm{Sp}(n+k)$~\cite[(4.19)]{manton_platonic_2014}, so $\Delta(x)^\dagger\Delta(x)=\Delta(X)^\dagger \Delta(X)$. Hence, if $\Delta(X)^\dagger \Delta(X)$ is non-singular for all $X\in\overline{H^3}$, then the final condition for $\mathcal{M}_{n,k}$ is satisfied.
\end{proof}

\subsection{Rotating monopole data}\label{subsec:rotatingmonopoledata}

It is well known that if a non-trivial monopole is invariant under some group of isometries, then a change of coordinates takes the family to a subgroup of $\mathrm{O}(3)$~\cite[Appendix]{lang_solitons}. Otherwise, the energy of the monopole is infinite. As we are interested in investigating monopoles with continuous symmetries---those monopoles whose group of symmetries is a connected Lie group---we are only interested in subgroups of $\mathrm{SO}(3)$, that is rotations. 

Now that we understand the relationship between hyperbolic monopoles and elements in $\mathcal{M}_{n,k}$, we need to understand what it means for an element of $\mathcal{M}_{n,k}$ to be equivariant under a rotation. 
\begin{definition}
We define the \textbf{gauge} action of $\mathrm{Sp}(n)\times\mathrm{O}(k)$ on $\mathcal{M}_{n,k}$ as follows. For $q\in\mathrm{Sp}(n)$, $Q\in\mathrm{O}(k)$, and $\hat{M}\in\mathcal{M}_{n,k}$, let
\begin{equation}
(q,Q).\hat{M}:=\begin{bmatrix}
q & 0\\
0 & Q
\end{bmatrix}\begin{bmatrix}
L \\ M
\end{bmatrix} Q^T.
\end{equation}
\end{definition}

\begin{definition}
We define the \textbf{rotation} action of $\mathrm{Sp}(1)$ on $\mathcal{M}_{n,k}$ as follows. For $p\in\mathrm{Sp}(1)$ and $\hat{M}\in\mathcal{M}_{n,k}$, let 
\begin{equation}
p.\hat{M}:=\begin{bmatrix}
pLp^\dagger \\ pMp^\dagger
\end{bmatrix}=p\hat{M} p^\dagger.
\end{equation}
\end{definition}

The above actions are named as such to reflect what is happening to the corresponding monopoles.
\begin{prop}
\begin{enumerate}
\item[(1)] For any $\hat{M}\in\mathcal{M}_{n,k}$ and $(q,Q)\in \mathrm{Sp}(n)\times \mathrm{O}(k)$, the monopoles corresponding to $\hat{M}$ and $(q,Q).\hat{M}$ are identical.
\item[(2)] For any $\hat{M}\in\mathcal{M}_{n,k}$ and $p\in\mathrm{Sp}(1)$, let $R_p\in\mathrm{SO}(3)$ be the image of $p$ under the double cover map $\mathrm{Sp}(1)\rightarrow\mathrm{SO}(3)$. Then the monopole corresponding to $p.\hat{M}$ is the pull-back of the monopole corresponding to $\hat{M}$ by $R_p^T$, the inverse of $R_p$.
\end{enumerate}
\end{prop}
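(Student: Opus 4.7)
The plan for both parts is to first express the transformed data matrix $\tilde{\Delta}(X)$ in terms of the original $\Delta$, then write down a natural candidate for an orthonormal kernel frame $\tilde{\psi}(X)$, and finally substitute into the formulas for $\Phi$ and $A$ to read off the effect on the monopole.

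For part (1), since $X$ is a scalar quaternion (hence commuting with $I_k$ and with the real matrix $Q^T$), a direct computation gives the factorization $\tilde{\Delta}(X) = \mathrm{diag}(q,Q)\,\Delta(X)\,Q^T$. The matrix $\mathrm{diag}(q,Q)$ is unitary over $\mathbb{H}$, so $\tilde{\psi}(X) := \mathrm{diag}(q,Q)\,\psi(X)$ is automatically orthonormal and lies in $\ker\tilde{\Delta}(X)^\dagger$. Using the lemma, one checks $\tilde{\mu} = q\mu q^\dagger$, so the inner $(n+k)\times(n+k)$ matrix appearing in the formula for $\Phi$ gets conjugated by $\mathrm{diag}(q,Q)$, and these extra factors cancel against those coming from $\tilde{\psi}^\dagger$ and $\tilde{\psi}$. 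Because $\mathrm{diag}(q,Q)$ is constant in $X$, the same cancellation occurs in $\tilde{A}_i = \tilde{\psi}^\dagger \partial_i \tilde{\psi}$, giving $(\tilde{\Phi}, \tilde{A}) = (\Phi, A)$ identically with this choice of frame.

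For part (2), the key identity is that $p V p^\dagger = R_p V$ for any imaginary quaternion $V$. Applying this entrywise to $I_k X$ allows one to factor $\tilde{\Delta}(X) = p\,\Delta(R_p^T X)\,p^\dagger$, where $p$ acts as scalar multiplication on the matrix. Setting $\tilde{\psi}(X) := p\,\psi(R_p^T X)$ gives an orthonormal frame for $\ker\tilde{\Delta}(X)^\dagger$. Once again one finds $\tilde{\mu} = p\mu p^\dagger$, so after canceling the pairs $p^\dagger p$ the Higgs field becomes $\tilde{\Phi}(X) = \Phi(R_p^T X)$. Differentiating $\tilde{\psi}(X) = p\,\psi(R_p^T X)$ by the chain rule brings down the Jacobian entries $(R_p^T)_{ji}$, so $\tilde{A}_i(X) = \sum_j (R_p^T)_{ji}\,A_j(R_p^T X)$, which is precisely the coordinate expression of $(R_p^T)^*A$.

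The main technical obstacle is the bookkeeping with non-commuting quaternion matrices: one needs to use repeatedly that scalar quaternions commute with $I_k$, that $(pXq)^T = pX^Tq$ when $p$ and $q$ are scalar, and that $(pAp^\dagger)^{-1} = p\,A^{-1}p^\dagger$ for scalar $p$. A conceptual subtlety is that $\psi$ is only determined up to right multiplication by an $\mathrm{Sp}(n)$-valued function, so \emph{a priori} each conclusion holds only up to an $\mathrm{Sp}(n)$ gauge transformation; the explicit choices of $\tilde{\psi}$ above make the identities hold on the nose.
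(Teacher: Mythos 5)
Your proposal is correct and follows essentially the same route as the paper: in both parts you choose the same explicit kernel frames ($\mathrm{diag}(q,Q)\psi(X)$ and $p\,\psi(p^\dagger Xp)$), verify orthonormality and the kernel condition via the factorization of the transformed $\Delta$, track $\mu\mapsto q\mu q^\dagger$ (resp. $p\mu p^\dagger$), and conclude by cancellation for $\Phi$ and by the chain rule for $A$. The remark that these identities hold on the nose only for this choice of frame, and otherwise up to an $\mathrm{Sp}(n)$ gauge transformation, matches the note the paper makes immediately after its proof.
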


\begin{proof}
\begin{enumerate}
\item[(1)] Let $\hat{M}\in\mathcal{M}_{n,k}$ and $(q,Q)\in\mathrm{Sp}(n)\times \mathrm{O}(k)$. Consider the monopole $(\Phi,A)$ associated to $\hat{M}$ and the monopole $(\Phi',A')$ associated to $\hat{M}':=(q,Q).\hat{M}$. 

Let $X\in H^3$ and consider $\psi(X)$ satisfying $\psi(X)^\dagger \Delta(X)=0$ and $\psi(X)^\dagger \psi(X)=I$. Let $\psi'(X):=\begin{bmatrix}
q & 0\\ 0 & Q
\end{bmatrix}\psi(X)$. We see that
\begin{align*}
\psi'(X)^\dagger\Delta'(X)&=\psi(X)^\dagger \begin{bmatrix}
q^\dagger & 0 \\ 0 & Q^T
\end{bmatrix}\begin{bmatrix}
qLQ^T \\ QMQ^T- I_kX
\end{bmatrix}=\psi(X)^\dagger\Delta(X)Q^T=0, \quad\textrm{and}\\
\psi'(X)^\dagger\psi'(X)&=\psi(X)^\dagger\begin{bmatrix}
q^\dagger & 0 \\ 0 & Q^T
\end{bmatrix}\begin{bmatrix}
q & 0 \\ 0 & Q
\end{bmatrix}\psi(X)=\psi(X)^\dagger\psi(X)=I_n.
\end{align*}
Thus, we can use $\psi'(X)$ to compute the monopole $(\Phi',A')$. By the definition of $\mu$, under the gauge transformation, $\mu$ maps to $q\mu q^\dagger$. Hence, 
\begin{align*}
A'_i(X)&=\psi'(X)^\dagger \partial_i \psi'(X)=A_i(X), \quad\textrm{and}\\
\Phi'(X)&=\frac{1}{2}\psi'(X)^\dagger\begin{bmatrix}
-q\mu q^\dagger & qLQ^T \\
 -QL^\dagger q^\dagger & QMQ^T
\end{bmatrix}\psi'(X)=\Phi(X).
\end{align*}
Therefore, this action does not change the monopole at all.

\item[(2)] Let $\hat{M}\in\mathcal{M}_{n,k}$ and $p\in\mathrm{Sp}(1)$. Consider the monopole $(\Phi,A)$ associated to $\hat{M}$ and the monopole $(\Phi',A')$ associated to $\hat{M}':=p.\hat{M}$. 

Let $X\in H^3$ and consider $\psi(X)$ satisfying $\psi(X)^\dagger\Delta(X)=0$ and $\psi(X)^\dagger \psi(X)=I_n$. Let $\psi'(X):=p\psi(p^\dagger X p)$. We see that
\begin{align*}
\psi'(X)^\dagger \Delta'(X)&=\psi(p^\dagger X p)^\dagger p^\dagger \begin{bmatrix}
pLp^\dagger \\ pMp^\dagger -I_kX
\end{bmatrix}=\psi(p^\dagger X p)^\dagger \Delta(p^\dagger X p)p^\dagger=0, \quad\textrm{and}\\
\psi'(X)^\dagger \psi'(X)&=\psi(p^\dagger X p)^\dagger p^\dagger p \psi(p^\dagger X p)=I_n.
\end{align*}
Thus, we can use $\psi'(X)$ to compute the monopole.

By the definition of $\mu$, under the rotation transformation, $\mu$ maps to $p\mu p^\dagger$. Note that the double cover $\mathrm{Sp}(1)\rightarrow \mathrm{SO}(3)$ takes $p\mapsto R_p$, which acts on $H^3$ via $R_pX=p Xp^\dagger$. The inverse $R_p^T$ of $R_p$, acts on $\mathbb{R}^3$ via $R_p^TX=p^\dagger Xp$. We see that
\begin{align*}
A'_i(X)&=\psi'(X)^\dagger \partial_i \psi'(X)=\psi(p^\dagger X p)^\dagger \partial_i \psi(p^\dagger X p).
\end{align*}
Thus, we have that for $l\in\{1,2,3\}$,
\begin{align*}
A'_i(X)dX_i(\partial_l)&=\partial_i(R_p^TX)_{j} \psi(R_p^\dagger X)^\dagger (\partial_j\psi)(R_p^TX)dX_i(\partial_l)\\
&=\partial_l((R_p^T)_{jk}X_k)A_j(R_p^TX)\\
&=(R_p)_{lj}A_j(R_p^T X).
\end{align*}
Similarly, we have
\begin{align*}
(R_p^T)^*A_i(X)dX_i(\partial_l)&=A_i(R_p^T X)dX_i(dR_p^T\partial_l)\\
&=A_i(R_p^TX)dX_i((R_p^T)_{jl}\partial_j)\\
&=(R_p)_{lj}A_j(R_p^TX).
\end{align*}
Hence, $A'=(R_p^T)^*A$. Additionally,
\begin{equation*}
\Phi'(X)=\frac{1}{2}\psi'(X)^\dagger\begin{bmatrix}
-p\mu p^\dagger & pLp^\dagger \\
-pL^\dagger p^\dagger & pMp^\dagger 
\end{bmatrix}\psi'(X)=\Phi(p^\dagger Xp)=\left(\left(R_p^T\right)^*\Phi\right)(X).
\end{equation*}
Therefore, we see that this action just rotates the monopole (in the opposite direction).
\end{enumerate}
\end{proof}

\begin{note}
The gauge action does not only give a gauge-equivalent monopole, it gives the exact same monopole. Thus, while it does not correspond to a gauge transformation on the monopole, it represents a gauge freedom on the ADHM data.

There is another gauge freedom we have, which comes from our choice of $\psi(x)$ when constructing the monopole. Indeed, taking any smooth $\mathrm{Sp}(n)$-valued function $g$, we see that if $\psi(x)^\dagger \Delta(x)=0$ and $\psi(x)^\dagger \psi(x)=I_n$, then so too does $\psi'(x):=\psi(x)g(x)$. Under this choice, the monopole changes as
\begin{align*}
\Phi(X)&\mapsto g(X)^\dagger \Phi(X) g(X), \quad\textrm{and}\\
A_i(X)&\mapsto g(X)^\dagger A_i(X) g(X)+g(X)^\dagger \partial_i g(X).
\end{align*}
That is, we get a gauge-equivalent monopole. However, we note that multiplying on the right by $g$  corresponds to a choice of the orthonormal basis of $\mathrm{ker}\Delta(x)^\dagger$ and has no effect on $\hat{M}$.
\end{note}

\begin{note}
The rotation action descends to an action on $\mathcal{M}_{n,k}/(\mathrm{Sp}(n)\times\mathrm{O}(k))$ given by $p.[\hat{M}]:=[p.\hat{M}]$. Indeed,
\begin{equation*}
p.(q,Q).\hat{M}=(pqp^\dagger,Q).p.\hat{M}.
\end{equation*}
In particular, we see that the gauge and rotation actions do not commute on the $L$ part of the actions, but they do commute on the $M$ part.
\end{note}

\begin{definition}
Let $\hat{M}\in\mathcal{M}_{n,k}$ and $p\in\mathrm{Sp}(1)$. We say that $\hat{M}$ is \textbf{equivariant} under $p$ if $[\hat{M}]$ is fixed under the action of $p$. That is, there is some $(q,Q)\in\mathrm{Sp}(n)\times\mathrm{O}(k)$ such that $p.\hat{M}=(q,Q).\hat{M}$. This means that rotation by $p$ has no effect on the monopole, as it is just the result of gauging by $(q,Q)$.
\end{definition}

It turns out that $M$ determines $L$, up to a $\mathrm{Sp}(n)$ factor. The choice of this factor is just a choice of gauge, so has no effect on the monopole. 
\begin{lemma}
Suppose that $\hat{M}\in\mathcal{M}_{n,k}$. Then $L$ is uniquely determined by $M$, up to multiplication by some $q\in\mathrm{Sp}(n)$. Conversely, multiplying $L$ by any $q\in\mathrm{Sp}(n)$ gives another element of $\mathcal{M}_{n,k}$. Moreover, all such data produces the same monopole. \label{lemma:uniqueL}
\end{lemma}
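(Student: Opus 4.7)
The plan is to break the statement into three pieces: (i) for any fixed $M$, replacing $L$ by $qL$ with $q\in\mathrm{Sp}(n)$ keeps the data in $\mathcal{M}$; (ii) the corresponding monopole is literally unchanged; (iii) conversely, any two admissible $L$'s compatible with the same $M$ differ by left multiplication by some element of $\mathrm{Sp}(n)$.

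Parts (i) and (ii) are essentially free, because $L\mapsto qL$ is exactly the gauge action of $(q,I_k)\in\mathrm{Sp}(n)\times\mathrm{O}(k)$. Each defining condition of $\mathcal{M}$ is visibly preserved: $(qL)(qL)^\dagger=q\,LL^\dagger\,q^\dagger$ remains positive definite, $(qL)^\dagger(qL)=L^\dagger L$ so $L^\dagger L-M^2=I_k$ is untouched, and the top block of $\Delta(x)$ contributes $L^\dagger L$ to $\Delta(x)^\dagger\Delta(x)$ either way, so non-singularity survives. That the monopole is unchanged then follows immediately from part (1) of the earlier Proposition on the gauge action, applied with $Q=I_k$.

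The content is in (iii). Suppose $L_1,L_2$ both extend the same $M$ to elements of $\mathcal{M}$. The third condition forces $L_1^\dagger L_1=I_k+M^2=L_2^\dagger L_2$, and the second condition makes each $L_iL_i^\dagger$ invertible. In analogy with the formula for $\mu$ in the previous lemma, I would define
\begin{equation*}
q:=L_2L_1^\dagger(L_1L_1^\dagger)^{-1}.
\end{equation*}
A direct computation, using $L_2^\dagger L_2=L_1^\dagger L_1$ once, gives
\begin{equation*}
q^\dagger q=(L_1L_1^\dagger)^{-1}L_1L_2^\dagger L_2L_1^\dagger(L_1L_1^\dagger)^{-1}=(L_1L_1^\dagger)^{-1}L_1L_1^\dagger L_1L_1^\dagger(L_1L_1^\dagger)^{-1}=I_n,
\end{equation*}
so $q\in\mathrm{Sp}(n)$. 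Then expanding $(qL_1-L_2)^\dagger(qL_1-L_2)$ and invoking the same identity $L_2^\dagger L_2=L_1^\dagger L_1$ three more times collapses each of the four terms to $L_2^\dagger L_2$, yielding $0$. Since $X^\dagger X=0$ implies $X=0$ for quaternionic matrices (the diagonal entries are sums of norms of quaternions), we conclude $qL_1=L_2$.

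There is no genuine obstacle here; the only mildly fiddly step is verifying that all four terms in the expansion of $(qL_1-L_2)^\dagger(qL_1-L_2)$ really collapse to the same expression, but this is pure linear algebra and rests on nothing more than the single substitution $L_2^\dagger L_2=L_1^\dagger L_1$ together with invertibility of $L_1L_1^\dagger$. In spirit, the argument is just the uniqueness of the right polar factor once the Gram matrix $L^\dagger L$ has been fixed, made explicit by the formula for $q$ above.
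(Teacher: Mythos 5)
Your proposal is correct and follows essentially the same route as the paper: both arguments use $L^\dagger L = I_k + M^2$ to equate the Gram matrices of the two candidate $L$'s, write down an explicit $q$ built from $\tilde L L^\dagger$ and the inverse of an $LL^\dagger$-type factor, check unitarity by the same substitution, and dispose of the converse via the gauge action $(q,I_k)$. The only cosmetic differences are that the paper places the inverse factor as $(\tilde L\tilde L^\dagger)^{-1}$ on the left and reads off $\tilde L = qL$ directly from $\tilde L L^\dagger L = \tilde L\tilde L^\dagger\tilde L$, whereas you verify it by expanding $(qL_1-L_2)^\dagger(qL_1-L_2)=0$; both are sound.
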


\begin{proof}
Suppose that $\begin{bmatrix}
L \\ M
\end{bmatrix},\begin{bmatrix}
\tilde{L} \\ M
\end{bmatrix}\in\mathcal{M}_{n,k}$. We see that $L^\dagger L-M^2=I_k=\tilde{L}^\dagger \tilde{L}-M^2$. Hence $\tilde{L} L^\dagger L=\tilde{L}\tilde{L}^\dagger \tilde{L}$. As $\tilde{L}\tilde{L}^\dagger$ is invertible, $\tilde{L}=(\tilde{L}\tilde{L}^\dagger)^{-1}\tilde{L} L^\dagger L$. Let $q:=(\tilde{L}\tilde{L}^\dagger)^{-1}\tilde{L} L^\dagger$ so $\tilde{L}=qL$. We see that
\begin{equation*}
qq^\dagger=(\tilde{L}\tilde{L}^\dagger )^{-1} \tilde{L} L^\dagger L\tilde{L}^\dagger (\tilde{L}\tilde{L}^\dagger)^{-1}=(\tilde{L}\tilde{L}^\dagger)^{-1}\tilde{L}\tilde{L}^\dagger\tilde{L}\tilde{L}^\dagger(\tilde{L}\tilde{L}^\dagger)^{-1}=I_n.
\end{equation*}

Conversely, suppose that $\hat{M}\in\mathcal{M}_{n,k}$ and $q\in\mathrm{Sp}(n)$. Then $\begin{bmatrix}
qL \\ M
\end{bmatrix}=(q,I_k).\begin{bmatrix}
L \\ M
\end{bmatrix}$.
Therefore, the two matrices are related via a gauge transformation, so they are both in $\mathcal{M}_{n,k}$ and they correspond to the same monopole.
\end{proof}

Just as $M$ determines $L$, if we satisfy the $M$ part of equivariance, then $Q$ determines $q$ such that we have full equivariance.
\begin{lemma}
Let $\hat{M}\in \mathcal{M}_{n,k}$. Suppose that there exists $p\in \mathrm{Sp}(1)$ and $Q\in\mathrm{O}(k)$ such that $QMQ^T=pMp^\dagger$. Then there is a unique $q\in\mathrm{Sp}(n)$ such that $qLQ^T=pLp^\dagger$, which is given by $q:=p(LL^\dagger)^{-1}Lp^\dagger QL^\dagger$. That is, $\hat{M}$ is equivariant under $p$.\label{lemma:Q-determines-q}
\end{lemma}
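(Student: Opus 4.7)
The plan is to establish uniqueness from the invertibility of $LL^\dagger$ and then verify directly that the proposed formula both solves the equation and produces an element of $\mathrm{Sp}(n)$. For uniqueness, suppose $q_1 L Q^T = q_2 L Q^T$; multiplying on the right by $Q$ (orthogonal) gives $(q_1 - q_2)L = 0$, and then multiplying on the right by $L^\dagger$ and using that $LL^\dagger$ is invertible forces $q_1 = q_2$.

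For existence, I would substitute $q := p(LL^\dagger)^{-1}Lp^\dagger QL^\dagger$ into $qLQ^T$ and collapse the expression using three identities: the defining relation $L^\dagger L = I_k + M^2$, the consequence $LM^2 = \mu^2 L$ of $\mu L = LM$, and the identity $LL^\dagger = I_n + \mu^2$ from the previous lemma. Substitution yields
\[
qLQ^T = p(LL^\dagger)^{-1}Lp^\dagger Q(L^\dagger L)Q^T = p(LL^\dagger)^{-1}Lp^\dagger\bigl(I_k + QM^2Q^T\bigr),
\]
and the hypothesis $QMQ^T = pMp^\dagger$, squared using $p^\dagger p = 1$, gives $QM^2Q^T = pM^2p^\dagger$. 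Distributing and cancelling another $p^\dagger p$ then produces
\[
qLQ^T = p(LL^\dagger)^{-1}(L + LM^2)p^\dagger = p(LL^\dagger)^{-1}(I_n + \mu^2)Lp^\dagger = pLp^\dagger,
\]
where the last step uses $I_n + \mu^2 = LL^\dagger$.

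To verify $q \in \mathrm{Sp}(n)$, I would run the same calculation symmetrically. Since $Q^\dagger = Q^T$ and $(LL^\dagger)^{-1}$ is Hermitian, one has $q^\dagger = LQ^T p L^\dagger (LL^\dagger)^{-1} p^\dagger$, so the central factor in $qq^\dagger$ is again $QL^\dagger L Q^T = I_k + pM^2 p^\dagger$; absorbing the $p^\dagger p$ pairs, using $LM^2 = \mu^2 L$ on one piece, and then $I_n + \mu^2 = LL^\dagger$ collapses the whole expression to $p(LL^\dagger)^{-1}(I_n + \mu^2)p^\dagger = pp^\dagger I_n = I_n$. The main obstacle is just the noncommutative bookkeeping: the quaternions $p$ and $p^\dagger$ cannot be slid past the entries of $L$ and $M$, and must be annihilated against each other at the right moment; once this is kept straight, the three identities make both computations telescope cleanly, and the otherwise opaque formula for $q$ is seen to be dictated precisely by the need to interleave the left rotation by $p$ with the right orthogonal action by $Q$.
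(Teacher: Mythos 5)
Your proposal is correct and follows essentially the same route as the paper: uniqueness from the invertibility of $Q$ and $LL^\dagger$, and existence by substituting the formula for $q$ and collapsing via $L^\dagger L = I_k + M^2$ and the hypothesis $QMQ^T = pMp^\dagger$. Your explicit use of $LM^2 = \mu^2 L$ and $LL^\dagger = I_n + \mu^2$ merely fills in intermediate steps the paper leaves implicit.
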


\begin{proof}
That $q$ is unique comes from the invertibility of $LL^\dagger$. Indeed, suppose $q$ and $\tilde{q}$ satisfy the desired equation. Then $(q-\tilde{q})LQ^T=0$. The invertibility of $Q$ and $LL^\dagger$ give $q=\tilde{q}$.

Define $q$ as above. We show that $q\in\mathrm{Sp}(n)$ and satisfies the desired equation. As $\hat{M}\in\mathcal{M}_{n,k}$, $L^\dagger L-M^2=I_k$. Thus,
\begin{align*}
qLQ^T&=p(LL^\dagger)^{-1}Lp^\dagger Q(I_k+M^2)Q^T\\
&=p(LL^\dagger)^{-1}Lp^\dagger(I_k+QM Q^TQMQ^T)\\
&=p(LL^\dagger)^{-1}Lp^\dagger(I_k+pM^2p^\dagger)\\
&=pLp^\dagger, \quad\textrm{and}\\
q q^\dagger&=p(LL^\dagger)^{-1} Lp^\dagger Q(I_k+M^2)Q^T p L^\dagger (LL^\dagger)^{-1} p^\dagger\\
&=p(LL^\dagger)^{-1} L p^\dagger (I_k+pM^2 p^\dagger )p L^\dagger (LL^\dagger)^{-1} p^\dagger\\
&=I_n,
\end{align*}
proving existence.
\end{proof}

In light of the previous lemmas, we see that when considering monopoles equivariant under some rotation, we need only consider the $M$ part.

\begin{definition}
Let $H_{\hat{M}}\subseteq \mathrm{Sp}(1)$ be the set of unit quaternions under which $\hat{M}$ is equivariant. Because we are dealing with group actions, $H_{\hat{M}}$ is a subgroup of $\mathrm{Sp}(1)$.
\end{definition}

The only non-zero, connected Lie subgroups of $\mathrm{Sp}(1)$ are $S^1$ and $\mathrm{Sp}(1)$ itself. 
\begin{fact}
The maximal tori of $\mathrm{Sp}(1)$ are $S^1$. They are all conjugate to
\begin{equation*}
H_z:=\left\{\sin(\theta/2) k + \cos(\theta/2)\mid \theta\in[0,4\pi)\right\}\subseteq \mathrm{Sp}(1).
\end{equation*} 
The set $H_z$ is the double cover of the set of rotations about the $z$-axis. Each maximal torus of $\mathrm{Sp}(1)$ is the double cover of the set of rotations about a fixed axis.
\end{fact}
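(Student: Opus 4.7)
The plan is a two-step verification: use general structure theory of compact Lie groups to identify the dimension and conjugacy class of maximal tori, then a direct quaternionic computation to identify $H_z$ as a specific representative and interpret its image under the double cover $\mathrm{Sp}(1)\to\mathrm{SO}(3)$.

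First, I would show that $H_z$ is a circle subgroup by setting $p(\theta):=\cos(\theta/2)+\sin(\theta/2)k$ and checking that $|p(\theta)|=1$ and, using the addition formulas for sine and cosine together with $k^2=-1$, that $p(\theta_1)p(\theta_2)=p(\theta_1+\theta_2)$. Hence $H_z$ is a closed connected abelian subgroup isomorphic to $\mathbb{R}/4\pi\mathbb{Z}\cong S^1$. Maximality follows from a rank count: $\mathrm{Sp}(1)$ has Lie algebra $\mathfrak{sp}(1)=\mathrm{span}_{\mathbb{R}}\{i,j,k\}$ with brackets $[i,j]=2k$, $[j,k]=2i$, $[k,i]=2j$, so every abelian subalgebra is at most one-dimensional and $\mathrm{Sp}(1)$ has rank one. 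Any one-dimensional torus is therefore maximal, and by the Maximal Torus Theorem for compact connected Lie groups any two maximal tori are conjugate, so every maximal torus has the form $gH_zg^{-1}$ for some $g\in\mathrm{Sp}(1)$.

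For the geometric interpretation, I would compute the image of $p(\theta)$ under the double cover $p\mapsto R_p$, $R_pX=pXp^\dagger$. Direct quaternionic multiplication yields $p(\theta)\,k\,p(\theta)^\dagger=k$, $p(\theta)\,i\,p(\theta)^\dagger=\cos\theta\,i+\sin\theta\,j$, and $p(\theta)\,j\,p(\theta)^\dagger=-\sin\theta\,i+\cos\theta\,j$. Thus $R_{p(\theta)}$ fixes the $z$-axis and rotates the $xy$-plane by angle $\theta$; as $\theta$ runs over $[0,4\pi)$, each rotation about the $z$-axis in $\mathrm{SO}(3)$ is hit exactly twice (for $\theta$ and $\theta+2\pi$), confirming the double-cover claim. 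For an arbitrary maximal torus $gH_zg^{-1}$, conjugating this computation shows that its image is the set of rotations about the axis $R_g(\mathbb{R}k)$, and double covering is preserved under conjugation.

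The only real subtlety is the parametrization by $\theta/2$ rather than $\theta$: this is forced by the kernel $\{\pm 1\}$ of $\mathrm{Sp}(1)\to\mathrm{SO}(3)$, so a full turn in $\mathrm{SO}(3)$ corresponds to half a turn in $\mathrm{Sp}(1)$. Getting this factor of two right is the main bookkeeping point; everything else is a direct unpacking of definitions, and no serious obstacle is expected since this is a classical fact about $\mathrm{SU}(2)\cong\mathrm{Sp}(1)$.
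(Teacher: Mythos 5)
Your proof is correct and complete. The paper states this as a classical Fact without providing any proof, and your verification — rank one of $\mathfrak{sp}(1)$ plus the conjugacy theorem for maximal tori, followed by the explicit computation $p(\theta)\,i\,p(\theta)^\dagger=\cos\theta\,i+\sin\theta\,j$, $p(\theta)\,j\,p(\theta)^\dagger=-\sin\theta\,i+\cos\theta\,j$, $p(\theta)\,k\,p(\theta)^\dagger=k$ identifying the image as the rotations about the $k$-axis hit twice — is exactly the standard argument one would supply.
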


\begin{definition}
When $H_{\hat{M}}$ contains a subgroup conjugate to $H_z$, we say that $\hat{M}$ is \textbf{axially symmetric}. When $H_{\hat{M}}=\mathrm{Sp}(1)$, we say that $\hat{M}$ is \textbf{spherically symmetric}.
\end{definition}

\section{Axial symmetry}\label{sec:AxiallySym}

In this section, we determine when a monopole is axially symmetric and use this to search for examples.
\begin{theorem}
Let $\hat{M}\in\mathcal{M}_{n,k}$ and $\upsilon:=\frac{k}{2}\in\mathfrak{sp}(1)$. Note that $\hat{M}$ is said to be axially symmetric about the $z$-axis if it is equivariant under every element of $H_z$. Then $\hat{M}$ is axially symmetric about the $z$-axis if and only if there exists $Y\in\so(k)$ such that 
\begin{equation}
[M,Y]=\left[\upsilon,M\right].\label{eq:axial}
\end{equation}
The matrix $Y\in\mathfrak{so}(k)$ is called the \textbf{generator} of axial symmetry for $\hat{M}$.
\label{thm:axialsym}
\end{theorem}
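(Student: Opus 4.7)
The plan is to linearise the equivariance condition by passing to one-parameter subgroups and their Lie algebras. By Lemma~\ref{lemma:Q-determines-q}, $\hat{M}$ is equivariant under $p\in\mathrm{Sp}(1)$ if and only if there is $Q\in\O(k)$ with $QMQ^T=pMp^\dagger$; axial symmetry about the $z$-axis thus amounts to the existence of such a $Q=Q(\theta)$ for every $p(\theta)=\cos(\theta/2)+\sin(\theta/2)k\in H_z$.

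For the reverse direction, suppose $Y\in\so(k)$ satisfies $[M,Y]=[\upsilon,M]$. I would rewrite this as $[M,Y+\upsilon I_k]=0$, so $M$ commutes with $Y+\upsilon I_k$ in $\mathrm{Mat}(k,k,\mathbb{H})$. Because $Y$ has real entries while $\upsilon I_k$ is a scalar quaternionic matrix, the summands themselves commute, giving
\[
\exp\bigl(\theta(Y+\upsilon I_k)\bigr)=\exp(\theta Y)\cdot\exp(\theta\upsilon I_k)=p(\theta)\,\exp(\theta Y).
\]
Hence $M$ commutes with $p(\theta)\exp(\theta Y)$ for every $\theta$. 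Setting $Q(\theta):=\exp(-\theta Y)\in\SO(k)$ and multiplying the commutation relation by $Q(\theta)$ on the left and $p(\theta)^\dagger$ on the right then yields $Q(\theta)MQ(\theta)^T=p(\theta)Mp(\theta)^\dagger$ for all $\theta$, which is the $M$-part of equivariance under every element of $H_z$.

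For the forward direction, axial symmetry provides, for each $\theta$, some $Q(\theta)\in\O(k)$ with $Q(\theta)MQ(\theta)^T=p(\theta)Mp(\theta)^\dagger$, but $Q(\theta)$ is only determined up to the stabiliser of $M$. Since $\O(k)$ is a compact Lie group acting smoothly on $\mathrm{Mat}(k,k,\mathbb{H})$ by $Q\cdot M:=QMQ^T$, the orbit of $M$ is a smooth embedded submanifold and the orbit map $\O(k)\to\O(k)\cdot M$ is a principal bundle, hence a smooth submersion admitting local smooth lifts. I would use this to lift the smooth curve $\theta\mapsto p(\theta)Mp(\theta)^\dagger$ to a smooth path $\tilde{Q}\colon(-\varepsilon,\varepsilon)\to\O(k)$ with $\tilde{Q}(0)=I$. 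Differentiating $\tilde{Q}(\theta)M\tilde{Q}(\theta)^T=p(\theta)Mp(\theta)^\dagger$ at $\theta=0$ and using $\tilde{Q}'(0)^T=-\tilde{Q}'(0)$ (which follows from $\tilde{Q}\in\O(k)$ and $\tilde{Q}(0)=I$) gives $[\tilde{Q}'(0),M]=[\upsilon,M]$. Setting $Y:=-\tilde{Q}'(0)\in\so(k)$ then produces $[M,Y]=[\upsilon,M]$ as required.

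The main obstacle is the smooth-lift step in the forward direction: the axial-symmetry hypothesis only supplies $Q(\theta)$ pointwise, and one must control the stabiliser ambiguity coherently enough to obtain a path through $I$ that is smooth at $\theta=0$. The principal bundle structure on the orbit map is the key tool, after which the rest is a direct computation.
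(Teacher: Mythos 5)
Your argument is correct, and the forward direction is genuinely different from the paper's. Both proofs reduce to the $M$-part via Lemma~\ref{lemma:Q-determines-q} and both handle the converse by exponentiating $[M,Y+\upsilon I_k]=0$ (the paper phrases this as checking that $Q(\theta)^Tp(\theta)Mp(\theta)^\dagger Q(\theta)$ has vanishing derivative, which is the same computation). Where you diverge is in producing a differentiable choice of $Q(\theta)$: the paper works inside the stabilizer $S\subseteq S^1\times\O(k)$ of $\hat{M}$, uses a bi-invariant metric to split $\mathrm{Lie}(S)$ as $\ker(\phi)\oplus C$ with $C$ an ideal isomorphic to $\mathbb{R}$, and invokes the Homomorphisms Theorem to build a \emph{global} one-parameter subgroup $\theta\mapsto(e^{\upsilon\theta},e^{\rho(\theta)/2})$ of $S$ covering $H_z$; you instead observe that the curve $\theta\mapsto p(\theta)Mp(\theta)^\dagger$ lies in the $\O(k)$-orbit of $M$, which is an embedded submanifold with orbit map a principal $\mathrm{Stab}(M)$-bundle, and pull the curve back through a local smooth section normalized to send $M$ to $I_k$ (the translation $\sigma\mapsto\sigma(\cdot)\sigma(M)^{-1}$ makes this normalization routine). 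Your route is more local and elementary, and it suffices here because the statement only asks for a single $Y\in\so(k)$. The trade-off is that the paper's homomorphic lift is what generalizes to Theorem~\ref{thm:sphersym}: there one needs the three generators $Y_1,Y_2,Y_3$ to assemble into a Lie algebra representation of $\mathfrak{sp}(1)$, which the Structure Theorem depends on, and a pointwise local-section argument applied to each $\upsilon$ separately would not deliver that compatibility. So your proof establishes Theorem~\ref{thm:axialsym} as stated but would need to be replaced by (or upgraded to) the paper's global argument for the spherical case.
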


\begin{note}
Expanding $M=M_1i+M_2j+M_3k$, \eqref{eq:axial} is equivalent to
\begin{align}
M_1 = [M_2,Y],\quad
M_2 = [Y,M_1],\quad\textrm{and}\quad
0 = [Y, M_3].
\end{align}
\end{note}

\begin{note}
For axial symmetry around a general axis in $H^3$, note that this axis is given by a unit vector $\hat{u}\in \partial H^3$ and let $\upsilon:=\frac{\hat{u}}{2}\in\mathfrak{sp}(1)$. Then $\hat{M}$ is spherically symmetric about $\hat{u}$ if and only if there exists $Y\in\so(k)$ such that \eqref{eq:axial} holds.
\end{note}

\begin{proof}
Let $\mathcal{X}$ be the smooth manifold comprised of all $(n+k)\times k$ quaternionic matrices. The rotation and gauge actions can easily be expanded from $\mathcal{M}_{n,k}$ to smooth actions on $\mathcal{X}$. Just as with $\mathcal{M}_{n,k}$, the rotation action on $\mathcal{X}$ descends to an action on $\mathcal{X}/(\mathrm{Sp}(n)\times \mathrm{O}(k))$.

Suppose that $\hat{M}\in\mathcal{M}_{n,k}$ is axially symmetric about the $z$-axis, so $[\hat{M}]$ is fixed by $S^1\simeq H_z\subseteq\mathrm{Sp}(1)$. Let $S\subseteq S^1\times \mathrm{O}(k)$ be the stabilizer group of $\hat{M}$ restricted to axial rotations. That is
\begin{equation}
S=\{(p,Q)\mid pQMQ^Tp^\dagger=M\}.
\end{equation}
Lemma~\ref{lemma:Q-determines-q} tells us that if $pQMQ^Tp^\dagger=M$, then $p.(p^\dagger(LL^\dagger)^{-1}Lp QL^\dagger,Q).\hat{M}=\hat{M}$. With that in mind, we can write $S$ as
\begin{equation*}
S=\{(p,Q)\mid p.(p^\dagger(LL^\dagger)^{-1}Lp QL^\dagger,Q).\hat{M}=\hat{M}\}.
\end{equation*}
As $\hat{M}$ is axially symmetric, clearly the map $\pi_1|_S\colon S\subseteq S^1\times \mathrm{O}(k)\rightarrow S^1$ is surjective. Note that $S^1$ is compact and its universal cover $\mathbb{R}$ is simply connected. 

Our goal is to find a smooth map that, when composed with $\pi_1|_S$ gives the covering map $\pi\colon\mathbb{R}\rightarrow S^1$ taking $x\mapsto e^{2\upsilon x}$. This smooth map will allow us to differentiate the equation for equivariance and arrive at \eqref{eq:axial}.

We first show that $S$ is a Lie group. Let $f\colon S^1\times \mathrm{O}(k)\rightarrow \mathcal{X}$ be the smooth map given by $f(p,Q):=p.(p^\dagger (LL^\dagger)^{-1}LpQL^\dagger,Q).\hat{M}$. Then $S=f^{-1}(\hat{M})$, hence $S$ is a closed group. As $S$ is a closed subgroup of a Lie group, it is a Lie group, by the Closed-subgroup Theorem. Moreover, as $S^1\times \mathrm{O}(k)$ is compact, $S$ must be as well, as it is closed.

Let $e:=(1,I)$ be the identity of $S\subseteq S^1\times \mathrm{O}(k)$. Consider the Lie algebra homomorphism $\phi:=d_e(\pi_1|_S)\colon\mathrm{Lie}(S)\rightarrow \mathbb{R}$. Note that $\mathrm{ker}(\phi)\subseteq \mathrm{Lie}(S)$ is more than a Lie subalgebra, it is an ideal of $\mathrm{Lie}(S)$. Indeed, for any $x\in\mathrm{ker}(\phi)$ and $y\in\mathrm{Lie}(S)$, 
\begin{equation*}
\phi([x,y])=[\phi(x),\phi(y)]=[0,\phi(y)]=0,
\end{equation*}
so $[x,y]\in\mathrm{ker}(\phi)$.

As $S$ is compact, it has a bi-invariant metric, which corresponds with a $\mathrm{Ad}(S)$-invariant inner product $\langle\cdot,\cdot\rangle$ on $\mathrm{Lie}(S)$, satisfying for all $x,y,z\in\mathrm{Lie}(S)$,
\begin{equation*}
\langle [x,y],z\rangle = \langle x,[y,z]\rangle.
\end{equation*}

Let $C\subseteq\mathrm{Lie}(S)$ be the orthogonal complement to $\mathrm{ker}(\phi)$. We show that $C$ is an ideal and $C\oplus \mathrm{ker}(\phi)=\mathrm{Lie}(S)$ as Lie algebras (the bracket is zero between the two ideals). Suppose that $c\in C$ and $s\in\mathrm{Lie}(S)$. We see that for all $x\in\mathrm{ker}(\phi)$
\begin{equation*}
\langle x,[s,c]\rangle=\langle [x,s],c\rangle=0,
\end{equation*}
since $[x,s]\in\mathrm{ker}(\phi)$ as $\mathrm{ker}(\phi)$ is an ideal. Thus, $[s,c]\in C$, so it is an ideal. Finally, we see that if $x\in\mathrm{ker}(\phi)$ and $c\in C$, then as they are both ideals, $[c,x]\in C\cap \mathrm{ker}(\phi)$. Hence, $||[c,x]||^2=0$, as it is orthogonal to itself. Hence, $[c,x]=0$, so $\mathrm{Lie}(S)=C\oplus \mathrm{ker}(\phi)$.

By the isomorphism theorems, we know that as $\mathrm{im}(\pi_1|_S)=S^1$ is closed,
\begin{equation*}
C\simeq \mathrm{Lie}(S)/\mathrm{ker}(\phi)\simeq\mathrm{im}(\phi)=\mathrm{Lie}(\mathrm{im}(\pi_1|_S))=\mathbb{R}.
\end{equation*}
Let $\psi\colon \mathbb{R}\rightarrow C$ be the isomorphism. 

Let $Y\subseteq S$ be the unique connected Lie subgroup corresponding to the Lie algebra $C\subseteq\mathrm{Lie}(S)$, whose existence is guaranteed by the Subgroups-subalgebras Theorem. Note that $\mathbb{R}$ is a Lie group whose Lie algebra is $\mathbb{R}$. As $\mathbb{R}$ is simply connected, the Homomorphisms Theorem tells us that there is a unique Lie group homomorphism $\Psi\colon \mathbb{R}\rightarrow Y$ such that $d_0\Psi=\psi$. Consider the map $\pi_1|_S\circ \Psi\colon \mathbb{R}\rightarrow S^1$. We know that $d_0(\pi_1|_S\circ \Psi)=\phi\circ\psi\colon\mathbb{R}\rightarrow\mathbb{R}$. We show that this map is an isomorphism and we use this to find our desired smooth map.

Indeed, if $\phi(\psi(x))=0$, then $\psi(x)\in C\cap \mathrm{ker}(\phi)=\{0\}$, so $x=0$, as $\psi$ is an isomorphism. Furthermore, consider $y\in\mathbb{R}=\mathrm{im}(\phi)$. Then there is some $x\in \mathrm{Lie}(S)$ such that $\phi(x)=y$. We can uniquely write $x=c+z$ for $c\in C$ and $z\in\mathrm{ker}(\phi)$. But then $\phi(c)=y$. As $\psi$ is an isomorphism, there is some $w\in\mathbb{R}$ such that $\psi(w)=c$. Therefore, $\phi\circ\psi(w)=\phi(c)=y$. We have proved that $\phi\circ\psi$ is a Lie algebra isomorphism. Call the inverse of this map $g$. 

By the Homomorphisms Theorem, we know that there is a unique Lie group homomorphism $G\colon \mathbb{R}\rightarrow \mathbb{R}$ such that $d_0G=g$. We show that $\Psi\circ G$ is the smooth map that we are searching for. Indeed, we have that
\begin{equation*}
d_0(\pi_1|_S\circ\Psi\circ G)=\phi\circ\psi\circ g=\mathrm{id}_{\mathbb{R}}.
\end{equation*}
But the covering map $\pi\colon\mathbb{R}\rightarrow S^1$ is a Lie group homomorphism whose pushforward at the identity is the identity. By the Homomorphisms Theorem, the two maps must be equal, so $\pi_1|_S\circ\Psi\circ G=\pi$.

As $\psi\circ g$ is a Lie algebra homomorphism, we know that $\psi\circ g(x)=(h(x),\rho(x))$ for some Lie algebra homomorphisms $h\colon\mathbb{R}\rightarrow \mathbb{R}$ and $\rho\colon\mathbb{R}\rightarrow\mathfrak{so}(k)$. As $\pi_1|_S\circ\Psi\circ G=\pi$, we have that $h(x)=x$ for all $x\in\mathbb{R}$, so $\psi\circ g(x)=(x,\rho(x))$. Note that as $\rho$ is a Lie algebra homomorphism into $\so(k)$, it is a real Lie algebra representation.

From the expression for $\psi\circ g(x)$ above and as $\pi_1|_S\circ \Psi\circ G=\pi$, we see that for all $\theta\in\mathbb{R}$, $\Psi\left(G\left(\frac{\theta}{2}\right)\right)=\left(e^{\upsilon \theta},e^{\frac{\rho(\theta)}{2}}\right)$. Furthermore, as $\Psi\circ G\colon \mathbb{R}\rightarrow Y\subseteq S$,
\begin{equation*}
e^{\upsilon \theta}.\left(e^{-\upsilon \theta}(LL^\dagger)^{-1}Le^{\upsilon \theta}e^{\frac{\rho(\theta)}{2}} L^\dagger,e^{\frac{\rho(\theta)}{2}}\right).\hat{M}=\hat{M}.
\end{equation*}
Moving the first factor to the other side, we differentiate and evaluate at $\theta=0$, obtaining
\begin{equation*}
\begin{bmatrix}
-\upsilon L+(LL^\dagger)^{-1}L\left(\upsilon+\rho\left(\frac{1}{2}\right)\right)L^\dagger L-L\rho\left(\frac{1}{2}\right) \\ \left[\rho\left(\frac{1}{2}\right),M\right]
\end{bmatrix}=\begin{bmatrix}
\left[-\upsilon,L\right] \\ \left[-\upsilon,M\right]
\end{bmatrix}.
\end{equation*}
Focusing on the bottom row, let $Y:=\rho\left(\frac{1}{2}\right)\in\mathfrak{so}(k)$. Then, we see that $\left[\upsilon,M\right]=\left[M,Y\right]$. 

Conversely, suppose the equations are true for some $Y\in\so(k)$. Let $Q(\theta):=\mathrm{exp}(-\theta Y)\in\SO(k)$,  $\theta\in\mathbb{R}$. Consider $A(\theta):=Q(\theta)^Tp(\theta)Mp(\theta)^\dagger Q(\theta)$. As real matrices and quaternions commute, we see
\begin{equation*}
A'(\theta)=Q(\theta)^Tp(\theta)\left([Y,M]+\left[\frac{k}{2},M\right]\right)p(\theta)^\dagger Q(\theta)=0.
\end{equation*}
As $A$ is constant, $A(\theta)=A(0)=M$, so $Q(\theta)MQ(\theta)^T=p(\theta)Mp(\theta)^\dagger$. By Lemma~\ref{lemma:Q-determines-q}, $\hat{M}$ axially symmetric about the $z$-axis.
\end{proof}

For axially symmetric monopoles, we do not need to check the final condition of $\mathcal{M}_{n,k}$ in Definition~\ref{def:Mnk} at every point in $\overline{H^3}$.
\begin{lemma}
Suppose that $\hat{M}$ satisfies \eqref{eq:axial} for some $Y\in\mathfrak{so}(k)$ as well as the first three conditions of $\mathcal{M}_{n,k}$ in Definition~\ref{def:Mnk}. If the final condition is satisfied at all $X=X_2j+X_3k\in\overline{H^3}$ with $X_2\geq 0$, then $\hat{M}\in\mathcal{M}_{n,k}$.\label{lemma:axialfinalcond}
\end{lemma}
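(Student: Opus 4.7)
The plan is to exploit the axial symmetry of $\hat{M}$ to transport the non-singularity condition on $\Delta(X)^\dagger \Delta(X)$ from the half-plane $\{X_2j+X_3k:X_2\geq 0\}\cap\overline{H^3}$, where it holds by hypothesis, to all of $\overline{H^3}$, and then invoke the preceding lemma which relaxes the fourth condition of $\mathcal{M}$ from $\mathbb{H}$ to $\overline{H^3}$. The first step is the elementary observation that every $X=X_1i+X_2j+X_3k\in\overline{H^3}$ can be rotated about the $z$-axis into that half-plane: choosing $\theta\in\mathbb{R}$ appropriately, we have $X=p(\theta)\,X'\,p(\theta)^\dagger$ with $p(\theta)=\cos(\theta/2)+k\sin(\theta/2)\in H_z$ and $X':=\sqrt{X_1^2+X_2^2}\,j+X_3k$.

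The second, and key, step is to translate the equivariance into a transformation rule for $\Delta$. By the converse direction of the proof of Theorem~\ref{thm:axialsym}, setting $Q(\theta):=\exp(-\theta Y)\in\SO(k)$ gives $Q(\theta)MQ(\theta)^T=p(\theta)Mp(\theta)^\dagger$, and Lemma~\ref{lemma:Q-determines-q} then produces a unique $q(\theta)\in\mathrm{Sp}(n)$ with $q(\theta)LQ(\theta)^T=p(\theta)Lp(\theta)^\dagger$, i.e.\ $p(\theta).\hat{M}=(q(\theta),Q(\theta)).\hat{M}$. Writing this out, and using that the scalar $p$ satisfies $p(I_kX')p^\dagger=I_k(pX'p^\dagger)=I_kX$ while the real matrix $Q$ satisfies $Q(I_kX)Q^T=I_kX$, a short computation yields
$$p(\theta)\,\Delta(X')\,p(\theta)^\dagger=\begin{bmatrix}q(\theta) & 0\\ 0 & Q(\theta)\end{bmatrix}\Delta(X)\,Q(\theta)^T,$$
whence, taking the adjoint of one copy and multiplying,
$$\Delta(X)^\dagger\Delta(X)=Q(\theta)^T\,p(\theta)\,\Delta(X')^\dagger\Delta(X')\,p(\theta)^\dagger\,Q(\theta).$$

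Finally, $Q(\theta)$ is invertible, and conjugation by a unit quaternion scalar preserves non-singularity on $\mathbb{H}^k$: if $p A p^\dagger v=0$, then $A(p^\dagger v)=0$, forcing $v=0$ whenever $A$ is non-singular. Thus the non-singularity of $\Delta(X')^\dagger\Delta(X')$ at every such $X'$ upgrades to non-singularity of $\Delta(X)^\dagger\Delta(X)$ at every $X\in\overline{H^3}$, and the preceding lemma then yields $\hat{M}\in\mathcal{M}$. The one point that needs care is the scalar-versus-matrix bookkeeping in the quaternionic setting — namely that $p$ is a scalar being pushed through matrix products while $Q$ is a real matrix commuting with the scalar factor $I_kX$ — but once the transformation identity for $\Delta$ is set up correctly, the remainder is immediate.
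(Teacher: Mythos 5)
Your proof is correct and follows essentially the same route as the paper: rotate an arbitrary point of $\overline{H^3}$ about the $z$-axis into the half-plane, use the equivariance $p.\hat{M}=(q,Q).\hat{M}$ (obtained from the converse direction of Theorem~\ref{thm:axialsym} together with Lemma~\ref{lemma:Q-determines-q}) to show $\Delta(X)^\dagger\Delta(X)$ is conjugate to $\Delta(X')^\dagger\Delta(X')$, and then invoke the earlier lemma relaxing the final condition to $\overline{H^3}$. The transformation identity you derive for $\Delta$ and the scalar-versus-matrix bookkeeping are exactly what the paper's proof does.
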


\begin{proof}
Let $Z=Z_1i+Z_2j+Z_3k\in\overline{H^3}$. Recall $\upsilon:=\frac{k}{2}\in\mathfrak{sp}(1)$. Then there is some $p\in \{e^{\upsilon \theta}\mid \theta\in\mathbb{R}\}\simeq S^1$ such that $pZp^\dagger=X_2j+X_3k=:X$ with $X_2\in[0,1]$. As $\hat{M}$ is axially symmetric, there is a pair $(q,Q)\in\mathrm{Sp}(n)\times\mathrm{O}(k)$ such that $p.(q,Q).\hat{M}=\hat{M}$. Thus,
\begin{equation*}
\Delta(X)=\begin{bmatrix}
L \\ M-I_kX
\end{bmatrix}=p\begin{bmatrix}
q & 0 \\ 0 & Q
\end{bmatrix}\Delta(Z)Q^T p^\dagger.
\end{equation*}
Hence, $\Delta(X)^\dagger \Delta(X)=pQ\Delta(Z)^\dagger \Delta(Z)Q^T p^\dagger$. As $\Delta(X)^\dagger \Delta(X)$ is non-singular, so too is $\Delta(Z)^\dagger \Delta(Z)$.
\end{proof}

\section{Spherical symmetry}\label{sec:SphericallySym}

In this section, we determine when a monopole is spherically symmetric, prove the Structure Theorem, which greatly simplifies the construction of such monopoles, and examine novel examples of hyperbolic monopoles. We also find a constraint on the structure groups of spherically symmetric monopoles, providing a framework for classifying these monopoles. We start by proving an analogue to Theorem~\ref{thm:axialsym}.
\begin{theorem}
Let $\hat{M}\in\mathcal{M}_{n,k}$. Recall that $\hat{M}$ is spherically symmetric when it is equivariant under all $p\in\mathrm{Sp}(1)$. Then $\hat{M}$ is spherically symmetric if and only if there exists real representation $(\mathbb{R}^k,\rho)$ with $\rho\colon\mathfrak{sp}(1)\rightarrow\mathfrak{so}(k)$, such that for all $\upsilon\in\mathfrak{sp}(1)$,
\begin{equation}
[M,\rho(\upsilon)]=[\upsilon,M].\label{eq:sphereasy}
\end{equation}
\label{thm:sphersym}
The induced representation is said to \textbf{generate} the spherically symmetric monopole corresponding with $\hat{M}$.
\end{theorem}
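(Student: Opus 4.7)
The plan is to mirror the strategy used for Theorem~\ref{thm:axialsym}, exploiting the crucial simplification that $\mathrm{Sp}(1)\simeq S^3$ is simply connected, so the universal-cover detour from the axial case disappears and the representation $\rho$ essentially ``assembles'' a family of axial generators into a single Lie algebra morphism.

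For the forward direction, I would first enlarge the rotation and gauge actions to the smooth manifold $\mathcal{X}$ of all $(n+k)\times k$ quaternionic matrices and, under the assumption that $\hat{M}\in\mathcal{M}$ is spherically symmetric, define the stabilizer
\begin{equation*}
S := \{(p,Q) \in \mathrm{Sp}(1) \times \mathrm{O}(k) \mid p.(p^\dagger(LL^\dagger)^{-1}Lp\,QL^\dagger,\, Q).\hat{M} = \hat{M}\},
\end{equation*}
which by Lemma~\ref{lemma:Q-determines-q} coincides with the set of $(p,Q)$ satisfying $pQMQ^Tp^\dagger=M$. As the preimage of $\hat{M}$ under a smooth map, $S$ is closed, hence a compact Lie subgroup, and spherical symmetry is exactly the statement that $\pi_1|_S\colon S\to\mathrm{Sp}(1)$ is surjective. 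The Lie algebra homomorphism $\phi := d_e(\pi_1|_S)\colon\mathrm{Lie}(S)\to\mathfrak{sp}(1)$ then has an ideal kernel, and using a bi-invariant metric on $S$ with associated $\mathrm{Ad}(S)$-invariant inner product, the orthogonal complement $C$ of $\ker(\phi)$ is itself an ideal with $[C,\ker(\phi)]=0$, exactly as in Theorem~\ref{thm:axialsym}. Since $\mathrm{im}(\pi_1|_S)=\mathrm{Sp}(1)$, the restriction of $\phi$ yields an isomorphism $\phi\circ\psi\colon\mathfrak{sp}(1)\to\mathfrak{sp}(1)$, where $\psi\colon\mathfrak{sp}(1)\to C$.

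Let $Y\subseteq S$ be the connected Lie subgroup with Lie algebra $C$. Because $\mathrm{Sp}(1)$ is simply connected, the Homomorphisms Theorem gives a unique Lie group homomorphism $\Psi\colon\mathrm{Sp}(1)\to Y$ with $d_e\Psi=\psi$; composing with the integrated lift $G$ of $(\phi\circ\psi)^{-1}$ produces a section $\pi_1|_S\circ\Psi\circ G=\mathrm{id}_{\mathrm{Sp}(1)}$. The induced Lie algebra morphism $\psi\circ g\colon\mathfrak{sp}(1)\to C\subseteq \mathfrak{sp}(1)\oplus\mathfrak{so}(k)$ must then be of the form $\upsilon\mapsto(\upsilon,\rho(\upsilon))$ for a uniquely determined Lie algebra homomorphism $\rho\colon\mathfrak{sp}(1)\to\mathfrak{so}(k)$, i.e.\ a real representation. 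Writing the equation defining $S$ along the one-parameter subgroup $t\mapsto e^{t\upsilon}$ yields
\begin{equation*}
e^{t\upsilon}.\bigl(e^{-t\upsilon}(LL^\dagger)^{-1}Le^{t\upsilon}e^{t\rho(\upsilon)}L^\dagger,\, e^{t\rho(\upsilon)}\bigr).\hat{M}=\hat{M},
\end{equation*}
and differentiating at $t=0$ and reading off the $M$-component reproduces the desired identity $[M,\rho(\upsilon)]=[\upsilon,M]$.

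For the converse, given such a $\rho$, for each $\upsilon\in\mathfrak{sp}(1)$ the element $Y:=\rho(\upsilon)\in\mathfrak{so}(k)$ satisfies \eqref{eq:axial} with respect to the axis of $\upsilon$, so the converse direction of Theorem~\ref{thm:axialsym} (in the axis-independent form given in the note following it) shows that $\hat{M}$ is equivariant under every $e^{\upsilon}$. Since $\exp\colon\mathfrak{sp}(1)\to\mathrm{Sp}(1)$ is surjective, $H_{\hat{M}}=\mathrm{Sp}(1)$.

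The main obstacle is certifying that $\rho$ is genuinely a Lie algebra homomorphism into $\mathfrak{so}(k)$, and not merely a linear map satisfying the commutation relation pointwise; this is what forces the orthogonal-ideal splitting of $\mathrm{Lie}(S)$ together with the integration of $\psi$ to an actual Lie group homomorphism $\Psi$. The simple connectedness of $\mathrm{Sp}(1)$ is what makes this integration step go through cleanly, in contrast to the $S^1$ case where one must pass through $\mathbb{R}$. Once these ingredients are in place, the rest of the argument is essentially a direct transcription of the axial proof with $\rho$ playing the role of the single generator $Y$.
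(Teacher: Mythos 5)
Your proposal is correct and follows essentially the same route as the paper: the forward direction is the same stabilizer/ideal-splitting/integration argument, with simple connectedness of $\mathrm{Sp}(1)$ removing the covering-space step. The only cosmetic difference is in the converse, where you invoke the axis-dependent form of Theorem~\ref{thm:axialsym} (after rescaling $\upsilon$ to $\hat{u}/2$) instead of redoing the one-parameter-subgroup computation directly as the paper does; both amount to the same calculation.
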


\begin{note}
Let $\upsilon_1:=\frac{i}{2},\upsilon_2:=\frac{j}{2},\upsilon_3:=\frac{k}{2}\in\mathfrak{sp}(1)$. 
The statement above is equivalent to finding a triple $Y_1,Y_2,Y_3\in\so(k)$ inducing a representation of $\mathfrak{sp}(1)$, such that for $l\in\{1,2,3\}$,
\begin{equation}
[M,Y_l]=[\upsilon_l,M].\label{eq:spher}
\end{equation}
Expanding $M=M_1i+M_2j+M_3k$, \eqref{eq:spher} are equivalent to, for all $i,j\in\{1,2,3\}$,
\begin{align*}
[Y_a,M_b]&=\sum_{c=1}^3\epsilon_{abc}M_c.
\end{align*}
\end{note}

\begin{proof}
We follow the proof of Theorem~\ref{thm:axialsym}, making modifications when relevant. Suppose that $\hat{M}\in\mathcal{M}_{n,k}$ is spherically symmetric, so $[\hat{M}]$ is fixed by $\mathrm{Sp}(1)$. Let $S\subseteq \mathrm{Sp}(1)\times \mathrm{O}(k)$ be the stabilizer group of $\hat{M}$. As $\hat{M}$ is spherically symmetric, clearly the map $\pi_1|_S\colon S\subseteq\mathrm{Sp}(1)\times \mathrm{O}(k)\rightarrow \mathrm{Sp}(1)$ is surjective. Note that $\mathrm{Sp}(1)$ is simply connected and compact. Because $\mathrm{Sp}(1)$ is simply connected, instead of worrying about universal covers and covering maps, our goal is just to find a smooth right inverse to $\pi_1|_S$, which will allow us to differentiate the equation for equivariance and arrive at \eqref{eq:sphereasy}.

Just as in the axial case, $S$ is a compact Lie group. Let $e:=(1,I)$ be the identity of $S\subseteq \mathrm{Sp}(1)\times \mathrm{O}(k)$. Consider the Lie algebra homomorphism $\phi:=d_e(\pi_1|_S)\colon\mathrm{Lie}(S)\rightarrow \mathfrak{sp}(1)$. Just as before, $C$, the orthogonal complement to $\mathrm{ker}(\phi)$, intersects $\mathrm{ker}(\phi)$ trivially and $\mathrm{Lie}(S)=C\oplus \mathrm{ker}(\phi)$. Moreover, by the isomorphism theorems, we know that as $\mathrm{im}(\pi_1|_S)=\mathrm{Sp}(1)$ is closed,
\begin{equation*}
C\simeq \mathrm{Lie}(S)/\mathrm{ker}(\phi)\simeq\mathrm{im}(\phi)=\mathrm{Lie}(\mathrm{im}(\pi_1|_S))=\mathfrak{sp}(1).
\end{equation*}
Let $\psi\colon \mathfrak{sp}(1)\rightarrow C$ be the isomorphism. 

Let $Y\subseteq S$ be the unique connected Lie subgroup corresponding to the Lie algebra $C\subseteq\mathrm{Lie}(S)$. As $\mathrm{Sp}(1)$ is simply connected, there is a unique Lie group homomorphism $\Psi\colon \mathrm{Sp}(1)\rightarrow Y$ such that $d_1\Psi=\psi$. Consider the map $\pi_1|_S\circ \Psi\colon \mathrm{Sp}(1)\rightarrow \mathrm{Sp}(1)$. We know that $d_1(\pi_1|_S\circ \Psi)=\phi\circ\psi$, which, as before, is an isomorphism, whose inverse we denote $g$.

There is a unique Lie group homomorphism $G\colon \mathrm{Sp}(1)\rightarrow \mathrm{Sp}(1)$ such that $d_1G=g$. Note that
\begin{equation*}
d_1(\pi_1|_S\circ\Psi\circ G)=\phi\circ\psi\circ g=\mathrm{id}_{\mathfrak{sp}(1)}.
\end{equation*}
But $\mathrm{id}_{\mathrm{Sp}(1)}$ is a Lie group homomorphism whose pushforward at the identity is the identity. By the Homomorphisms Theorem, the two maps must be equal, so $\pi_1|_S\circ\Psi\circ G=\mathrm{id}_{\mathrm{Sp}(1)}$.

Similar to the axial case, we know that there is a Lie algebra homomorphism $\rho\colon\mathfrak{sp}(1)\rightarrow \mathfrak{so}(k)$ such that $\psi\circ g(x)=(x,\rho(x))$. As $\rho$ is a Lie algebra homomorphism whose image consists of real matrices, $\rho$ gives us a real Lie algebra representation.

Given the expression for $\psi\circ g$ above and as $\pi_1|_S\circ\Psi\circ G=\mathrm{id}_{\mathrm{Sp}(1)}$, we see that for all $\theta\in\mathbb{R}$ and $\upsilon\in\mathfrak{sp}(1)$, $\Psi(G(e^{\theta\upsilon}))=\left(e^{\theta\upsilon},e^{\theta\rho(\upsilon)}\right)$. Furthermore as $\Psi\circ G\colon\mathrm{Sp}(1)\rightarrow Y\subseteq S$, by Lemma~\ref{lemma:Q-determines-q},
\begin{equation*}
e^{\theta \upsilon}.\left(e^{-\theta \upsilon}(LL^\dagger)^{-1}Le^{\theta \upsilon}e^{\theta\rho(\upsilon)}L^\dagger,e^{\theta\rho(\upsilon)}\right).\hat{M}=\hat{M}.
\end{equation*}
Moving the first factor to the other side, we differentiate and evaluate at $\theta=0$, obtaining
\begin{equation}
\begin{bmatrix}
-\upsilon L+(LL^\dagger)^{-1}L(\upsilon+\rho(\upsilon))L^\dagger L-L\rho(\upsilon) \\ [\rho(\upsilon),M]
\end{bmatrix}=\begin{bmatrix}
[-\upsilon,L] \\ [-\upsilon,M]
\end{bmatrix}.\label{eq:repgiver}
\end{equation}
We examine the top row in Section~\ref{subsec:StructureGroups}. For the current proof, we note that the bottom row gives us \eqref{eq:sphereasy}.

Conversely, suppose that the equations hold for some real representation $\rho\colon\mathfrak{sp}(1)\rightarrow\mathfrak{so}(k)$. Let $p\in\mathrm{Sp}(1)$. As the exponential map $\mathrm{exp}\colon\mathfrak{sp}(1)\rightarrow\mathrm{Sp}(1)$ is surjective, there is some $\upsilon\in\mathfrak{sp}(1)$ such that $p=\mathrm{exp}(\upsilon)$. Let $p(\theta):=\mathrm{exp}(\theta \upsilon)$ and $Q(\theta):=\mathrm{exp}(-\theta \rho(\upsilon))$. 

We show that $A(\theta):=p(\theta)^\dagger Q(\theta)MQ(\theta)^Tp(\theta)$ is constant. Indeed, 
\begin{align*}
A'(\theta)&=-p(\theta)^\dagger Q(\theta)\left([\upsilon,M]+[\rho(\upsilon),M]\right)Q(\theta)^Tp(\theta)=0.
\end{align*}
Hence, $A$ is constant, so
\begin{equation*}
p^\dagger Q(1)MQ(1)^T p=A(1)=A(0)=M.
\end{equation*}
Therefore, $Q(1)MQ(1)^T=pMp^\dagger$. Then by Lemma~\ref{lemma:Q-determines-q}, we have that $\hat{M}$ is $p$-equivariant. As $p$ was arbitrary, we have that $\hat{M}$ is spherically symmetric.
\end{proof}

\begin{note}
We can use a similar proof for symmetric Euclidean monopoles, improving upon previous work~\cite[Theorems~3.1~\&~4.1]{charbonneau_construction_2022}. We fill in the gaps for the spherical symmetry case, but the axial case is similar. In this previous work, the chosen gauge group is $\mathrm{SU}(n)$ and there is a rotation action of $\mathrm{SO}(3)$ (though we can get an action of $\mathrm{Sp}(1)$ using the double cover)~\cite[Definition 2.2]{charbonneau_construction_2022}. In this case, we let $\mathcal{X}=\mathrm{Mat}(n,n,\mathbb{C})^{\oplus 3}$, and let $S$ be the stabilizer group $S=\{(A,U)\in\mathrm{SO}(3)\times\mathrm{SU}(n)\mid {^A_U}\mathcal{T}=\mathcal{T}\}$, where we use the notations for the gauge and rotation actions~\cite[Definition 2.2]{charbonneau_construction_2022}. 

We then get an isomorphism $\psi\colon\mathfrak{so}(3)\rightarrow C$, which gives us a unique Lie group homomorphism $\Psi\colon\mathrm{Sp}(1)\rightarrow Y$ with $d_1\Psi=\psi$. Let $f\colon\mathrm{Sp}(1)\rightarrow\mathrm{SO}(3)$ be the double cover map. Then there is a Lie algebra isomorphism $g=(\phi\circ\psi)^{-1}\colon\mathfrak{so}(3)\rightarrow\mathfrak{so}(3)$ and a Lie group homomorphism $G\colon\mathrm{Sp}(1)\rightarrow\mathrm{Sp}(1)$ such that $d_1G=g$ and $\pi_1|_S\circ\Psi\circ G=f$. Just as before, there is some Lie algebra homomorphism $\rho\colon\mathfrak{so}(3)\rightarrow\mathfrak{su}(n)$ such that $\psi\circ g(x)=(x,\rho(x))$. Finally, we have for all $i\in\{1,2,3\}$ and $\theta\in\mathbb{R}$
\begin{equation*}
{^{\pi(e^{\theta \upsilon_i})}_{\pi_2(\Psi(G(e^{\theta \upsilon_i})))}}\mathcal{T}=\mathcal{T}.
\end{equation*}
Rearranging, taking the derivative, and evaluating at $\theta=0$, we let $Y_i:=\rho(\upsilon_i)$ and we obtain the spherical symmetry equations.

Therefore, we can remove the conditions on differentiability from that paper, meaning a Euclidean monopole is spherically symmetric if and only if there are matrices $Y_i$, inducing a representation of $\mathfrak{so}(3)$, satisfying the spherical symmetry equations. As we always get a representation of $\mathfrak{so}(3)$ if a monopole is spherically symmetric, the Structure Theorem tells us what all spherically symmetric Euclidean monopoles look like.
\end{note}

For spherically symmetric monopoles, we only have to check the final condition of $\mathcal{M}_{n,k}$ in Definition~\ref{def:Mnk} along a ray from the origin of $\overline{H^3}$.
\begin{lemma}
Suppose that $\hat{M}$ satisfies \eqref{eq:spher} for some $Y_1,Y_2,Y_3\in\mathfrak{so}(k)$ as well as the first three conditions of $\mathcal{M}_{n,k}$ in Definition~\ref{def:Mnk}. If the final condition is satisfied at $X=X_3k\in\overline{H^3}$, for all $X_3\in[0,1]$, then $\hat{M}\in\mathcal{M}_{n,k}$.\label{lemma:spherfinalcond}
\end{lemma}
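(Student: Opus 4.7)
The plan is to imitate the proof of Lemma~\ref{lemma:axialfinalcond}, exploiting the fact that spherical symmetry lets us rotate any point of $\overline{H^3}$ onto a single ray from the origin. The key preliminary observation is that the converse direction of Theorem~\ref{thm:sphersym} yields $p$-equivariance of $\hat{M}$ for every $p\in\mathrm{Sp}(1)$ assuming only the first three conditions of $\mathcal{M}$ together with \eqref{eq:spher}: the converse invokes Lemma~\ref{lemma:Q-determines-q}, whose argument merely requires the invertibility of $LL^\dagger$ and the identity $L^\dagger L - M^2 = I_k$, both of which are already among our hypotheses. So the Theorem's conclusion applies here even though we have not yet placed $\hat{M}$ in $\mathcal{M}$.

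Given an arbitrary $Z\in\overline{H^3}$, I would then write $Z = pXp^\dagger$ with $X := |Z|\,k$ and some $p\in\mathrm{Sp}(1)$, using that conjugation by unit quaternions realizes the double cover $\mathrm{Sp}(1)\to\mathrm{SO}(3)$ and that $\mathrm{SO}(3)$ acts transitively on each sphere in $\operatorname{Im}\mathbb{H}\simeq\mathbb{R}^3$. By the previous paragraph, pick $(q,Q)\in\mathrm{Sp}(n)\times\mathrm{O}(k)$ so that $pLp^\dagger = qLQ^T$ and $pMp^\dagger = QMQ^T$. The core computation is the identity $Q(I_k Z)Q^T = I_k Z$, valid because $Z$ appears only as a scalar quaternion coefficient, $Q$ is real, and $QQ^T = I_k$. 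Combining this with the equivariance relations gives
\begin{equation*}
p\,\Delta(X)\,p^\dagger \;=\; \begin{bmatrix} q & 0 \\ 0 & Q \end{bmatrix} \Delta(Z)\,Q^T,
\end{equation*}
so after taking daggers and using $p^\dagger p = 1$ one obtains
\begin{equation*}
\Delta(Z)^\dagger \Delta(Z) \;=\; Q^T p\,\Delta(X)^\dagger \Delta(X)\,p^\dagger Q.
\end{equation*}
Since the outer conjugation by $Q^T p (\cdot) p^\dagger Q$ is invertible, the hypothesized non-singularity of $\Delta(X)^\dagger\Delta(X)$ at $X = X_3 k$ for $X_3 = |Z|\in[0,1]$ transfers to non-singularity of $\Delta(Z)^\dagger\Delta(Z)$, and $\hat{M}\in\mathcal{M}$ follows.

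The only subtlety worth flagging — and the step that needs a careful audit rather than any new idea — is the opening claim that $p$-equivariance holds under the first three conditions of $\mathcal{M}$ alone, since Theorem~\ref{thm:sphersym} as stated presupposes $\hat{M}\in\mathcal{M}$. Inspecting the proofs of Lemma~\ref{lemma:Q-determines-q} and of the converse half of Theorem~\ref{thm:sphersym} confirms that the fourth condition is nowhere used. Beyond that check, the argument is a routine transcription of Lemma~\ref{lemma:axialfinalcond} from the axial to the spherical setting.
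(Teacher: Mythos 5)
Your proof is correct and follows essentially the same route as the paper, which simply transcribes the argument of Lemma~\ref{lemma:axialfinalcond} using the transitivity of the $\mathrm{Sp}(1)$-conjugation action on spheres in $\operatorname{Im}\mathbb{H}$ to move any $Z\in\overline{H^3}$ onto the ray $X_3k$ and then conjugates $\Delta(Z)^\dagger\Delta(Z)$ to $\Delta(X)^\dagger\Delta(X)$ via the equivariance pair $(q,Q)$. Your explicit check that the equivariance (via the converse of Theorem~\ref{thm:sphersym} and Lemma~\ref{lemma:Q-determines-q}) needs only the first three conditions of $\mathcal{M}$ is a point the paper leaves implicit, and it is verified correctly.
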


\begin{proof}
We follow the same proof as Lemma~\ref{lemma:axialfinalcond}, but note for any $Z=Z_1i+Z_2j+Z_3k\in\overline{H^3}$, there is some $p\in\mathrm{Sp}(1)$ such that $pZp^\dagger=X_3k=:X$ with $X_3\in[0,1]$. 
\end{proof}

\subsection{The structure of spherical symmetry}

Theorem~\ref{thm:sphersym} tells us how to search for spherically symmetric monopoles: use a real $k$-representation of $\mathfrak{sp}(1)$ to narrow down the possible $M$. We are then only left with finding $L$ that gives us $\hat{M}\in\mathcal{M}_{n,k}$. We now investigate what representations generate spherically symmetric monopoles and what the corresponding ADHM data looks like.

\begin{definition}
Let $\mathrm{ad}\colon\mathfrak{sp}(1)\rightarrow\mathfrak{gl}(\mathfrak{sp}(1))$ be the Lie algebra homomorphism $\mathrm{ad}(x)(z):=[x,z]$. Denote the adjoint representation of $\mathfrak{sp}(1)$ by $(\mathfrak{sp}(1),\mathrm{ad})$ and let $V:=\mathbb{R}^k$. Given a real representation $\left(V,\rho\right)$ of $\mathfrak{sp}(1)$, we define the induced real representation
\begin{equation}
(\hat{V},\hat{\rho}):=\left(V,\rho\right)\otimes_\mathbb{R} \left(V^*,\rho^*\right)\otimes_\mathbb{R} (\mathfrak{sp}(1),\mathrm{ad}).
\end{equation}
\end{definition}

Unraveling how $\hat{\rho}$ acts on $\hat{V}=\mathrm{Mat}(k,k,\mathfrak{sp}(1))$, let $\upsilon\in\mathfrak{sp}(1)$ and $A\in\mathrm{Mat}(k,k,\mathfrak{sp}(1))$. Then 
\begin{equation*}
\hat{\rho}(\upsilon)\left(A\right)= [\rho(\upsilon),A]+[\upsilon,A].
\end{equation*}

\begin{note}
All representations of $\mathfrak{sp}(1)$ are self-dual. 
\end{note}

The definition of $(\hat{V},\hat{\rho})$ is well-motivated. Firstly, note that $M\in \hat{V}$. Secondly, given the connection between the action of $\hat{\rho}(\upsilon)$ and \eqref{eq:sphereasy}, we immediately obtain the following corollary.
\begin{cor}
Let $\hat{M}\in\mathcal{M}_{n,k}$. Then $\hat{M}$ is spherically symmetric if and only if there is some real $k$-representation $(V,\rho)$ such that $\hat{\rho}(\upsilon)(M)=0$, for all $\upsilon\in\mathfrak{sp}(1)$.\label{cor:whyrep}
\end{cor}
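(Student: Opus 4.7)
The plan is essentially to unpack the definition of $\hat{\rho}$ and recognize that the corollary is an immediate translation of Theorem~\ref{thm:sphersym} into the language of the induced representation $(\hat{V},\hat{\rho})$. Since Theorem~\ref{thm:sphersym} already gives the characterization of spherical symmetry via \eqref{eq:sphereasy}, the only content left is to verify that the equation $\hat{\rho}(\upsilon)(M)=0$ is literally the same equation as \eqref{eq:sphereasy}.

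First, I would recall from the paragraph immediately preceding the corollary that, with $\hat{V}=\mathrm{Mat}(k,k,\mathfrak{sp}(1))$, the action of $\hat{\rho}$ on $M\in\hat{V}$ expands as
\begin{equation*}
\hat{\rho}(\upsilon)(M)=[\rho(\upsilon),M]+[\upsilon,M].
\end{equation*}
Requiring this to vanish is equivalent to $[\rho(\upsilon),M]=-[\upsilon,M]=[M,\upsilon]$, i.e.\ $[M,\rho(\upsilon)]=[\upsilon,M]$, which is exactly \eqref{eq:sphereasy}. So the forward and reverse directions are both one-line applications of Theorem~\ref{thm:sphersym}: for the ``only if'' direction, the theorem produces a real representation $\rho\colon\mathfrak{sp}(1)\to\mathfrak{so}(k)$ with $[M,\rho(\upsilon)]=[\upsilon,M]$ for all $\upsilon$, and the computation above reinterprets this as $\hat{\rho}(\upsilon)(M)=0$; conversely, a real $k$-representation $(V,\rho)$ annihilating $M$ under $\hat{\rho}$ automatically gives back \eqref{eq:sphereasy} and hence spherical symmetry by the theorem.

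There is no real obstacle here; the corollary is purely a change of notation. The only mild subtlety to flag is that the representation $\rho$ in the statement is required to be real and $\mathfrak{so}(k)$-valued, and both properties are preserved in both directions: Theorem~\ref{thm:sphersym} produces such a $\rho$ directly, and conversely the hypothesis in the corollary explicitly gives a real representation, which by orthogonality of a chosen invariant inner product on $V=\mathbb{R}^k$ (or equivalently by compactness of the image of the associated Lie group) can be taken to land in $\mathfrak{so}(k)$. With that noted, the corollary follows at once.
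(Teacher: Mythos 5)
Your proposal is correct and matches the paper exactly: the paper gives no separate proof, remarking only that the corollary is immediate from the identity $\hat{\rho}(\upsilon)(M)=[\rho(\upsilon),M]+[\upsilon,M]$ together with Theorem~\ref{thm:sphersym}, which is precisely the translation you carry out.
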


\begin{note}
Suppose $\hat{M}\in\mathcal{M}_{n,k}$ is spherically symmetric. Corollary~\ref{cor:whyrep} tells us that there is some representation $(V,\rho)$ with $Y_i:=\rho(\upsilon_i)\in\mathfrak{so}(k)$ such that $\mathrm{span}(M)\subseteq \hat{V}$ is an invariant subspace, which is acted on trivially. 

As $\mathfrak{sp}(1)$ is semi-simple, the representation $(\hat{V},\hat{\rho})$ decomposes into irreducible representations. We explore the case $M=0$ below, so suppose $M\neq 0$. As $\mathrm{span}(M)$ is a one-dimensional invariant subspace, acted on trivially, $(\mathrm{span}(M),0)$ is a summand of the representation. So the decomposition of $(\hat{V},\hat{\rho})$ must contain trivial summands. Moreover, $M$ is in the direct sum of these trivial summands. Trivially, if $M=0$, then it is in the direct sum of trivial summands as well.
\end{note}

\begin{prop}[The $M=0$ case] Suppose $M=0_k$ and $\hat{M}\in\mathcal{M}_{n,k}$. Then $n=k$ and $L\in\mathrm{Sp}(k)$. Furthermore, gauging by $L^\dagger$ gives us $L=I_k$. Such a $\hat{M}$ is spherically symmetric.

Moreover, for all $X=X_1i+X_2j+X_3k\in H^3$, let $R:=\sqrt{X_1^2+X_2^2+X_3^2}$. Also, let the norm of an element $A\in\mathfrak{sp}(k)$ be given by $|A|^2:=-\mathrm{Tr}(A^2)$. Then we have that the corresponding monopole's Higgs field satisfies $|\Phi(X)|=\sqrt{k}\frac{R}{1+R^2}$. Also, up to gauge, we have that 
\begin{equation*}
\Phi(X)=\frac{X}{R^2+1}I_k.
\end{equation*}\label{prop:Mzero}
\end{prop}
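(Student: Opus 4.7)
The plan is to proceed in four stages: deduce that $L$ is unit quaternionic and $n=k$; use the gauge action to normalize $L$ to the identity; verify the remaining ADHM condition and invoke Theorem~\ref{thm:sphersym} for spherical symmetry; and then explicitly compute $\psi$, $\mu$, and $\Phi$.

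First, with $M=0_k$, the identity $L^\dagger L-M^2=I_k$ collapses to $L^\dagger L=I_k$, so the $k$ columns of $L$ are orthonormal in $\mathbb{H}^n$, whence $\operatorname{rank} L=k$ and $n\geq k$. Combined with Note~\ref{note:ineq}, I conclude $n=k$, and then $L\in\mathrm{Sp}(k)$. Applying the gauge element $(L^\dagger,I_k)\in\mathrm{Sp}(k)\times\mathrm{O}(k)$ sends $L$ to $L^\dagger L=I_k$ while leaving $M=0$ fixed; since the gauge action preserves the monopole, I may work with $L=I_k$, $M=0_k$ from here on. A direct computation gives $\Delta(X)^\dagger\Delta(X)=(1+R^2)I_k$, which is positive definite on all of $\mathbb{H}$, confirming $\hat{M}\in\mathcal{M}$. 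For spherical symmetry, Theorem~\ref{thm:sphersym} applies with the trivial representation $\rho\equiv 0$, since both sides of \eqref{eq:sphereasy} vanish automatically when $M=0$.

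For the Higgs field, the key step is choosing the right $\psi$. Motivated by the block structure of $\Delta(X)=\begin{bmatrix}I_k\\ -XI_k\end{bmatrix}$, I make the ansatz $\psi(X)=\alpha(X)\begin{bmatrix}-XI_k\\ I_k\end{bmatrix}$ for a scalar $\alpha(X)$. The orthogonality $\psi^\dagger\Delta=0$ is immediate from $X^\dagger=-X$, and the normalization $\psi^\dagger\psi=|\alpha|^2(1+R^2)I_k=I_k$ forces $\alpha(X)=1/\sqrt{1+R^2}$. Since $\mu=LML^\dagger(LL^\dagger)^{-1}=0$, the formula \eqref{eq:Phi} reduces to
\begin{equation*}
\Phi(X)=\tfrac{1}{2}\psi(X)^\dagger\begin{bmatrix}0 & I_k\\ -I_k & 0\end{bmatrix}\psi(X),
\end{equation*}
and a short block multiplication yields $\Phi(X)=\frac{X}{1+R^2}I_k$. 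The norm claim then follows from $X^2=-R^2$ and $\mathrm{Tr}(I_k)=k$. The only step that demands more than a routine substitution is recognizing that positive-definiteness of $LL^\dagger$ together with $L^\dagger L=I_k$ forces $L$ square and in $\mathrm{Sp}(k)$, pinning down both the gauge normalization and the structure-group rank; everything else is direct computation, so the argument has no significant obstacle.
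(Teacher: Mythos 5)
Your proposal is correct and follows essentially the same route as the paper: deduce $n=k$ and $L\in\mathrm{Sp}(k)$ from $L^\dagger L=I_k$ together with Note~\ref{note:ineq}, gauge $L$ to $I_k$, note that $M=0$ trivially satisfies \eqref{eq:sphereasy}, and compute $\Phi$ from the same $\psi(X)=\frac{1}{\sqrt{1+R^2}}\begin{bmatrix}-XI_k\\ I_k\end{bmatrix}$. The extra check that $\Delta(X)^\dagger\Delta(X)=(1+R^2)I_k$ is harmless but not needed, since $\hat{M}\in\mathcal{M}$ is a hypothesis of the proposition.
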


\begin{proof}
Suppose $M=0$, so $\mu=0$. Recall from Note~\ref{note:ineq} that $n\leq k$. As $\mathrm{rank}(L^\dagger L)\leq n$ and $L^\dagger L=I_k+M^2=I_k$, we see that $n=k$. As $L^\dagger L=I_k$, $L\in\mathrm{Sp}(k)$. We can gauge the data without changing the monopole, so we may gauge by $L^\dagger$, to obtain $\tilde{L}:=L^\dagger L=I_k$. Also, note that $M=0$ satisfies the spherical symmetry conditions for any representation $\rho$.

We now wish to construct the corresponding monopole. Note that $\Delta(X)=\begin{bmatrix}
I_k \\ -XI_k
\end{bmatrix}$. We see that $\psi(X):=\frac{1}{\sqrt{R^2+1}}\begin{bmatrix}
-XI_k \\ I_k
\end{bmatrix}$ satisfies $\psi(X)^\dagger \Delta(X)=0$ and $\psi(X)^\dagger \psi(X)=I_k$. Using this $\psi(X)$, by \eqref{eq:Phi} we have, up to gauge,
\begin{equation*}
\Phi(X)=\frac{1}{2(R^2+1)}\begin{bmatrix}
XI_k & I_k
\end{bmatrix}\begin{bmatrix}
0 & I_k\\
-I_k & 0
\end{bmatrix}\begin{bmatrix}
-XI_k \\ I_k
\end{bmatrix}=\frac{X}{R^2+1}I_k.
\end{equation*}
We see that such a Higgs field has norm $\sqrt{k}\frac{R}{1+R^2}$.
\end{proof}

\begin{note}
Proposition~\ref{prop:Mzero} tells us that $\mathrm{Sp}(k)$ monopoles with $M=0$ have ADHM data in $\mathcal{M}_{k,k}$ and are reducible, being constructed by applying a homomorphism $\mathrm{Sp}(1)\rightarrow\mathrm{Sp}(k)$ to the basic spherically symmetric instanton number one $\mathrm{Sp}(1)$ monopole.
\end{note}

The irreducible real representations of $\mathfrak{sp}(1)$ are indexed by the dimension of the vector space acted upon; every irreducible real $k$-representation of $\mathfrak{sp}(1)$ is isomorphic. However, irreducible real representations only exist for $k\in\mathbb{N}_+$ odd or divisible by four. The tensor product of real representations is simplified by using complex representations, so we shall look at these too.

Like the irreducible real representations, irreducible complex representations of $\mathfrak{sp}(1)$ are indexed by the dimension of the vector space acted upon. Unlike the irreducible real representations, a unique irreducible complex representation exists for all $k\in\mathbb{N}_+$, up to isomorphism. Note that the dimension of the vector space acted on $k$ and the highest weight of the representation $w$ are related via $w=\frac{k-1}{2}$.
\begin{definition}
Let $(V_k,\rho_k)$ be the irreducible complex $k$-representation of $\mathfrak{sp}(1)$. Let $(\mathbb{R}^k,\varrho_k)$ be the irreducible real $k$-representation of $\mathfrak{sp}(1)$.
\end{definition}

The irreducible complex and real representations are related as follows. For $k$ odd or divisible by four, $\varrho_k$ can be complexified and made to act on $\mathbb{C}\otimes_\mathbb{R} \mathbb{R}^k\simeq \mathbb{C}^k$ as $\mathrm{id}_\mathbb{C}\otimes \varrho_k$. The complexified representation $(\mathbb{C}\otimes_\mathbb{R}\mathbb{R}^k,\mathrm{id}_\mathbb{C}\otimes\varrho_k)$ is a complex representation. For $k$ odd, as complex representations, $(\mathbb{C}\otimes_\mathbb{R}\mathbb{R}^k,\mathrm{id}_\mathbb{C}\otimes_\mathbb{R}\varrho_k)\simeq (V_k,\rho_k)$. For $k$ divisible by four, as complex representations, $(\mathbb{C}\otimes_\mathbb{R}\mathbb{R}^k,\mathrm{id}_\mathbb{C}\otimes_\mathbb{R}\varrho_k)\simeq \left(V_{k/2},\rho_{k/2}\right)^{\oplus 2}$. Thus, for $k$ odd, the $Y_i$ matrices that induce $(V_k,\rho_k)$ can be chosen such that they are real matrices.

In particular, we note the following representations. The adjoint representation $(\mathfrak{sp}(1),\mathrm{ad})$ is the irreducible 3-dimensional representation $(V_3,\rho_3)$. Let $F\colon\mathfrak{sp}(1)\rightarrow\mathfrak{su}(2)$ be the Lie algebra isomorphism given by
\begin{equation*}
F(ai+bj+ck):=\begin{bmatrix}
ai & b+ci \\
-b+ci & -ai
\end{bmatrix}.
\end{equation*}
The fundamental representation $(\mathbb{C}^2,F)$ is the irreducible $2$-representation $(V_2,\rho_2)$. Finally, the trivial representation $(\mathbb{C},0)$ is the irreducible 1-representation $(V_1,\rho_1)$.

As mentioned earlier, as $\mathfrak{sp}(1)$ is a semisimple Lie algebra, all representations of it decompose as the direct sum of irreducible representations. The Clebsch--Gordon Decomposition tells us how to decompose tensor products of complex representations: for $m\geq n\geq 1$,
\begin{equation*}
(V_m,\rho_m)\otimes_\mathbb{C} (V_n,\rho_n)\simeq \bigoplus_{k=1}^n(V_{m+n+1-2k},\rho_{m+n+1-2k}).
\end{equation*}

As discussed earlier, given a spherically symmetric monopole with ADHM data $\hat{M}\in\mathcal{M}_{n,k}$, $M$ is found in the trivial summands of $(\hat{V},\hat{\rho})$. The following lemmas tell us where these trivial summands are, given a decomposition of $(V,\hat{\rho})$. First, we see where the trivial summands are when decomposed as a complex representation, providing us with the information needed for the real decomposition.
\begin{lemma}
Given $m\geq n\geq 1$, we have that $(V_m,\rho_m)\otimes_\mathbb{C} (V_n,\rho_n)\otimes_\mathbb{C} (V_3,\rho_3)$ has a single trivial summand when $m=n\geq 2$ or $m=n+2$. Otherwise, it does not have any such summands.\label{lemma:complextriv}
\end{lemma}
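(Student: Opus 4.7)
The plan is to count trivial summands by applying the Clebsch--Gordon decomposition twice. First I would decompose $(V_m,\rho_m)\otimes(V_n,\rho_n)$ using the formula given in the paper to get
\begin{equation*}
(V_m,\rho_m)\otimes(V_n,\rho_n)\simeq\bigoplus_{j}(V_{j},\rho_{j}),
\end{equation*}
where $j$ runs over $m-n+1,\,m-n+3,\,\ldots,\,m+n-1$. Since tensoring with $(V_3,\rho_3)$ distributes over direct sums, the number of trivial summands $(V_1,\rho_1)$ in the triple tensor product equals the sum over $j$ in this list of the multiplicity of $V_1$ in $V_j\otimes V_3$.

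Second, I would compute the multiplicity of $V_1$ in $V_j\otimes V_3$ for each $j\geq 1$. By Clebsch--Gordon, for $j\geq 3$ we have $V_j\otimes V_3\simeq V_{j-2}\oplus V_j\oplus V_{j+2}$, while for $j=1$ one obtains $V_3$ and for $j=2$ one obtains $V_2\oplus V_4$. Hence $V_1$ appears (with multiplicity $1$) if and only if $j=3$, so the question reduces to the multiplicity of $V_3$ inside $V_m\otimes V_n$.

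Third, I would read off when $3$ lies in the arithmetic progression $\{m-n+1,m-n+3,\ldots,m+n-1\}$. This requires $m-n$ to be even, $m-n+1\leq 3$, and $m+n-1\geq 3$. Combined with $m\geq n\geq 1$, the first two conditions force $m-n\in\{0,2\}$, and the last rules out the degenerate case $m=n=1$. Therefore the multiplicity of $V_3$ in $V_m\otimes V_n$, and thus the multiplicity of $V_1$ in the full triple tensor product, equals $1$ exactly when $m=n\geq 2$ or $m=n+2$, and $0$ otherwise.

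There is no real obstacle here; the only thing to be careful about is correctly handling the edge cases $j=1,2$ when tensoring with $V_3$ (where the generic formula $V_{j-2}\oplus V_j\oplus V_{j+2}$ does not apply) and the edge case $m=n=1$ where the progression $\{m-n+1,\ldots,m+n-1\}=\{1\}$ contains no $V_3$ even though $m=n$.
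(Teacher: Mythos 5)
Your proof is correct and follows essentially the same route as the paper's: both reduce the count of trivial summands to the multiplicity of $(V_3,\rho_3)$ in $(V_m,\rho_m)\otimes (V_n,\rho_n)$ by observing that $V_j\otimes V_3$ contains $V_1$ exactly when $j=3$, and then read off the answer from the Clebsch--Gordon decomposition, handling the $m=n=1$ edge case. No issues.
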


\begin{proof}
We first see what representations give a trivial summand when taken as a tensor product with $(V_3,\rho_3)$. We see
\begin{equation*}
(V_3,\rho_3)\otimes_\mathbb{C} (V_l,\rho_l)\simeq\left\{\begin{array}{c l}
(V_3,\rho_3) & \textrm{if $l=1$}\\
(V_4,\rho_4)\oplus(V_2,\rho_2) & \textrm{if $l=2$}\\
(V_{l+2},\rho_{l+2})\oplus (V_l,\rho_l)\oplus (V_{l-2},\rho_{l-2}) & \textrm{if $l\geq 3$}
\end{array}\right..
\end{equation*}
Therefore, we see that $(V_3,\rho_3)\otimes_\mathbb{C} (V_l,\rho_l)$ has a single trivial summand if and only if $l=3$. Thus, $(V_m,\rho_m)\otimes_\mathbb{C} (V_n,\rho_n)\otimes_\mathbb{C} (V_3,\rho_3)$ has a $(V_1,\rho_1)$ summand if and only if $(V_m,\rho_m)\otimes_\mathbb{C} (V_n,\rho_n)$ has a $(V_3,\rho_3)$ summand.

As we have
\begin{equation*}
(V_m,\rho_m)\otimes_\mathbb{C} (V_n,\rho_n)\simeq \bigoplus_{k=1}^n(V_{m+n+1-2k},\rho_{m+n+1-2k}),
\end{equation*}
we see that there is a single $(V_3,\rho_3)$ summand exactly when $m+n+1-2k=3$ for some $k\in\{1,\ldots,n\}$. Thus, $m+n$ must be even, so they must have the same parity. Furthermore, as $m+n+1-2k\geq m+1-n$, we must have $m+1-n\in\{1,3\}$ (if it is two, then $m$ and $n$ have different parity). The condition $m+1-n\in\{1,m\}$ is satisfied if and only if $m=n$ or $m=n+2$. 

However, as $m+n+1-2k\leq m+n-1$, we must have $m+n-1\geq 3$. Thus, $m+n\geq 4$. We see that if $m=n+2$, then $m+n\geq 4$, but if $m=n$, then we must have $m=n\geq 2$. In each case, we get a single $(V_3,\rho_3)$ summand, so the final product has a single $(V_1,\rho_1)$ summand. Otherwise, we get no such summand.
\end{proof}

We first investigate the trivial summand present in $(V_{n+2},\rho_{n+2})\otimes_\mathbb{C} (V_n^*,\rho_n^*)\otimes_\mathbb{C} (V_3,\rho_3)$.
\begin{definition}
Let $n\in\mathbb{N}_+$ and let $B^n$ be the generator of the unique trivial summand in the tensor product $(V_{n+2},\rho_{n+2}) \otimes_\mathbb{C}  (V_n^*,\rho_n^*)\otimes_\mathbb{C} (V_3,\rho_3)$. This generator is well-defined, up to a $\mathbb{C}^*$ factor, the choice of a scale leaves a $\mathrm{U}(1)$ factor. The generator can be viewed as an $\mathfrak{sp}(1)$-invariant triple $(B_1^n,B_2^n,B_3^n)$ of complex $(n+2)\times n$ matrices.
\end{definition}

Charbonneau et al. prove the following, giving us some useful identities for these maps~\cite[Theorem 4.8]{charbonneau_construction_2022}.
\begin{prop}
Given $n\in\mathbb{N}_+$, let $Y_i^+:=\rho_{n+2}(\upsilon_i)\in\mathfrak{su}(n+2)$ and $Y_i^-:=\rho_n(\upsilon_i)\in\mathfrak{su}(n)$. After potential rescaling, reducing the factor to $\mathrm{U}(1)$, we have the following identities (for all $i,j\in\{1,2,3\}$, where it applies)
\begin{subequations}
\begin{align}
Y_i^+B_j^n-B_j^nY_i^-&=\sum_{l=1}^3\epsilon_{ijl}B_l^n,\label{eq:defbi} \\
\sum_{l=1}^3 B_l^n(B_l^n)^\dagger&=I_{n+2},\label{eq:scalebi} \\
\sum_{l=1}^3 (B_l^n)^\dagger B_l^n &=\frac{n+2}{n}I_n, \label{eq:otherbiscale}\\
B_i^n(B_j^n)^\dagger -B_j^n(B_i^n)^\dagger&=-\frac{2}{n+1}\sum_{l=1}^3\epsilon_{ijl}Y_l^+, \\
(B_i^n)^\dagger B_j^n-(B_j^n)^\dagger B_i^n&=\frac{2(n+2)}{n(n+1)}\sum_{l=1}^3\epsilon_{ijl}Y_l^-, \\
Y_i^+B_j^n-Y_j^+B_i^n&=\frac{n+3}{2}\sum_{l=1}^3\epsilon_{ijl}B_l^n, \quad\textrm{and}\\
B_i^nY_j^--B_j^nY_i^-&=-\frac{n-1}{2}\sum_{l=1}^3\epsilon_{ijl}B_l^n.\label{eq:finalbi}
\end{align}
\end{subequations}\label{prop:Bi}
\end{prop}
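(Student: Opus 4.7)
The plan is to treat the triple $B^n$ as the vector $B^n = \sum_l B_l^n \otimes \upsilon_l$ spanning the unique trivial summand of $(V_{n+2},\rho_{n+2}) \otimes (V_n^*, \rho_n^*) \otimes (\mathfrak{sp}(1), \mathrm{ad})$, and to exploit Schur's lemma systematically. Write $Y_i^+ := \rho_{n+2}(\upsilon_i)$ and $Y_i^- := \rho_n(\upsilon_i)$. The defining property $\hat{\rho}(\upsilon_i)(B^n) = 0$ unpacks, using $[\upsilon_i, \upsilon_l] = \sum_m \epsilon_{ilm} \upsilon_m$ on the $\mathfrak{sp}(1)$ factor, to give exactly \eqref{eq:defbi}; this relation determines $B^n$ up to a single $\mathbb{C}^*$ factor, as asserted.

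For \eqref{eq:scalebi}--\eqref{eq:otherbiscale}, the operator $\sum_l B_l^n(B_l^n)^\dagger \in \mathrm{End}(V_{n+2})$ is $\mathfrak{sp}(1)$-equivariant, since a short calculation from \eqref{eq:defbi} gives $[Y_i^+, \sum_l B_l^n(B_l^n)^\dagger] = 0$ for every $i$. Because $V_{n+2}$ is irreducible, Schur's lemma forces this sum to be a scalar multiple of $I_{n+2}$. I would use the residual $\mathbb{C}^*$-rescaling freedom in $B^n$ to normalize that scalar to $1$, exhausting all but a $\mathrm{U}(1)$ phase. The same Schur argument on $V_n$ gives $\sum_l (B_l^n)^\dagger B_l^n = c\, I_n$, and equating $\mathrm{Tr}\bigl(\sum_l B_l^n(B_l^n)^\dagger\bigr) = \mathrm{Tr}\bigl(\sum_l (B_l^n)^\dagger B_l^n\bigr)$ pins $c = (n+2)/n$.

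The four remaining identities come in two groups. For the pair $Y_i^+ B_j^n - Y_j^+ B_i^n$ and $B_i^n Y_j^- - B_j^n Y_i^-$, each is antisymmetric in $(i,j)$ and therefore lies in the $V_3$-isotypic part of $V_3 \otimes \mathrm{Hom}(V_n, V_{n+2})$, whose only accessible generator by Clebsch--Gordan is $B_l^n$; hence each equals a scalar times $\sum_l \epsilon_{ijl} B_l^n$. Swapping $i \leftrightarrow j$ in \eqref{eq:defbi} and subtracting gives the first linear relation $(Y_i^+ B_j^n - Y_j^+ B_i^n) + (B_i^n Y_j^- - B_j^n Y_i^-) = 2\sum_l \epsilon_{ijl} B_l^n$. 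A Casimir computation --- evaluating $\sum_i(Y_i^+)^2 B_j^n - B_j^n \sum_i(Y_i^-)^2$ via the eigenvalues $-(n+1)(n+3)/4$ on $V_{n+2}$ and $-(n-1)(n+1)/4$ on $V_n$, and separately by iterating \eqref{eq:defbi} twice --- supplies the second, forcing the coefficients $(n+3)/2$ and $-(n-1)/2$. Similarly, the antisymmetric sum $Q_i := \sum_{lm}\epsilon_{ilm} B_l^n(B_m^n)^\dagger$ reduces by Schur to $c\, Y_i^+ \in \mathrm{End}(V_{n+2})$; computing $\sum_i Y_i^+ Q_i$ once as $c\, C_+$ and once by expanding through \eqref{eq:defbi}, \eqref{eq:scalebi}, and the contraction $\sum_{il}\epsilon_{ilm}\epsilon_{ilk} = 2\delta_{mk}$ gives a linear equation pinning $c = -2/(n+1)$; the mirror computation on $V_n$ yields the final coefficient $2(n+2)/(n(n+1))$.

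The main obstacle throughout is this last bookkeeping: Schur collapses each identity to one unknown scalar, but extracting the correct $n$-dependence requires careful tracking of signs --- in the bracket $[\upsilon_i,\upsilon_l]$, in the $\epsilon$-tensor contractions, and in the skew-Hermitian Casimir eigenvalues on $V_{n+2}$ versus $V_n$. Once all conventions are aligned, each scalar emerges as the unique solution of a small linear system, and the advertised identities drop out.
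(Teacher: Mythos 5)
Your proposal is correct, but note that the paper does not actually prove this proposition: it is imported from the cited Theorem 4.8 of Charbonneau et al., and the only related in-paper material is Appendix B, which takes the complementary route of computing the matrices $B_i^n$ explicitly in a weight basis built from a highest-weight vector and the lowering operators $S_\pm$ (and, e.g., extracts $|a_{n-1}|$ \emph{from} the third identity rather than deriving it). Your argument is a self-contained, basis-free Schur-lemma derivation, and the coefficient bookkeeping checks out: antisymmetrizing the first identity gives $\alpha+\beta=2$ for the two unknown scalars; the Casimir difference $-\tfrac{(n+1)(n+3)}{4}+\tfrac{(n-1)(n+1)}{4}=-(n+1)$ together with the contraction $\sum_{i,l}\epsilon_{ijl}\epsilon_{ilm}=-2\delta_{jm}$ gives $\beta-\alpha=-(n+1)$, hence $\alpha=\tfrac{n+3}{2}$, $\beta=-\tfrac{n-1}{2}$; and pairing $Q_m:=\sum_{i,j}\epsilon_{mij}B_i^n(B_j^n)^\dagger=cY_m^+$ against $Y_m^+$ yields $\tfrac{n+3}{2}I_{n+2}=c\sum_m(Y_m^+)^2$, i.e.\ $c=-\tfrac{2}{n+1}$, with the mirror computation on $V_n$ giving $\tfrac{2(n+2)}{n(n+1)}$. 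Two points should be made explicit in a write-up. First, before invoking uniqueness of the trivial summand you must note that each antisymmetrized combination --- $(i,j)\mapsto Y_i^+B_j^n-Y_j^+B_i^n$, or the triple $(Q_m)$ --- is itself an \emph{invariant} element of the relevant tensor product; this holds because each is obtained from the invariant $B^n$ (or $B^n\otimes\overline{B^n}$) by composing equivariant maps, but it is the hypothesis that makes Schur applicable, and in the $(Q_m)$ case one also needs that the unique invariant of $V_{n+2}\otimes V_{n+2}^*\otimes V_3$ is spanned by $(Y_1^+,Y_2^+,Y_3^+)$. Second, normalizing $\sum_l B_l^n(B_l^n)^\dagger$ to exactly $I_{n+2}$ (rather than a nonzero scalar) uses that this operator is positive semidefinite and nonzero, so the Schur scalar is a positive real killed by a real rescaling, leaving precisely the advertised $\mathrm{U}(1)$ ambiguity; and in the degenerate case $n=1$ the fifth identity is $0=0$ (both sides vanish since $Y_l^-=0$ and $V_1\otimes V_1^*\otimes V_3$ has no trivial summand), which your Schur argument covers even though the trace computation there is vacuous. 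With those sentences added, the proof is complete.
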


Above we determined where the trivial summands of $(\hat{V},\hat{\rho})$ would occur if we decomposed it as a complex representation. We use this information to do the same for the real decomposition.
\begin{lemma}
Given $m\geq n\geq 1$, we have that $(\mathbb{R}^m,\varrho_m)\otimes_\mathbb{R} (\mathbb{R}^n,\varrho_n)\otimes_\mathbb{R} (\mathbb{R}^3,\varrho_3)$ has a trivial summand when $m=n\geq 2$, both odd, or $m=n+2$, both odd, and four trivial summands when $m=n\geq 4$, both divisible by four, or $m=n+4$, both divisible by four. Otherwise, there are no trivial summands.\label{lemma:realtriv}
\end{lemma}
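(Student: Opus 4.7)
The plan is to reduce the real counting problem to the complex one handled in Lemma~\ref{lemma:complextriv}. The key observation is that for any finite-dimensional real $\mathfrak{sp}(1)$-representation $(W,\sigma)$, the multiplicity of the trivial summand equals $\dim_{\mathbb{R}} W^{\mathfrak{sp}(1)}$, and this number is preserved under complexification since $\dim_{\mathbb{C}}(\mathbb{C}\otimes_{\mathbb{R}}W)^{\mathfrak{sp}(1)} = \dim_{\mathbb{R}} W^{\mathfrak{sp}(1)}$ (complexification commutes with taking invariants). So it suffices to count the trivial summands in the complexified tensor product.

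First I would use the general identity
\begin{equation*}
\mathbb{C}\otimes_{\mathbb{R}}\bigl(\varrho_m \otimes_{\mathbb{R}} \varrho_n \otimes_{\mathbb{R}} \varrho_3\bigr) \;\simeq\; (\mathbb{C}\otimes_{\mathbb{R}} \varrho_m)\otimes_{\mathbb{C}} (\mathbb{C}\otimes_{\mathbb{R}} \varrho_n)\otimes_{\mathbb{C}} (\mathbb{C}\otimes_{\mathbb{R}} \varrho_3),
\end{equation*}
and then substitute the recipe recalled in the excerpt: $\mathbb{C}\otimes_{\mathbb{R}}\varrho_k \simeq V_k$ for $k$ odd, and $\mathbb{C}\otimes_{\mathbb{R}}\varrho_k \simeq V_{k/2}\oplus V_{k/2}$ for $k$ divisible by four. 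Since $3$ is odd, $\mathbb{C}\otimes_{\mathbb{R}}\varrho_3 \simeq V_3$.

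Then I would split into three cases on the parity type of $(m,n)$. If $m,n$ are both odd, the complexification is $V_m\otimes V_n\otimes V_3$, and Lemma~\ref{lemma:complextriv} directly gives exactly one trivial summand precisely when $m=n\geq 2$ or $m=n+2$ (both conditions forcing odd values $\geq 3$ and $\geq 3$ respectively). If both $m,n$ are divisible by four, distributing the tensor product yields
\begin{equation*}
V_{m/2}^{\oplus 2}\otimes V_{n/2}^{\oplus 2}\otimes V_3 \;\simeq\; \bigl(V_{m/2}\otimes V_{n/2}\otimes V_3\bigr)^{\oplus 4},
\end{equation*}
so Lemma~\ref{lemma:complextriv} applied to $(m/2,n/2)$ gives four trivial summands iff $m/2=n/2\geq 2$ or $m/2=n/2+2$, i.e.\ $m=n\geq 4$ or $m=n+4$. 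In the mixed case (one of $m,n$ odd, the other divisible by four), the complexification becomes $(V_m\otimes V_{n/2}\otimes V_3)^{\oplus 2}$ or $(V_{m/2}\otimes V_n\otimes V_3)^{\oplus 2}$; Lemma~\ref{lemma:complextriv} would require equality or a $\pm 2$ gap between an odd integer and an even one, which is impossible, so no trivial summand occurs.

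The main obstacle, if any, is a bookkeeping one: ensuring the four-fold multiplicity in the divisible-by-four case is correctly propagated and that the mixed-parity case is genuinely ruled out (not merely reduced to an empty condition). Once the complexification identity and the real-versus-complex invariant-count agreement are in place, the rest is a direct application of Lemma~\ref{lemma:complextriv}.
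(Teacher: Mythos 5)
Your proof is correct and follows essentially the same route as the paper: complexify, use $\mathbb{C}\otimes_{\mathbb{R}}\varrho_k\simeq V_k$ (for $k$ odd) or $V_{k/2}^{\oplus 2}$ (for $4\mid k$), and apply the complex counting lemma case by case on parity. The only difference is cosmetic: where you transfer the count back to $\mathbb{R}$ via the clean identity $\dim_{\mathbb{C}}(\mathbb{C}\otimes_{\mathbb{R}}W)^{\mathfrak{sp}(1)}=\dim_{\mathbb{R}}W^{\mathfrak{sp}(1)}$, the paper instead observes that the complexified product has only odd-dimensional (hence real-type) summands so the real and complex decompositions coincide — your justification is, if anything, the more robust of the two.
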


\begin{proof}
Note that if $m$ and $n$ are both odd, then the complexification of the real representation is isomorphic, as a complex representation, to $(V_m,\rho_m)\otimes_\mathbb{C} (V_n,\rho_n)\otimes_\mathbb{C} (V_3,\rho_3)$. By Lemma~\ref{lemma:complextriv}, we know that this tensor product of complex representations has a single trivial summand when $m=n\geq 2$ or $m=n+2$ and none otherwise. Furthermore, as this tensor product only contains odd dimensional summands, the representation is real, meaning that the real decomposition is the same as the complex. Thus, we have proven the first part of the lemma.

Suppose that $m$ and $n$ are both divisible by four. Then, as a complex representation, the complexification of the real representation is isomorphic to $\left(V_{m/2},\rho_{m/2}\right)^{\oplus 2}\otimes_\mathbb{C} \left(V_{n/2},\rho_{n/2}\right)^{\oplus 2}\otimes_\mathbb{C} (V_3,\rho_3)$. By Lemma~\ref{lemma:complextriv}, we know that this has four trivial summands when $\frac{m}{2}=\frac{n}{2}\geq 2$ or $\frac{m}{2}=\frac{n}{2}+2$ and none otherwise. Just as above, this decomposition contains only odd dimensional summands, so the representation is real and the real and complex decompositions coincide. Thus, we have proven the second part of the lemma.

Finally, suppose that one of $n$ and $m$ is divisible by four and the other is odd. Then, decomposing the complexification of the real representation as a complex representation and expanding the tensor product, we find that the tensor product contains only even dimensional summands, meaning there are no trivial summands, proving the lemma.
\end{proof}

Using Lemma~\ref{lemma:realtriv}, we investigate the trivial summands of $(\mathbb{R}^m,\varrho_m)\otimes_\mathbb{R} (\mathbb{R}^n,\varrho_n)\otimes_\mathbb{R} (\mathbb{R}^3,\varrho_3)$, starting with the case $m\neq n$.
\begin{lemma}
Given $n\in\mathbb{N}_+$ odd, let $Y_i^+:=\varrho_{n+2}(\upsilon_i)\in\mathfrak{so}(n+2)$ and $Y_i^-:=\varrho_n(\upsilon_i)\in\mathfrak{so}(n)$. Then there is a unique choice of $\mathrm{U}(1)$ factor, up to a choice of sign, such that the $B_i^n$ matrices are real. That is, given this choice of factor, $B^n$ spans the unique trivial summand of $(\mathbb{R}^{n+2},\varrho_{n+2})\otimes_\mathbb{R} ((\mathbb{R}^n)^*,\varrho_n^*)\otimes_\mathbb{R} (\mathbb{R}^3,\varrho_3)$

Given $n\in\mathbb{N}_+$ divisible by four, let $Y_i^+:=\varrho_{n+4}(\upsilon_i)\in\mathfrak{so}(n+4)$ and $Y_i^-:=\varrho_{n}(\upsilon_i)\in\mathfrak{so}(n)$. Let $y_i^+:=\rho_{(n+4)/2}(\upsilon_i)\in\mathfrak{su}\left(\frac{n+4}{2}\right)$ and $y_i^-:=\rho_{n/2}(\upsilon_i)\in\mathfrak{su}\left(\frac{n}{2}\right)$. As the complexification of $(\mathbb{R}^{n+4},\varrho_{n+4})$ is isomorphic to $\left(V_{(n+4)/2},\rho_{(n+4)/2}\right)^{\oplus 2}$ as complex representations, and similarly for $(\mathbb{R}^n,\varrho_n)$, there exists $U_\pm \in\mathrm{SU}\left(\frac{n+2\pm 2}{2}\right)$ such that
\begin{equation*}
Y_i^\pm =U_\pm^\dagger \begin{bmatrix}
y_i^\pm & 0 \\ 0 & y_i^\pm 
\end{bmatrix}U_\pm.
\end{equation*}
We can find four linearly independent triples of real matrices spanning the four trivial summands of $(\mathbb{R}^m,\varrho_m)\otimes_\mathbb{R} (\mathbb{R}^n,\varrho_n)\otimes_\mathbb{R} (\mathbb{R}^3,\varrho_3)$. For some choices of $\alpha,\beta,\gamma,\delta\in\mathbb{C}$, these linearly independent triples of real matrices are given by
\begin{equation}
\left(U_+^\dagger \begin{bmatrix}
\alpha B_i^{\frac{n}{2}} & \beta B_i^{\frac{n}{2}} \\
\gamma B_i^{\frac{n}{2}} & \delta B_i^{\frac{n}{2}}
\end{bmatrix}U_-\right)_{i=1}^3.\label{eq:spanBis}
\end{equation}\label{lemma:realBitriv}
\end{lemma}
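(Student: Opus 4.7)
The plan is to deduce both parts by relating the real tensor product decompositions to the complex ones already resolved by Proposition~\ref{prop:Bi} and Lemma~\ref{lemma:complextriv}, and then extracting reality conditions via complex conjugation on the complexification.

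For odd $n$, I would first observe that since $n$, $n+2$, and $3$ are all odd, each factor complexifies to the corresponding irreducible complex representation, so
\begin{equation*}
\bigl((\mathbb{R}^{n+2},\varrho_{n+2})\otimes((\mathbb{R}^n)^*,\varrho_n^*)\otimes(\mathbb{R}^3,\varrho_3)\bigr)\otimes_\mathbb{R}\mathbb{C}\simeq (V_{n+2},\rho_{n+2})\otimes(V_n^*,\rho_n^*)\otimes(V_3,\rho_3).
\end{equation*}
The right-hand side has a unique one-dimensional trivial summand, spanned by $B^n$. By Lemma~\ref{lemma:realtriv} the real tensor product also has a unique one-dimensional trivial summand, whose complexification must therefore coincide with this complex line. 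Hence any nonzero element of the real summand is of the form $c(B_1^n,B_2^n,B_3^n)$ for some $c\in\mathbb{C}^*$, and the existence of a real generator forces $c$ to be chosen so that all three $cB_i^n$ are real matrices. The normalization~\eqref{eq:scalebi} is preserved by $c\in\mathrm{U}(1)$, and the reality condition further collapses this remaining $\mathrm{U}(1)$ freedom down to $\{\pm 1\}$.

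For $n$ divisible by four, I would use that $(\mathbb{R}^{n+4},\varrho_{n+4})\otimes_\mathbb{R}\mathbb{C}\simeq(V_{(n+4)/2},\rho_{(n+4)/2})^{\oplus 2}$ and similarly for $(\mathbb{R}^n,\varrho_n)$, while $(\mathbb{R}^3,\varrho_3)\otimes_\mathbb{R}\mathbb{C}\simeq (V_3,\rho_3)$. Because $\tfrac{n+4}{2}=\tfrac{n}{2}+2$, the complexified tensor product decomposes into four copies of $(V_{(n+4)/2},\rho_{(n+4)/2})\otimes (V_{n/2}^*,\rho_{n/2}^*)\otimes(V_3,\rho_3)$, each contributing one trivial summand spanned by $B^{n/2}$ (Lemma~\ref{lemma:complextriv}). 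In the block-diagonal basis produced by $U_\pm$, the general element of the four-dimensional complex space of trivial summands has block form
\begin{equation*}
\begin{bmatrix}\alpha B_i^{n/2} & \beta B_i^{n/2} \\ \gamma B_i^{n/2} & \delta B_i^{n/2}\end{bmatrix},
\end{equation*}
and conjugating back by $U_+^\dagger(\cdot)U_-$ yields~\eqref{eq:spanBis}. Since Lemma~\ref{lemma:realtriv} supplies four linearly independent real trivial summands inside the real tensor product, the real subspace of trivial summands has real dimension four, and so there are four linearly independent tuples $(\alpha,\beta,\gamma,\delta)\in\mathbb{C}^4$ making~\eqref{eq:spanBis} a real-matrix triple.

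The hard part will be the second case: I must verify that the complex-conjugation real structure on the four-dimensional complex trivial-summand space cuts out a genuine four-real-dimensional fixed subspace rather than, for instance, a quaternionic structure containing no real-matrix elements. The dimension count from Lemma~\ref{lemma:realtriv} closes this gap at the level of the real tensor product; once granted, extracting the explicit form~\eqref{eq:spanBis} is a matter of reading off the block decomposition against the identifications $Y_i^\pm = U_\pm^\dagger\,\mathrm{diag}(y_i^\pm,y_i^\pm)\,U_\pm$.
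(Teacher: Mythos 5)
Your proposal is correct and follows essentially the same route as the paper: both parts rest on comparing the complexified real tensor product with the complex decomposition from Lemma~\ref{lemma:complextriv}, using Lemma~\ref{lemma:realtriv} to supply the real trivial summands (one for $n$ odd, four for $n$ divisible by four), and then locating them inside the complex solution space spanned by $B^n$ (respectively, the block matrices built from $B^{n/2}$ conjugated by $U_\pm$). The "hard part" you flag — ruling out a quaternionic rather than real structure on the four-dimensional space — is closed exactly as in the paper, by the dimension count of real trivial summands from Lemma~\ref{lemma:realtriv}.
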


\begin{proof}
Suppose that $n$ is odd. Note that $Y_i^\pm\in\mathfrak{su}(n+1\pm 1)$. We are looking for a non-zero triple $(C_1^n,C_2^n,C_3^n)$ of real matrices that satisfy \eqref{eq:defbi}, replacing $B_i^n$ by $C_i^n$. Such a triple exists. Indeed, from Lemma~\ref{lemma:realtriv}, we see that the tensor product $(\mathbb{R}^{n+2},\varrho_{n+2})\otimes_\mathbb{R} ((\mathbb{R}^n)^*,\varrho_n^*)\otimes_\mathbb{R} (\mathbb{R}^3,\varrho_3)$ contains a single trivial summand. Any element that spans this real summand satisfies \eqref{eq:defbi}. Hence, any such element is in the span of $B^n$, as the solution space to \eqref{eq:defbi} is a one dimensional complex vector space. Therefore, the $\mathrm{U}(1)$ factor that determines $B_i^n$ can be chosen such that the $B_i^n$ matrices are all real.

Suppose instead that $n$ is divisible by four. Note that for $\alpha,\beta,\gamma,\delta\in\mathbb{C}$,
\begin{equation}
Y_i^+ U_+^\dagger \begin{bmatrix}
\alpha B_j^{\frac{n}{2}} & \beta B_j^{\frac{n}{2}} \\
\gamma B_j^{\frac{n}{2}} & \delta B_j^{\frac{n}{2}}
\end{bmatrix}U_--U_+^\dagger \begin{bmatrix}
\alpha B_j^{\frac{n}{2}} & \beta B_j^{\frac{n}{2}} \\
\gamma B_j^{\frac{n}{2}} & \delta B_j^{\frac{n}{2}}
\end{bmatrix}U_-Y_i^-=\sum_{l=1}^3 \epsilon_{ijl}U_+^\dagger \begin{bmatrix}
\alpha B_l^{\frac{n}{2}} & \beta B_l^{\frac{n}{2}} \\
\gamma B_l^{\frac{n}{2}} & \delta B_l^{\frac{n}{2}}
\end{bmatrix}U_-.\label{eq:fourdiv}
\end{equation}
By Lemma~\ref{lemma:realtriv}, there are four linearly independent triples of real matrices satisfying \eqref{eq:defbi}. That the solution space is four dimensional comes from counting the trivial summands of $(\mathbb{R}^{n+4},\varrho_{n+4})\otimes_\mathbb{R} ((\mathbb{R}^n)^*,\varrho_n^*)\otimes_\mathbb{R} (\mathbb{R}^3,\varrho_3)$. After complexifying this tensor product, we see that it is isomorphic, as a complex representation, to $\left(V_{(n+4)/2},\rho_{(n+4)/2}\right)^{\oplus 2}\otimes_\mathbb{C} \left(V_{\frac{n}{2}}^*,\rho_{\frac{n}{2}}^*\right)^{\oplus 2}\otimes_\mathbb{C} (V_3,\rho_3)$. Given \eqref{eq:fourdiv}, we see that the direct sum of the four trivial summands of the above complex tensor product are given by
\begin{equation}
E:=\left\{\left(U_+^\dagger \begin{bmatrix}
\alpha B_i^{\frac{n}{2}} & \beta B_i^{\frac{n}{2}} \\
\gamma B_i^{\frac{n}{2}} & \delta B_i^{\frac{n}{2}}
\end{bmatrix}U_-\right)_{i=1}^3\Biggr\lvert\hspace{0.5ex} \alpha,\beta,\gamma,\delta\in\mathbb{C}\right\}.\label{eq:spanfourdiv}
\end{equation}
As the four linearly independent triples of real matrices above satisfy the same equation as the matrices in $E$, they belong to $E$. Thus, for some choices of the complex parameters in $E$, we can find four linearly independent triples of real matrices.
\end{proof}

\begin{note}
In the $n$ divisible by four case above, we cannot say what relationship between the complex parameters $\alpha,\beta,\gamma,\delta$ gives a triple of real matrices without being given specific $U_\pm$. Indeed, suppose the triple $Y_i^\pm$ generates $(\mathbb{R}^{n+2\pm 2},\varrho_{n+2\pm 2})$, $y_i^\pm$ generates $\left(V_{(n+2\pm 2)/2},\rho_{(n+2\pm 2)/2}\right)$, and $U_\pm$ satisfy
\begin{equation*}
Y_i^\pm =U_\pm^\dagger \begin{bmatrix}
y_i^\pm & 0 \\ 0 & y_i^\pm 
\end{bmatrix}U_\pm.
\end{equation*}
These relationships are still satisfied when replacing $U_\pm$ by $\tilde{U}_\pm:=\lambda_\pm U_\pm$ for any $\lambda_\pm\in\{\pm 1,\pm i\}$. While these $\mathrm{U}(1)$ factors cancel to create the $Y_i^\pm$ matrices from the $y_i^\pm$, they do not cancel when searching for our triples of real matrices, affecting the relationships between the $\alpha,\beta,\gamma,\delta$.
\end{note}

We now investigate the rest of the trivial summands of $(\hat{V},\hat{\rho})$. 
\begin{lemma}
Given $n\in\mathbb{N}_+$, let $Y_i:=\varrho_n(\upsilon_i)\in\mathfrak{so}(n)$. If $n$ is odd, the trivial summand of $(\mathbb{R}^n,\varrho_n)\otimes_\mathbb{R} ((\mathbb{R}^n)^*,\varrho_n^*)\otimes_\mathbb{R} (\mathbb{R}^3,\varrho_3)$ is spanned by the triple $(Y_1,Y_2,Y_3)$. 

If $n$ is divisible by four, Lemma~\ref{lemma:realtriv} tells us that there are four trivial summands of $(\mathbb{R}^n,\varrho_n)\otimes_\mathbb{R} ((\mathbb{R}^n)^*,\varrho_n^*)\otimes_\mathbb{R} (\mathbb{R}^3,\varrho_3)$. Let $y_i:=\rho_{n/2}(\upsilon_i)\in\mathfrak{su}\left(\frac{n}{2}\right)$ induce the irreducible complex $\frac{n}{2}$-representation. Let $U\in\mathrm{SU}\left(\frac{n}{2}\right)$ such that 
\begin{equation*}
Y_i=U^\dagger \begin{bmatrix}
y_i & 0 \\ 0 & y_i
\end{bmatrix}U.
\end{equation*}
The trivial summands are spanned by the following triples of real matrices
\begin{equation}
(Y_i)_{i=1}^3, \left(Y_iU^\dagger \begin{bmatrix}
iy_i & 0 \\ 0 & -iy_i
\end{bmatrix}U\right)_{i=1}^3,
\left(Y_iU^\dagger \begin{bmatrix}
0 & y_i \\ -y_i & 0
\end{bmatrix}U\right)_{i=1}^3,
\left(Y_iU^\dagger \begin{bmatrix}
0 & iy_i \\ iy_i & 0
\end{bmatrix}U\right)_{i=1}^3.\label{eq:commutesYi}
\end{equation}\label{lemma:realYitriv}
\end{lemma}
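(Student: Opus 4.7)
For $n$ odd, the spherical-symmetry equation $[Y_l, A_n] = \sum_m \epsilon_{lnm} A_m$ is satisfied by the triple $(Y_1, Y_2, Y_3)$ immediately, because $\varrho_n$ is a Lie algebra homomorphism: $[Y_l, Y_n] = \varrho_n([\upsilon_l, \upsilon_n]) = \epsilon_{lnm} Y_m$. Lemma~\ref{lemma:realtriv} guarantees that the trivial summand is one-dimensional for odd $n \geq 3$, and faithfulness of $\varrho_n$ forces $(Y_i) \neq 0$, so the triple spans. (For $n = 1$ the trivial summand is zero, consistent with $Y_i = 0$.)

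For $n$ divisible by four, my plan is to pass to the complexification, where $(\mathbb{C}^n, \varrho_n \otimes \mathrm{id}) \simeq V_{n/2}^{\oplus 2}$, with the intertwiner $U$ producing $\tilde Y_i := I_2 \otimes y_i = U Y_i U^\dagger$. The central observation is that for any $J \in \mathrm{Mat}(2, \mathbb{C})$, the triple $A_i^{(J)} := U^\dagger (J \otimes y_i) U$ lies in the trivial summand of the complexified $\hat V$: indeed
\[
[I_2 \otimes y_l,\, J \otimes y_n] = J \otimes [y_l, y_n] = \epsilon_{lnk}\, (J \otimes y_k),
\]
so conjugating by $U$ gives $[Y_l, A_n^{(J)}] = \sum_m \epsilon_{lnm} A_m^{(J)}$. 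A Clebsch--Gordan count (using Lemma~\ref{lemma:complextriv}) then shows that the only $V_1$-summands in $\mathrm{End}_\mathbb{C}(\mathbb{C}^n) \otimes V_3 \simeq \mathrm{Mat}(2, \mathbb{C}) \otimes \mathrm{End}_\mathbb{C}(V_{n/2}) \otimes V_3$ come from the unique $V_3$-copy in $\mathrm{End}_\mathbb{C}(V_{n/2})$ (spanned by $y_1, y_2, y_3$) paired with the $V_3$-factor, so $J \mapsto (A_i^{(J)})_{i=1}^3$ surjects onto the $4$-complex-dimensional space of complexified trivial summands.

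To cut down to the real trivial summands, note that the real structure on $\mathbb{R}^n \subset \mathbb{C}^n$ transports via $U$ to a real structure on $\mathrm{Mat}(2, \mathbb{C}) \otimes \mathrm{End}_\mathbb{C}(V_{n/2})$; because $V_{n/2}$ is quaternionic for $n/2$ even, the induced real structure on the $\mathrm{Mat}(2, \mathbb{C})$ factor has fixed locus exactly the quaternion subalgebra $\mathbb{H} \subset \mathrm{Mat}(2, \mathbb{C})$, with standard $\mathbb{R}$-basis $\{I,\, i\sigma_1,\, i\sigma_2,\, i\sigma_3\}$. These are precisely the four middle matrices appearing in the statement: $J = I$ produces the triple $(Y_i)$, while the three Pauli-imaginary choices produce the remaining three triples. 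Linear independence over $\mathbb{R}$ is inherited from the linear independence of the quaternionic basis in $\mathrm{Mat}(2, \mathbb{C})$, and the dimension count from Lemma~\ref{lemma:realtriv} confirms these four span.

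The main technical obstacle I anticipate is pinning down the induced real structure on $\mathrm{Mat}(2, \mathbb{C})$ and verifying that its fixed locus is exactly $\mathbb{H}$. This requires carefully tracking how the quaternionic structure on $V_{n/2}$ interacts with the splitting $\mathbb{C}^n \simeq V_{n/2}^{\oplus 2}$ and the conjugation by $U$; the standard form $j(v_1, v_2) = (-\overline{v_2}, \overline{v_1})$ for the quaternionic structure on $\mathbb{C}^2$ should yield the reality constraint $J = \begin{bmatrix} a & b \\ -\overline{b} & \overline{a} \end{bmatrix}$, which is solved by the quaternionic basis. Once this is set up, matching the resulting four $J$'s to the four explicit block matrices in the statement is a routine repackaging.
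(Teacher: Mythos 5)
Your treatment of the odd case is complete and correct, and is in fact slightly more direct than the paper's: the paper reaches the same conclusion by identifying the endomorphism ring of the real irreducible representation with $\mathbb{R}$ via Schur's Lemma, whereas you simply exhibit the nonzero invariant triple and quote the dimension count of Lemma~\ref{lemma:realtriv}. For $n$ divisible by four, your scaffolding is also sound: the identity $[I_2\otimes y_l,\,J\otimes y_m]=\sum_k\epsilon_{lmk}\,J\otimes y_k$, the Clebsch--Gordan count showing that $J\mapsto(A_i^{(J)})$ is a bijection onto the four-complex-dimensional space of complexified invariants, and the observation that the real invariants are the fixed points of an induced antilinear involution on the $\mathrm{Mat}(2,\mathbb{C})$ factor.

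However, the step you defer is precisely where the content of the second half of the lemma lies, so as written there is a genuine gap. What your argument establishes is that the real invariant triples are $\{(A_i^{(J)}) : J\in\mathcal{Q}\}$ for \emph{some} four-real-dimensional real form $\mathcal{Q}\subseteq\mathrm{Mat}(2,\mathbb{C})$; to recover the four specific triples in the statement you must show that $\mathcal{Q}$ is the standard quaternion subalgebra spanned by $I,\ i\sigma_1,\ i\sigma_2,\ i\sigma_3$, equivalently that those four block matrices conjugated by $U$ are genuinely real. Your plan takes the transported real structure to be $\tau\otimes j$ with $\tau$ and $j$ in standard form, but $U$ is only determined up to the commutant $\mathrm{U}(2)\otimes I$, so this normalization needs justification --- for instance, verify the claim for one explicitly constructed intertwiner and note that conjugation by a unitary preserves $\mathbb{R}I\oplus\mathfrak{su}(2)$, so the answer is independent of the choice. (One should also rule out the other real form $\mathrm{Mat}(2,\mathbb{R})$, which is immediate from $\tau^2=-1$.) The paper closes this same gap by a different, more hands-on route: it writes a real commuting matrix $X$ as $\psi_0(X)I+A+B$ with $A$ antisymmetric and $B$ traceless symmetric, shows $A$ and $B$ each commute with the $Y_i$, plays the identity $(X-\psi_0(X)I)^2=-\alpha I$ with $\alpha\geq 0$ (coming from the quaternionic structure of the endomorphism ring) against the positive semi-definiteness of $B^2$ to force $B=0$, and then matches the two remaining complex parameters against the four-dimensional endomorphism ring. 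Either route works, but one of them has to be carried out.
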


\begin{proof}
Suppose that $X$ is a real matrix commuting with $Y_1,Y_2,Y_3$. Then
\begin{equation*}
[Y_i,Y_jX]=[Y_i,Y_j]X=\sum_{l=1}^3 \epsilon_{ijl}Y_lX.
\end{equation*}
Thus, the triple $(Y_iX)$ is a trivial summand of $(\mathbb{R}^n,\varrho_n)\otimes_\mathbb{R} ((\mathbb{R}^n)^*,\varrho_n^*)\otimes_\mathbb{R} (\mathbb{R}^3,\varrho_3)$. Below, we identify exactly what matrices commute with real irreducible representations. These provide us with exactly enough matrices to span the rest of the trivial summands of $(\hat{V},\hat{\rho})$.

Let $\mathfrak{g}$ be a Lie algebra and $(V,\rho)$ an irreducible representation, over a field $\mathbb{F}$. An endomorphism of the representation is a morphism $f\colon V\rightarrow V$ with $f$ a linear map such that for all $x\in\mathfrak{g}$ and $v\in V$,
\begin{equation*}
\rho(x)(f(v))=f(\rho(x)(v)).
\end{equation*}
Schur's Lemma tells us that the only endomorphisms are the zero morphism or an isomorphism~\cite[\S II, Lemma 1.10]{brocker}. This gives the endomorphism ring of the representation the structure of a division algebra over $\mathbb{F}$. 

Let $f$ be an endomorphism. As $f$ and $\rho(x)$ are endomorphism, we can associate them to matrices. As matrices, $[f,\rho(x)]=0$ for all $x\in\mathfrak{g}$. That is, the endomorphism ring is exactly the set of matrices that commute with the representation, and we will think of endomorphisms as these matrices going forward. Let $(\tilde{V},\tilde{\rho}):=(V,\rho)\otimes_\mathbb{F} (V^*,\rho^*)$. Note that for $\upsilon\in\mathfrak{sp}(1)$, $\tilde{\rho}(\upsilon)$ acts on $f\in V\otimes_\mathbb{F} V^*$ as $\tilde{\rho}(\upsilon)(f)=[\rho(\upsilon),f]$. Thus, $f$ is in the endomorphism ring if and only if $f$ is in a trivial summand of $(V,\rho)\otimes_\mathbb{F} (V^*,\rho^*)$. We are interested in the case $\mathfrak{g}=\mathfrak{sp}(1)$ and $\mathbb{F}=\mathbb{R}$.

The only real division algebras are $\mathbb{R},\mathbb{C},$ and $\mathbb{H}$. In the case $n$ is odd, we have that $(V,\rho)\otimes_\mathbb{R} (V^*,\rho^*)$ contains a single trivial summand, as the complexification of $(V,\rho)$ is irreducible as a complex representation. Thus, the endomorphism ring is isomorphic to $\mathbb{R}$. As the isomorphism sends $I_k$ to $1$, the endomorphism ring is just the span of the identity matrix.

In the case $n$ is divisible by four, we have that $(V,\rho)\otimes_\mathbb{R} (V^*,\rho^*)$ contains four trivial summands, as the complexification of $(V,\rho)$ is isomorphic to $\left(V_{n/2},\rho_{n/2}\right)^{\oplus 2}$. Thus, the endomorphism ring is isomorphic to $\mathbb{H}$. We show that if $X$ is a $n\times n$ real matrix commuting with $\rho(x)$ for all $x\in\mathfrak{sp}(1)$, then $X-\frac{\mathrm{Tr}(X)}{n}I_n\in\mathfrak{so}(n)$.

We have that there is some ring isomorphism $\psi\colon \mathrm{End}(V,\rho)\rightarrow\mathbb{H}$. Let $\psi_0(X)$ be the real part of $\psi(X)$. Then $\psi(X-\psi_0(X)I_n)\in\mathfrak{sp}(1)$, so
\begin{equation*}
\psi((X-\psi_0(X)I_n)^2)=\psi(X-\psi_0(X)I_n)^2=-|\psi(X-\psi_0(X)I_n)|^2\leq 0.
\end{equation*}
Let $\alpha:=|\psi(X-\psi_0(X)I_n)|^2\geq 0$. Then
\begin{equation*}
(X-\psi_0(X)I_n)^2=-\alpha I_n.
\end{equation*}

We know that the ring of $n\times n$ real matrices decomposes as 
\begin{equation*}
\mathrm{Mat}(n,n,\mathbb{R})\simeq \mathrm{Span}_{\mathbb{R}}(I_n)\oplus {\bigwedge}^{2\hspace{0.5ex}} \mathbb{R}\oplus \mathrm{Sym}_0(\mathbb{R}^n).
\end{equation*}
Thus, given $X$, there is a unique antisymmetric matrix $A$, a unique traceless, symmetric matrix $B$, and a unique $\beta\in\mathbb{R}$ such that $X-\psi_0(X)I_n=\beta I_n+A+B$. As $\psi(X-\psi_0(X)I_n)\in\mathfrak{sp}(1)$, we know that $\beta=0$, as $A$ and $B$ give us something in $\mathfrak{sp}(1)$. Thus, $X-\psi_0(X)I_n=A+B$. 

Note that $[Y_i,X]=0$ for all $i$. Thus, $[Y_i,A+B]=0$ for all $i$, so $[Y_i,A]=-[Y_i,B]$. Note that as $A$ and $Y_i$ are anti-symmetric, the left-hand-side is anti-symmetric. As $Y_i$ is anti-symmetric and $B$ is symmetric, the right-hand-side is symmetric. But the only symmetric and anti-symmetric matrix is the zero matrix. Therefore, $[Y_i,A]=0=[Y_i,B]$. Thus, $A$ and $B$ correspond to endomorphisms of $(V,\rho)$ themselves, just like $X$. We show that $B$ must vanish.

Thus, just as we did above for $X-\psi_0(X)I_n$, $B^2=-\gamma I_n$ for some $\gamma\geq 0$. Note that $\psi_0(B)=0$ as $B$ is traceless. As $B$ is symmetric, for all $v\in V$, $v^TB^2v=|Bv|^2\geq 0$, so $B^2$ is positive semi-definite. Note that $-\gamma I_n$ is negative semi-definite. Hence, both sides must vanish, so $B^2=0$. Hence, $0=v^TB^2v=|Bv|^2$, so $B=0$. Therefore, $X=\psi_0(X)I_n+A$. Taking the trace, we know that $\psi_0(X)=\frac{\mathrm{Tr}(X)}{n}$.

We now show exactly what $X$ looks like. Note that for all $a,b,c,d\in\mathbb{C}$ and $i\in\{1,2,3\}$
\begin{equation*}
\left[U^\dagger \begin{bmatrix}
aI_{n/2} & bI_{n/2} \\ cI_{n/2} & dI_{n/2}
\end{bmatrix}U,Y_i\right]=0.
\end{equation*} 
As the space of such matrices is four-dimensional, some choice of these parameters gives us $X=U^\dagger \begin{bmatrix}
aI_{n/2} & bI_{n/2} \\ cI_{n/2} & dI_{n/2}
\end{bmatrix}U$. From above, we know that $X-\frac{\mathrm{Tr}(X)}{n}I_n\in\mathfrak{so}(n)$. Hence, 
\begin{equation*}
X-\frac{\mathrm{Tr}(X)}{n}I_n=U^\dagger \begin{bmatrix}
\left(a-\frac{\mathrm{Tr}(X)}{n}\right)I_{n/2} & bI_{n/2} \\ cI_{n/2} & \left(d-\frac{\mathrm{Tr}(X)}{n}\right)I_{n/2}
\end{bmatrix}U\in\mathfrak{so}(n)\subseteq \mathfrak{su}(n).
\end{equation*}
As this matrix is to be traceless, we see that $\frac{n}{2}(a+d)=\mathrm{Tr}(X)$. Additionally, as the matrix is skew-adjoint,
\begin{equation*}
\begin{bmatrix}
\left(a^\dagger-\frac{\mathrm{Tr}(X)}{n}\right)I_{n/2} & c^\dagger I_{n/2} \\ b^\dagger I_{n/2} & \left(d^\dagger-\frac{\mathrm{Tr}(X)}{n}\right)I_{n/2}
\end{bmatrix}+ \begin{bmatrix}
\left(a-\frac{\mathrm{Tr}(X)}{n}\right)I_{n/2} & bI_{n/2} \\ cI_{n/2} & \left(d-\frac{\mathrm{Tr}(X)}{n}\right)I_{n/2}
\end{bmatrix}=0.
\end{equation*}
Hence, we see that $c=-b^\dagger$, $a^\dagger +a=2\frac{\mathrm{Tr}(X)}{n}$, and $d^\dagger+d=2\frac{\mathrm{Tr}(X)}{n}$. Combining these equations, we see that $\mathrm{Re}(a)=\mathrm{Re}(d)=\frac{\mathrm{Tr}(X)}{n}$ and $\mathrm{Im}(a)=-\mathrm{Im}(d)$. Therefore, we have that $d=a^\dagger$. Thus,
\begin{equation*}
X=U^\dagger \begin{bmatrix}
aI_{n/2} & bI_{n/2} \\ -b^\dagger I_{n/2} & a^\dagger I_{n/2}
\end{bmatrix}U.
\end{equation*}
Hence, $X$ has the desired form. Moreover, we see that we have two complex parameters, equivalently four real parameters, so all such matrices of this form are real and commute with the representation.
\end{proof}

Now that we know exactly what spans the trivial summands of $(\hat{V},\hat{\rho})$, the following theorem tells us exactly what form $M$ takes if $\hat{M}$ is spherically symmetric.
\begin{theorem}[Structure Theorem]
Let $\hat{M}\in\mathcal{M}_{n,k}$ be spherically symmetric. Theorem~\ref{thm:sphersym} tells us that the monopole corresponding with $\hat{M}$ is generated by a real representation $(V,\rho)$ of $\mathfrak{sp}(1)$, which we can decompose as
\begin{equation}
(V,\rho)\simeq \bigoplus_{a=1}^m (\mathbb{R}^{n_a},\varrho_{n_a}).\label{eq:rhodecomp}
\end{equation}
Without loss of generality, we may assume that $n_a\geq n_{a+1}$, for all $a\in\{1,\ldots,m-1\}$. 

Let $Y_{i,a}:=\varrho_{n_a}(\upsilon_i)\in\mathfrak{so}(n_a)$ and
\begin{equation*}
Y_i:=\mathrm{diag}(Y_{i,1},\ldots,Y_{i,m})\in\mathfrak{so}(k).
\end{equation*}
Then $Y_i$ induces $(V,\rho)$. For $n_a$ is divisible by four, let $y_{i,a}:=\rho_{\frac{n_a}{2}}(\upsilon_i)\in\mathfrak{su}\left(\frac{n_a}{2}\right)$ induce the irreducible $\frac{n_a}{2}$ complex representation. Then there is some $U_{n_a}\in\mathrm{SU}(n_a)$ such that 
\begin{equation*}
Y_{i,a}=U_{n_a}^\dagger \begin{bmatrix}
y_{i,a} & 0 \\ 0 & y_{i,a}
\end{bmatrix}U_{n_a}.
\end{equation*}

Using the decomposition of $(V,\rho)$ given in \eqref{eq:rhodecomp}, we have $(M_i)_{ab}\in\mathbb{R}^{n_a}\otimes_\mathbb{R} (\mathbb{R}^{n_b})^*$ such that,
\begin{equation*}
M_i=\begin{bmatrix}
(M_i)_{11} & \cdots & (M_i)_{1m} \\
\vdots & \ddots & \vdots \\
(M_i)_{m1} & \cdots & (M_i)_{mm}
\end{bmatrix}.
\end{equation*}

Then, up to a $\rho$-invariant gauge and for all $a,b\in\{1,\ldots,m\}$ we have that
\begin{enumerate}
\item[(1)] if $a=b$ with $n_a$ divisible by four, then $\exists \kappa_{a,1},\kappa_{a,2},\kappa_{a,3}\in\mathbb{R}$ such that 
\begin{equation*}
(M_i)_{aa}=Y_{i,a}U_{n_a}^\dagger \begin{bmatrix}
\kappa_{a,1}iI_{\frac{n_a}{2}} & (\kappa_{a,2}+i\kappa_{a,3})I_{\frac{n_a}{2}} \\
(-\kappa_{a,2}+i\kappa_{a,3})I_{\frac{n_a}{2}} & -\kappa_{a,1}iI_{\frac{n_a}{2}}
\end{bmatrix}U_{n_a};
\end{equation*}
\item[(2)] if $a< b$ and $n_a=n_b$ is divisible by four, then $\exists\kappa_{a,b,0},\kappa_{a,b,1},\kappa_{a,b,2},\kappa_{a,b,3}\in\mathbb{R}$ such that 
\begin{align*}
(M_i)_{ab}&=Y_{i,a}U_{n_a}^\dagger \begin{bmatrix}
(\kappa_{a,b,0}+\kappa_{a,b,1}i)I_{\frac{n_a}{2}} & (\kappa_{a,b,2}+i\kappa_{a,b,3})I_{\frac{n_a}{2}} \\
(-\kappa_{a,b,2}+i\kappa_{a,b,3})I_{\frac{n_a}{2}} & (\kappa_{a,b,0}-\kappa_{a,b,1}i)I_{\frac{n_a}{2}}
\end{bmatrix}U_{n_a}, \quad\textrm{and}\\
(M_i)_{ba}&=Y_{i,a}U_{n_a}^\dagger \begin{bmatrix}
(-\kappa_{a,b,0}+\kappa_{a,b,1}i)I_{\frac{n_a}{2}} & (\kappa_{a,b,2}+i\kappa_{a,b,3})I_{\frac{n_a}{2}} \\
(-\kappa_{a,b,2}+i\kappa_{a,b,3})I_{\frac{n_a}{2}} & (-\kappa_{a,b,0}-\kappa_{a,b,1}i)I_{\frac{n_a}{2}}
\end{bmatrix}U_{n_a};
\end{align*}
\item[(3)] if $n_a=n_b+4$ are both divisible by four, then $\exists \kappa_{a,b,0},\kappa_{a,b,1},\kappa_{a,b,2},\kappa_{a,b,3}\in\mathbb{C}$ such that 
\begin{align*}
(M_i)_{ab}&=U_{n_a}^\dagger\begin{bmatrix}
\kappa_{a,b,0}B_i^{\frac{n_b}{2}} & \kappa_{a,b,1}B_i^{\frac{n_b}{2}} \\
\kappa_{a,b,2}B_i^{\frac{n_b}{2}} & \kappa_{a,b,3}B_i^{\frac{n_b}{2}} 
\end{bmatrix}U_{n_b}, \quad\textrm{and}\\
(M_i)_{ba}&=U_{n_b}^\dagger\begin{bmatrix}
\overline{\kappa_{a,b,0}}\left(B_i^{\frac{n_b}{2}}\right)^\dagger & \overline{\kappa_{a,b,2}}\left(B_i^{\frac{n_b}{2}}\right)^\dagger \\
\overline{\kappa_{a,b,1}}\left(B_i^{\frac{n_b}{2}}\right)^\dagger & \overline{\kappa_{a,b,3}}\left(B_i^{\frac{n_b}{2}}\right)^\dagger 
\end{bmatrix}U_{n_a},
\end{align*} 
and these blocks are real;
\item[(4)] if $a<b$ with $n_a=n_b$ odd, then $\exists \lambda_{a,b}\in\mathbb{R}$ such that 
\begin{equation*}
(M_i)_{ab}=\lambda_{a,b}Y_{i,a}, \quad\textrm{and} \quad (M_i)_{ba}=-\lambda_{a,b}Y_{i,a};
\end{equation*}
\item[(5)] if $n_a=n_b+2$ are both odd, then $\exists \lambda_{a,b}\in\mathbb{R}$ such that 
\begin{equation*}
(M_i)_{ab}=\lambda_{a,b}B_i^{n_b}, \quad\textrm{and}\quad (M_i)_{ba}=\lambda_{a,b}\left(B_i^{n_b}\right)^T,
\end{equation*}
where the $\mathrm{U}(1)$ factor for $B_i^{n_b}$ is chosen so the matrices are real;
\item[(6)] otherwise, $(M_i)_{ab}=0$.
\end{enumerate}
Conversely, if $M$ has the above form for some representation $(V,\rho)$ with $Y_i:=\rho(\upsilon_i)\in\mathfrak{so}(k)$, then $\hat{M}$ is spherically symmetric\label{thm:struct}
\end{theorem}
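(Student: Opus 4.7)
The plan is to apply Corollary~\ref{cor:whyrep}, which tells us that $M$ must lie in the subspace of $\hat{\rho}$-invariants of $\hat{V} = V \otimes V^* \otimes \mathfrak{sp}(1)$, where $\rho$ is the real representation supplied by Theorem~\ref{thm:sphersym}. The first step is to decompose $(V,\rho) \simeq \bigoplus_a (\mathbb{R}^{n_a}, \varrho_{n_a})$; this induces a block decomposition $\hat{V} = \bigoplus_{a,b} V_a \otimes V_b^* \otimes \mathfrak{sp}(1)$ and a corresponding decomposition of $M$ into blocks $(M_l)_{ab}$, each of which lies independently in the $\hat{\rho}$-invariant subspace of its own factor.

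The next step is the case analysis on $(a,b)$ via Lemma~\ref{lemma:realtriv}: the $(a,b)$-tensor product admits a nontrivial invariant exactly when $n_a, n_b$ are both odd with $n_a = n_b$ or $|n_a - n_b| = 2$, or both divisible by four with $n_a = n_b$ or $|n_a - n_b| = 4$; in all other situations the block must vanish, yielding case~(6). For each surviving block I will invoke Lemma~\ref{lemma:realYitriv} when $n_a = n_b$ and Lemma~\ref{lemma:realBitriv} when $|n_a - n_b| \in \{2,4\}$ to parameterize the contribution: one real parameter in the odd-dimensional cases (giving the $Y_{i,a}$ or $B_i^{n_b}$ ansatz), and four parameters in the divisible-by-four cases coming from the four-dimensional quaternionic endomorphism ring.

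The substantive step is to impose the reality and symmetry of each $M_l$, equivalently $(M_l)_{ba} = (M_l)_{ab}^T$. In the odd-dimensional cases this immediately produces the displayed coupling in cases~(4) and~(5), using the antisymmetry of $Y_{i,a}$ and the real normalization of $B_i^{n_b}$ guaranteed by Lemma~\ref{lemma:realBitriv}. In case~(3), symmetry forces the transposed block's coefficients to be the complex conjugates of those of the original, which simultaneously ensures that the individual blocks are real. The delicate case is the diagonal case~(1): writing $(M_l)_{aa} = Y_{l,a} E$ with $E$ in the endomorphism ring $\simeq \mathbb{H}$, symmetry of $M_l$ together with the fact that $X^T$ commutes with $\rho$ whenever $X$ does gives $(E + E^T) Y_{l,a} = 0$. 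The proof of Lemma~\ref{lemma:realYitriv} identifies the only symmetric endomorphism of the representation as $\mathbb{R}\cdot I_{n_a}$, which is killed under multiplication by any nonzero $Y_{l,a}$, so $E$ must be antisymmetric. This restricts $E$ to the three-dimensional imaginary quaternion subspace, producing the three real parameters $\kappa_{a,1},\kappa_{a,2},\kappa_{a,3}$ rather than four. The $\rho$-invariant gauge, acting block-wise through the centralizer of $\rho$ inside $\mathrm{O}(k)$, is then used to put any remaining freedom into the stated normal form.

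For the converse, any $M$ with blocks of the displayed form is a linear combination of elements spanning the trivial summands identified above, so it satisfies $\hat{\rho}(\upsilon)(M) = 0$ for every $\upsilon \in \mathfrak{sp}(1)$, equivalently~\eqref{eq:sphereasy}; Theorem~\ref{thm:sphersym} then yields spherical symmetry. The hardest piece will be the bookkeeping in the divisible-by-four cases, both the symmetry-driven parameter reduction from four to three in case~(1) and the precise coupling of complex coefficients between transposed blocks in cases~(2) and~(3) that simultaneously enforces reality of the blocks; the remaining cases reduce quickly to the one-parameter data already embedded in Lemmas~\ref{lemma:realBitriv} and~\ref{lemma:realYitriv}.
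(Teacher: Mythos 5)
Your proposal follows essentially the same route as the paper: decompose $\hat V=\bigoplus_{a,b}V_a\otimes V_b^*\otimes\mathfrak{sp}(1)$, locate the trivial summands block-by-block via Lemma~\ref{lemma:realtriv}, parameterize them with Lemmas~\ref{lemma:realBitriv} and~\ref{lemma:realYitriv}, and then cut down the parameters by imposing that $M$ is real and symmetric, with the converse following from Theorem~\ref{thm:sphersym}. Your treatment of the diagonal divisible-by-four block (symmetry of $Y_{l,a}E$ forcing the symmetric part of $E$, a real multiple of $I_{n_a}$, to have zero coefficient) is the same reduction the paper performs by demanding that $(M_i)_{aa}$ be Hermitian, so the argument is sound as outlined.
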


\begin{proof}
Recall that as $\hat{M}$ is spherically symmetric, $M$ must be in the direct sum of trivial summands of $(\hat{V},\hat{\rho})$. From Lemma~\ref{lemma:realtriv}, we know the number of trivial summands $N$ of $(\hat{V},\hat{\rho})$ is 
\begin{align*}
N&=\#\{(a,b)\mid n_a=n_b\geq 2,\hspace{0.5em} 2\nmid n_a\}+\#\{(a,b)\mid |n_a-n_b|=2,\hspace{0.5em} 2\nmid n_a \}\\
&\phantom{=}+4\#\{(a,b)\mid n_a=n_b\geq 4,\hspace{0.5em} 4\mid n_a\}+4\#\{(a,b)\mid |n_a-n_b|=4,\hspace{0.5em} 4\mid n_a\}.
\end{align*}

If we find an Ansatz in $\hat{V}$ satisfying \eqref{eq:spher} with $N$ free parameters, then it is the most general one and $M$ is given by a choice of these parameters. Lemma~\ref{lemma:realBitriv} and Lemma~\ref{lemma:realYitriv} tell us what to choose as an Ansatz. Let $(A_1,A_2,A_3)\in\hat{V}$ be written in the previous decomposition with blocks $(A_i)_{ab}\in \mathbb{R}^{n_a}\otimes_\mathbb{R} (\mathbb{R}^{n_b})^*$ as
\begin{equation*}
A_i=\begin{bmatrix}
(A_i)_{11} & \cdots & (A_i)_{1m} \\
\vdots & \ddots & \vdots \\
(A_i)_{m1} & \cdots & (A_i)_{mm}
\end{bmatrix}.
\end{equation*}

To satisfy \eqref{eq:spher}, we set the following:
\begin{enumerate}
\item[(1)] if $n_a=n_b$ is divisible by four, then $\exists\kappa_{a,b,0},\kappa_{a,b,1},\kappa_{a,b,2},\kappa_{a,b,3}\in\mathbb{R}$ such that 
\begin{align*}
(A_i)_{ab}&=Y_{i,a}U_{n_a}^\dagger \begin{bmatrix}
(\kappa_{a,b,0}+\kappa_{a,b,1}i)I_{\frac{n_a}{2}} & (\kappa_{a,b,2}+i\kappa_{a,b,3})I_{\frac{n_a}{2}} \\
(-\kappa_{a,b,2}+i\kappa_{a,b,3})I_{\frac{n_a}{2}} & (\kappa_{a,b,0}-\kappa_{a,b,1}i)I_{\frac{n_a}{2}}
\end{bmatrix}U_{n_a};
\end{align*}
see \eqref{eq:commutesYi},
\item[(2)] if $n_a=n_b+4$ are both divisible by four, then $\exists \kappa_{a,b,0},\kappa_{a,b,1},\kappa_{a,b,2},\kappa_{a,b,3}\in\mathbb{C}$ such that 
\begin{align*}
(A_i)_{ab}&=U_{n_a}^\dagger\begin{bmatrix}
\kappa_{a,b,0}B_i^{\frac{n_b}{2}} & \kappa_{a,b,1}B_i^{\frac{n_b}{2}} \\
\kappa_{a,b,2}B_i^{\frac{n_b}{2}} & \kappa_{a,b,3}B_i^{\frac{n_b}{2}} 
\end{bmatrix}U_{n_b},
\end{align*} 
and these blocks are real; see \eqref{eq:spanBis},
\item[(3)] if $n_b=n_a+4$ are both divisible by four, then $\exists \kappa_{a,b,0},\kappa_{a,b,1},\kappa_{a,b,2},\kappa_{a,b,3}\in\mathbb{C}$ such that 
\begin{align*}
(A_i)_{ab}&=U_{n_a}^\dagger\begin{bmatrix}
\kappa_{a,b,0}\left(B_i^{\frac{n_b}{2}}\right)^\dagger & \kappa_{a,b,1}\left(B_i^{\frac{n_b}{2}}\right)^\dagger \\
\kappa_{a,b,2}\left(B_i^{\frac{n_b}{2}}\right)^\dagger & \kappa_{a,b,3}\left(B_i^{\frac{n_b}{2}} \right)^\dagger
\end{bmatrix}U_{n_b},
\end{align*} 
and these blocks are real;
\item[(4)] if $n_a=n_b$ is odd, then $\exists \lambda_{a,b}\in\mathbb{R}$ such that 
\begin{equation*}
(A_i)_{ab}=\lambda_{a,b}Y_{i,a};
\end{equation*}
\item[(5)] if $n_a=n_b+2$ are both odd, then $\exists \lambda_{a,b}\in\mathbb{R}$ such that 
\begin{equation*}
(A_i)_{ab}=\lambda_{a,b}B_i^{n_b},
\end{equation*}
where the $\mathrm{U}(1)$ factor for $B_i^{n_b}$ is chosen so the matrices are real;
\item[(6)] if $n_b=n_a+2$ are both odd, then $\exists \lambda_{a,b}\in\mathbb{R}$ such that 
\begin{equation*}
(A_i)_{ab}=\lambda_{a,b}\left(B_i^{n_b}\right)^T,
\end{equation*}
where the $\mathrm{U}(1)$ factor for $B_i^{n_b}$ is chosen so the matrices are real;
\item[(7)] otherwise, $(A_i)_{ab}=0$.
\end{enumerate}
Note that if $n_a=n_b=1$, then $\lambda_{ab}Y_{i,a}=0$, so we don't count these parameters. Thus, we have exactly $N$ free parameters. Moreover, we see that this Ansatz satisfies the spherically symmetric equations \eqref{eq:spher}. That is, substituting all the cases for $(A_i)_{ab}$, we see that
\begin{align}
[Y_i,A_j]&=[Y_{i,a}(A_j)_{ab}-(A_j)_{ab}Y_{i,b}]_{1\leq a,b\leq m}=\sum_{l=1}^3\epsilon_{ijl}A_l.
\end{align}

As mentioned above, for some choices of the $N$ parameters defining $A_i$, we get the $M_i$. As the $M_i$ satisfy some constraints, we get constraints on the parameters.

As the $M_i$ are symmetric and real, we know that $M_i^\dagger=M_i$. Hence, $(M_i)_{aa}=(M_i)_{aa}^T$ and $(M_i)_{aa}=(M_i)_{aa}^\dagger$. If $n_a$ is odd, then $(M_i)_{aa}=\lambda_{a,a}Y_{i,a}$ is symmetric, meaning $\lambda_{a,a}=0$. Similarly, if $n_a$ is divisible by four, then 
\begin{equation*}
(M_i)_{aa}=Y_{i,a}U_{n_a}^\dagger \begin{bmatrix}
(\kappa_{a,a,0}+\kappa_{a,a,1}i)I_{\frac{n_a}{2}} & (\kappa_{a,a,2}+i\kappa_{a,a,3})I_{\frac{n_a}{2}} \\
(-\kappa_{a,a,2}+i\kappa_{a,a,3})I_{\frac{n_a}{2}} & (\kappa_{a,a,0}-\kappa_{a,a,1}i)I_{\frac{n_a}{2}}
\end{bmatrix}U_{n_a}
\end{equation*}
is Hermitian, meaning $\kappa_{a,a,0}=0$. 

If $b<a$ with $n_a=n_b$, then $(M_i)_{ba}=(M_i)_{ab}^T=(M_i)_{ab}^\dagger$. If $n_a$ is odd, then
\begin{equation*}
\lambda_{b,a}Y_{i,a}=(M_i)_{ba}=(M_i)_{ab}^T=\lambda_{a,b}Y_{i,a}^T=-\lambda_{a,b}Y_{i,a},
\end{equation*}
so $\lambda_{b,a}=-\lambda_{a,b}$. If $n_a$ is divisible by four, then
\begin{multline*}
Y_{i,a}U_{n_a}^\dagger \begin{bmatrix}
(\kappa_{b,a,0}+\kappa_{b,a,1}i)I_{\frac{n_a}{2}} & (\kappa_{b,a,2}+i\kappa_{b,a,3})I_{\frac{n_a}{2}} \\
(-\kappa_{b,a,2}+i\kappa_{b,a,3})I_{\frac{n_a}{2}} & (\kappa_{b,a,0}-\kappa_{b,a,1}i)I_{\frac{n_a}{2}}
\end{bmatrix}U_{n_a}=(M_i)_{ba}\\
=(M_i)_{ab}^\dagger=Y_{i,a}U_{n_a}^\dagger \begin{bmatrix}
(-\kappa_{a,b,0}+\kappa_{a,b,1}i)I_{\frac{n_a}{2}} & (\kappa_{a,b,2}+i\kappa_{a,b,3})I_{\frac{n_a}{2}} \\
(-\kappa_{a,b,2}+i\kappa_{a,b,3})I_{\frac{n_a}{2}} & (-\kappa_{a,b,0}-\kappa_{a,b,1}i)I_{\frac{n_a}{2}}
\end{bmatrix}U_{n_a}
\end{multline*}
We multiply by $Y_{i,a}$, sum, and use $\sum_{i=1}^3 Y_{i,a}^2=-\frac{n_a^2-4}{16}I_{n_a}$ to see that $\kappa_{b,a,0}=-\kappa_{a,b,0}$, $\kappa_{b,a,i}=\kappa_{a,b,i}$ for $i=1,2,3$.

If $n_b=n_a+2$ and $n_a$ is odd, then $(M_i)_{ba}=(M_i)_{ab}^T$, so 
\begin{equation*}
\lambda_{b,a}B_i^{n_b}=(M_i)_{ba}=(M_i)_{ab}^T=\lambda_{a,b}B_i^{n_b},
\end{equation*}
so $\lambda_{b,a}=\lambda_{a,b}$. If $n_b=n_a+4$ and $n_a$ is divisible by four, then $(M_i)_{ba}=(M_i)_{ab}^\dagger$, so
\begin{equation*}
U_{n_b}^\dagger\begin{bmatrix}
\kappa_{b,a,0}B_i^{\frac{n_b}{2}} & \kappa_{b,a,1}B_i^{\frac{n_b}{2}} \\
\kappa_{b,a,2}B_i^{\frac{n_b}{2}} & \kappa_{b,a,3}B_i^{\frac{n_b}{2}}
\end{bmatrix}U_{n_a}=(M_i)_{ba}=(M_i)_{ab}^\dagger=U_{n_b}^\dagger\begin{bmatrix}
\overline{\kappa_{a,b,0}}B_i^{\frac{n_b}{2}} & \overline{\kappa_{a,b,2}}B_i^{\frac{n_b}{2}} \\
\overline{\kappa_{a,b,1}}B_i^{\frac{n_b}{2}} & \overline{\kappa_{a,b,3}}B_i^{\frac{n_b}{2}}
\end{bmatrix}U_{n_a}
\end{equation*}
Hence, $\kappa_{b,a,0}=\overline{\kappa_{a,b,0}}$, $\kappa_{b,a,1}=\overline{\kappa_{a,b,2}}$, $\kappa_{b,a,2}=\overline{\kappa_{a,b,1}}$, and $\kappa_{b,a,3}=\overline{\kappa_{a,b,3}}$.
In summary, we see that $\tilde{M}_i$ is of the desired form.

We see that if $M$ has the form in the statement of the theorem for some representation $(V,\rho)$ with $Y_i:=\rho(\upsilon_i)\in\mathfrak{so}(k)$, then $M$ satisfies the spherical symmetry conditions, so $\hat{M}$ is spherically symmetric, by Theorem~\ref{thm:sphersym}.
\end{proof}

\subsection{Novel examples of spherically symmetric monopoles}

Here we use the Structure Theorem to construct novel examples of spherically symmetric hyperbolic monopoles. We start with irreducible representations.
\begin{prop}
Suppose $(V,\rho)$ is an irreducible real representation with dimension $k$ that generates a spherically symmetric hyperbolic monopole with $\hat{M}\in\mathcal{M}_{n,k}$.
\begin{enumerate}
\item[(1)] If $k$ is odd, then $M=0$, which we investigated in Proposition~\ref{prop:Mzero}.
\item[(2)] If $k$ is divisible by four, let $Y_i$ and $U$ have the same meaning as in the Structure Theorem. Then there is some $\kappa\geq 0$ such that $M_i$ is gauge equivalent to $\kappa Y_iU^\dagger \begin{bmatrix}
iI_{\frac{k}{2}} & 0 \\ 0 & -iI_{\frac{k}{2}}
\end{bmatrix}U$. 
\end{enumerate}

Having identified all $M$ generated by irreducible real representations, we construct a family of spherically symmetric hyperbolic monopoles. If $k$ is divisible by four, let $0<\kappa<\frac{4}{k+2}$ and let
\begin{equation}
\begin{aligned}
\beta:=\sqrt{\frac{16-(k^2+4)\kappa^2\pm\sqrt{(16-(k^2+4)\kappa^2)^2-16k^2\kappa^4}}{2k^2}},\quad &\alpha:=-\frac{\beta^2+\kappa^2}{2\beta},\quad\textrm{and}\\
\quad L:=\alpha I_k+\beta (Y_1i+Y_2j+Y_3k)&.
\end{aligned}
\end{equation}
Note that as $0<\kappa< \frac{4}{k+2}$, $\beta>0$, so $L$ is well-defined. Using the $M_i$ from the second point above, we have $\hat{M}\in\mathcal{M}_{k,k}$ and the corresponding $\mathrm{Sp}(k)$ monopole is spherically symmetric.\label{prop:irrep}
\end{prop}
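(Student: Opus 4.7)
The plan is to treat part (1) as a quick consequence of the Structure Theorem, using a residual gauge to eliminate two of the three free parameters, and then handle part (2) by choosing the Ansatz $L = \alpha I_k + \beta J$ with $J := Y_1 i + Y_2 j + Y_3 k$. The crucial algebraic identity underlying every verification is
\[
J^2 = \tfrac{k^2 - 4}{16}\, I_k + J,
\]
which follows from $[Y_a, Y_b] = \varepsilon_{abc} Y_c$ together with $\sum_a Y_a^2 = -\tfrac{k^2-4}{16} I_k$; its two real roots $(k+2)/4$ and $-(k-2)/4$ are the eigenvalues of $J$, and these explain the appearance of $4/(k+2)$ in the statement.

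For part (1), irreducibility collapses \eqref{eq:rhodecomp} to a single block, so $M = (M_i)_{11}$. If $k$ is odd then this entry falls under the ``otherwise'' clause of Theorem~\ref{thm:struct} (case (1) needs $n_a$ divisible by four and case (4) needs $a < b$), forcing $M = 0$; the associated monopole is then the one in Proposition~\ref{prop:Mzero}. If $k$ is divisible by four, case (1) of Theorem~\ref{thm:struct} gives $M_i = Y_i X$ with $X = U^\dagger \begin{bmatrix} \kappa_1 i I & (\kappa_2+i\kappa_3)I \\ (-\kappa_2+i\kappa_3)I & -\kappa_1 i I \end{bmatrix} U$ depending on three real parameters. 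The $\rho$-invariant gauge subgroup of $\mathrm{O}(k)$ is the unit quaternion group inside $\mathrm{End}(V,\rho) \simeq \mathbb{H}$ furnished by Lemma~\ref{lemma:realYitriv}, and its conjugation action on the imaginary part of $\mathbb{H}$ is the standard $\mathrm{Sp}(1) \to \mathrm{SO}(3)$ double cover. Rotating $(\kappa_1, \kappa_2, \kappa_3)$ to $(\kappa, 0, 0)$ with $\kappa = \sqrt{\kappa_1^2 + \kappa_2^2 + \kappa_3^2} \geq 0$ then produces the claimed form $M_i = \kappa Y_i e_1$, where $e_1 := U^\dagger \mathrm{diag}(iI_{k/2}, -iI_{k/2})\, U$.

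For part (2), I set $L = \alpha I_k + \beta J$; since $J^\dagger = J$ we have $L^\dagger = L$ and $L^\dagger L = LL^\dagger = L^2$. Using $e_1^2 = -I$, $[e_1, J] = 0$, and the quadratic identity above, $M = \kappa J e_1$ satisfies $M^2 = -\kappa^2\bigl(\tfrac{k^2-4}{16}I + J\bigr)$, so matching $I$- and $J$-coefficients in $L^\dagger L - M^2 = I_k$ produces $2\alpha\beta + \beta^2 + \kappa^2 = 0$ (solving to the stated $\alpha$) and a quadratic in $u := \beta^2$ whose discriminant factors as $[16-(k+2)^2\kappa^2][16-(k-2)^2\kappa^2]$; under $0 < \kappa < 4/(k+2)$ both roots are real and positive, so $\beta > 0$ is well-defined. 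The matrix $M_i = \kappa Y_i e_1$ is real symmetric as the product of two commuting real antisymmetric matrices. For $LL^\dagger$ positive definite, the eigenvalues of $L^2 = I + M^2$ are $1 - \kappa^2(k \pm 2)^2/16$ by the spectrum of $J$, both positive precisely when $\kappa < 4/(k+2)$. For the non-singularity of $\Delta(X)^\dagger \Delta(X)$, Lemma~\ref{lemma:spherfinalcond} reduces the check to $X = X_3 k$ with $X_3 \in [0,1]$; a short quaternion computation using $M^\dagger = -M$ and the identity $MX + XM = -2X_3 M_3$ yields
\[
\Delta(X)^\dagger \Delta(X) = (1 + X_3^2)\, I_k - 2 X_3 M_3,
\]
and the bound $\kappa(k-2)/4$ on the largest eigenvalue of $M_3$ (coming from the known spectrum of $\rho_{k/2}(\upsilon_3)$) shows this real symmetric matrix is positive definite on $[0,1]$. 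Spherical symmetry of the resulting monopole is then immediate from the converse of Theorem~\ref{thm:sphersym}. The main obstacle is discovering the Ansatz $L = \alpha I_k + \beta J$ and spotting the identity $J^2 = \tfrac{k^2-4}{16}I_k + J$, without which the sharp bound $4/(k+2)$ is not at all transparent.
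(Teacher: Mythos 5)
Your proposal is correct and follows essentially the same route as the paper: the Structure Theorem pins down $M_i=\kappa_i$-combinations of $Y_i$ times an endomorphism, a gauge rotation in the quaternionic endomorphism algebra reduces to $(\kappa,0,0)$, and the Ansatz $L=\alpha I_k+\beta J$ with $J^2=\tfrac{k^2-4}{16}I_k+J$ yields the same quadratic in $\beta^2$ with discriminant $(16-(k+2)^2\kappa^2)(16-(k-2)^2\kappa^2)$. The only differences are cosmetic: you verify positive definiteness of $LL^\dagger$ and $\Delta(X)^\dagger\Delta(X)$ by direct spectral bounds (using the same eigenvalue bound $\tfrac{k-2}{4}$ on $Y_3$), whereas the paper argues by contradiction from a hypothetical null vector.
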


\begin{proof}
\begin{enumerate}
\item[(1)] The Structure Theorem tells us that $M_i=0$.
\item[(2)] The Structure Theorem tells us that there are some $\kappa_1,\kappa_2,\kappa_3\in\mathbb{R}$ such that
\begin{equation*}
M_i=Y_iU^\dagger\begin{bmatrix}
\kappa_1iI_{k/2} & (\kappa_2+i\kappa_3)I_{k/2} \\
(-\kappa_2+i\kappa_3)I_{k/2} & -\kappa_1iI_{k/2}
\end{bmatrix}U.
\end{equation*}
We now gauge the data as follows.

Let $\kappa:=\sqrt{\kappa_1^2+\kappa_2^2+\kappa_3^2}\geq 0$. If $\kappa_2=\kappa_3=0$ and $\kappa_1\geq 0$, then we have the desired form. If $\kappa_2=\kappa_3=0$ and $\kappa_1<0$, then we take $Q:=U^\dagger\begin{bmatrix}
0 & 1 \\
-1 & 0
\end{bmatrix}U$. From Lemma~\ref{lemma:realYitriv}, we know that $Q\in\mathfrak{so}(k)$. In fact, we see that $QQ^T=QQ^\dagger=I_k$, so $Q\in\mathrm{O}(k)$ and $QM_iQ^T$ has the desired form, as 
\begin{equation*}
QM_iQ^T=-\kappa_1Y_iU^\dagger \begin{bmatrix}
iI_{\frac{k}{2}} & 0 \\ 0 & -iI_{\frac{k}{2}}
\end{bmatrix}U.
\end{equation*}
Otherwise, let
\begin{equation*}
Q:=U^\dagger \begin{bmatrix}
\frac{i}{\kappa_2+i\kappa_3}\sqrt{\frac{(\kappa+\kappa_1)(\kappa_2^2+\kappa_3^2)}{2\kappa}} & \sqrt{\frac{\kappa_2^2+\kappa_3^2}{2\kappa(\kappa+\kappa_1)}} \\
-\sqrt{\frac{\kappa_2^2+\kappa_3^2}{2\kappa(\kappa+\kappa_1)}} & -\frac{i}{\kappa_2-i\kappa_3}\sqrt{\frac{(\kappa+\kappa_1)(\kappa_2^2+\kappa_3^2)}{2\kappa}}
\end{bmatrix}U.
\end{equation*}
By Lemma~\ref{lemma:realYitriv}, we know that $Q\in\mathrm{O}(k)$ and $Q$ commutes with the $Y_i$. Hence,
\begin{equation*}
QM_iQ^T=\kappa Y_iU^\dagger \begin{bmatrix}
iI_{\frac{k}{2}} & 0 \\ 0 & -iI_{\frac{k}{2}}
\end{bmatrix}U.
\end{equation*}
Therefore, up to gauge, we may assume that $\kappa_2=\kappa_3=0$ and $\kappa_1\geq 0$.
\end{enumerate}

Now we prove the final part of the proposition. Suppose that we have $\hat{M}$ as in the statement of the proposition. We show that $\hat{M}\in\mathcal{M}_{k,k}$. Such data is spherically symmetric by the Structure Theorem. Then
\begin{equation*}
L^\dagger L-M^2=\left(\alpha^2+(\beta^2+\kappa^2)\frac{k^2-4}{16}\right)I_k+(2\alpha\beta+\beta^2+\kappa^2)(Y_1i+Y_2j+Y_3k).
\end{equation*}
Simplifying, we see $L^\dagger L-M^2=I_k$. We just need to verify that $LL^\dagger$ and $\Delta(X_3k)^\dagger \Delta(X_3k)$ are positive definite, the latter for all $X_3\in[0,1]$.

Note that 
\begin{equation*}
\Delta(X_3k)^\dagger \Delta(X_3k)=(1+X_3^2)I_k-2X_3\kappa Y_3 U^\dagger \begin{bmatrix}
i I_{\frac{k}{2}} & 0 \\ 0 & -iI_{\frac{k}{2}}
\end{bmatrix}U.
\end{equation*}
Suppose that this matrix is not positive definite for some $X_3\in[0,1]$. Then there is some unit vector $v$ such that $\Delta(X_3k)^\dagger \Delta(X_3k)v=0$. Hence,
\begin{equation*}
(1+X_3^2)v=2X_3\kappa Y_3 U^\dagger \begin{bmatrix}
i I_{\frac{k}{2}} & 0 \\ 0 & -iI_{\frac{k}{2}}
\end{bmatrix}Uv.
\end{equation*}
Multiply each side on its left by its conjugate transpose, obtaining
\begin{equation*}
(1+X_3^2)^2=-4X_3^2\kappa^2 v^\dagger Y_3^2 v=4X_3^2\kappa^2|Y_3v|^2.
\end{equation*}
As the $Y_i$ induce the irreducible real $k$-representation, whose complexification is isomorphic as a complex representation to $\left(V_{\frac{k}{2}},\rho_{\frac{k}{2}}\right)^{\oplus 2}$, the largest modulus of an eigenvalue of $Y_3$ is $\frac{k-2}{4}$. Hence, $|Y_3v|^2\leq \left(\frac{k-2}{4}\right)^2$. As $0<\kappa<\frac{4}{k+2}$,
\begin{equation*}
(1+X_3^2)^2\leq 4X_3^2\kappa^2\frac{(k-2)^2}{16}<\left(2X_3\frac{k-2}{k+2}\right)^2.
\end{equation*}
Thus, $1+X_3^2< 2X_3\frac{k-2}{k+2}$. But this inequality is not satisfied, as $X_3^2-2X_3\frac{k-2}{k+2}+1$ has no real roots, contradiction! Thus, $\Delta(X_3k)^\dagger \Delta(X_3k)$ is positive definite for all $X_3\in[0,1]$.

Finally, suppose that $LL^\dagger=L^2$ is not positive definite. Then there is some unit vector $v$ such that 
\begin{equation*}
0=L^2v=\left(\alpha^2+\beta^2\frac{k^2-4}{16}\right)v+(2\alpha\beta+\beta^2)(Y_1i+Y_2j+Y_3k)v.
\end{equation*}
As $L^2-M^2=I_k$, we can rewrite this as
\begin{equation}
\left(1-\kappa^2\frac{k^2-4}{16}\right)v=\kappa^2(Y_1i+Y_2j+Y_3k)v.\label{eq:kappaY}
\end{equation}
Multiplying both sides on the left by their conjugate transpose, we get
\begin{equation*}
\left(1-\kappa^2\frac{k^2-4}{16}\right)^2=\kappa^4 v^\dagger (Y_1i+Y_2j+Y_3k)^2 v=\kappa^4 v^\dagger \left(\frac{k^2-4}{16}I+Y_1i+Y_2j+Y_3k\right)v.
\end{equation*}
Substituting \eqref{eq:kappaY}, we see
\begin{equation*}
\left(1-\kappa^2\frac{k^2-4}{16}\right)^2=\kappa^4\frac{k^2-4}{16}+\kappa^2\left(1-\kappa^2\frac{k^2-4}{16}\right).
\end{equation*}
Simplifying, we have
\begin{equation*}
\frac{(16-(k+2)^2\kappa^2)(16-(k-2)^2\kappa^2)}{256}=0.
\end{equation*}
But this equation has no roots on $\left(0,\frac{4}{k+2}\right)$, contradiction! Thus, $\hat{M}\in\mathcal{M}_{k,k}$ corresponds with a spherically symmetric $\mathrm{Sp}(k)$ monopole.
\end{proof}

\begin{note}
We have found a family of spherically symmetric $\mathrm{Sp}(k)$ monopoles, with ADHM data in $\mathcal{M}_{k,k}$, when $k$ is divisible by four. However, there may be other monopoles with ADHM in $\mathcal{M}_{n,k}$ (for some $n< k$) whose corresponding ADHM data shares the $M$ portion of the data with the aforementioned family.
\end{note}

Now that we have investigated irreducible representations, let us move on to some reducible ones. 
\begin{prop}
Let $n\in\mathbb{N}_+$ be odd. Suppose $(V,\rho)\simeq (\mathbb{R}^{n+2},\varrho_{n+2})\oplus (\mathbb{R}^n,\varrho_n)$ generates a spherically symmetric hyperbolic monopole with ADHM data $\hat{M}\in\mathcal{M}_{m,2n+2}$ for some $m\in\mathbb{N}_+$. Up to gauge there is some $a\geq 0$ such that, choosing the $\mathrm{U}(1)$ factor to give real $B_i^n$ matrices, we have
\begin{equation*}
M_i=a\begin{bmatrix}
0 & B_i^n \\
(B_i^n)^T & 0
\end{bmatrix}.
\end{equation*}

Having identified all $M$ generated by $(V,\rho)$, we construct families of spherically symmetric hyperbolic monopoles, using $M$. For $0<a< \sqrt{\frac{n+1}{2(n+2)}}$, let 
\begin{equation}
\begin{aligned}
\beta&:=\frac{n+1\pm\sqrt{(n+1)(n+1-2(n+2)a^2)}}{(n+1)(n+2)}, \quad \alpha:=-\frac{\beta^2+\frac{2}{n+1}a^2}{2\beta}, \\
\delta&:=\frac{n+1\pm\sqrt{(n+1)(n+1-2(n+2)a^2)}}{n(n+1)}, \quad \gamma:=\frac{\frac{2(n+2)}{n(n+1)}a^2-\delta^2}{2\delta}, \quad\textrm{and}\\
L&:=\begin{bmatrix}
\alpha I_{n+2} & 0 \\
0 & \gamma I_n
\end{bmatrix}+\begin{bmatrix}
\beta\left(Y_{1,1}i+Y_{2,1}j+Y_{3,1}k\right) & 0 \\
0 & \delta\left(Y_{1,2}i+Y_{2,2}j+Y_{3,2}k\right) 
\end{bmatrix}
\end{aligned}
\end{equation}
Note that as $0<a< \sqrt{\frac{n+1}{2(n+2)}}$, $\beta,\delta>0$, so $L$ is well-defined. We have $\hat{M}\in\mathcal{M}_{2n+2,2n+2}$ and its corresponding $\mathrm{Sp}(2n+2)$ monopole is spherically symmetric.

In addition to the above families of spherically symmetric monopoles, the following $\hat{M}\in\mathcal{M}_{2,4}$ corresponds with a spherically symmetric $\mathrm{Sp}(2)$ monopole:
\begin{equation}
\hat{M}:=\frac{1}{\sqrt{3}}\begin{bmatrix}
\sqrt{2} & -\frac{i}{\sqrt{2}} & \frac{j}{\sqrt{2}} & 0 \\
0 & \sqrt{\frac{3}{2}} & -k\sqrt{\frac{3}{2}} & 0 \\
0 & 0 & 0 & k \\
0 & 0 & 0 & j \\
0 & 0 & 0 & i \\
k & j & i & 0
\end{bmatrix}.\label{eq:sp(2)mono}
\end{equation}\label{prop:n+2+n}
\end{prop}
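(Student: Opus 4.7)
I propose a three-stage proof: (i) classify the admissible $M$ using the Structure Theorem, (ii) verify the stated family satisfies the four conditions of $\mathcal{M}$, and (iii) verify the explicit $\mathrm{Sp}(2)$ example directly.

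For (i), apply Theorem~\ref{thm:struct} to $(V,\rho)\simeq(\mathbb{R}^{n+2},\varrho_{n+2})\oplus(\mathbb{R}^n,\varrho_n)$ with both summands odd. The two diagonal blocks fall into case~(4) with $a=b$, giving $(M_i)_{aa}=\lambda_{a,a}Y_{i,a}$; since each $M_i$ is symmetric while each $Y_{i,a}$ is antisymmetric, $\lambda_{a,a}=0$. The off-diagonal pair is exactly case~(5), since $n_1=n_2+2$ with both odd, so $(M_i)_{12}=\lambda B_i^n$ and $(M_i)_{21}=\lambda(B_i^n)^T$ for a single $\lambda\in\mathbb{R}$. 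The gauge $Q:=\mathrm{diag}(I_{n+2},-I_n)\in\mathrm{O}(2n+2)$ flips the signs of the off-diagonal blocks, so after gauging we may take $\lambda=a\geq 0$.

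For (ii), set $Z_\pm:=Y_1^\pm i+Y_2^\pm j+Y_3^\pm k$; the Casimir identity $\sum_l(Y_l^\pm)^2=-\tfrac{(n+2\pm2)^2-1}{4}I$ on the irreducible real rep gives $Z_\pm^2=\tfrac{(n+2\pm2)^2-1}{4}I+Z_\pm$. The matrix $L$ is block diagonal with Hermitian-quaternionic blocks $\alpha I+\beta Z_+$ and $\gamma I+\delta Z_-$. A parallel expansion of $M^2=-\sum_aM_a^2+\sum_{a<b}[M_a,M_b]\,i_ai_b$, using $\sum_aB_a^n(B_a^n)^\dagger=I_{n+2}$, $\sum_a(B_a^n)^\dagger B_a^n=\tfrac{n+2}{n}I_n$, and the commutator identities of Proposition~\ref{prop:Bi}, yields $M^2$ itself block diagonal with $(M^2)_{11}=-a^2I-\tfrac{2a^2}{n+1}Z_+$ and $(M^2)_{22}=-\tfrac{n+2}{n}a^2I+\tfrac{2(n+2)a^2}{n(n+1)}Z_-$. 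Matching the $I$- and $Z_\pm$-coefficients in $L^\dagger L-M^2=I$ then produces decoupled quadratic systems whose solutions are precisely the stated $(\alpha,\beta)$ and $(\gamma,\delta)$. For positivity, since $L$ is square, $LL^\dagger\succ 0$ is equivalent to $L^\dagger L=I+M^2\succ 0$. The eigenvalues of $Z_+$ (roots of $\lambda^2-\lambda-\tfrac{(n+1)(n+3)}{4}=0$) are $\tfrac{n+3}{2}$ and $-\tfrac{n+1}{2}$, so the eigenvalues of $I+(M^2)_{11}$ are $1$ and $1-\tfrac{2(n+2)a^2}{n+1}$; the $Z_-$-spectrum $\{\tfrac{n+1}{2},-\tfrac{n-1}{2}\}$ yields the same pair for the second block. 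Both are strictly positive exactly when $a^2<\tfrac{n+1}{2(n+2)}$. The fourth condition is then automatic: $\Delta(X)^\dagger\Delta(X)=L^\dagger L+(M-IX)^\dagger(M-IX)\succeq L^\dagger L\succ 0$, since the extra summand is PSD. Theorem~\ref{thm:sphersym} finishes, exhibiting a spherically symmetric $\mathrm{Sp}(2n+2)$ monopole.

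For (iii), the $\mathrm{Sp}(2)$ example is a direct computation: one checks $LL^\dagger=I_2$ and $L^\dagger L-M^2=I_4$, and inspection identifies $M$ with the $n=1$ case of part~(i) at the boundary value $a=1/\sqrt{3}=\sqrt{(n+1)/(2(n+2))}$, excluded from the open family; spherical symmetry then follows from Theorem~\ref{thm:sphersym}. Since $L$ is non-square here, $L^\dagger L$ is rank-deficient and the $\Delta$ condition must be verified by hand. Substituting $M^\dagger=-M$ and $L^\dagger L=I+M^2$ into $\Delta^\dagger\Delta$ collapses it to the real identity $\Delta(X_3k)^\dagger\Delta(X_3k)=(1+X_3^2)I-2X_3M_3$; its eigenvalues $(1+X_3^2)\pm 2X_3/\sqrt{3}$ and $1+X_3^2$ (doubled) are strictly positive on $X_3\in[0,1]$, and Lemma~\ref{lemma:spherfinalcond} completes the verification. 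The main obstacle throughout is the bookkeeping in step~(ii): extracting the $Z_\pm$-coefficient cleanly from $\sum_{a<b}[M_a,M_b]\,i_ai_b$ requires tracking the correct quaternion-unit signs so that the off-block and cross-quaternion contributions all collapse into a single scalar multiple of $Z_\pm$ on each block; once the Clebsch--Gordan identities of Proposition~\ref{prop:Bi} are applied systematically the remaining algebra is forced.
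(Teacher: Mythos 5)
Your proposal is correct and its skeleton matches the paper's: the Structure Theorem gives the two-parameter-free form of $M$ (diagonal blocks killed by symmetry, off-diagonal blocks $\lambda B_i^n$ and $\lambda(B_i^n)^T$), the gauge $Q=\mathrm{diag}(I_{n+2},-I_n)$ normalizes $a\geq 0$, $L^\dagger L-M^2=I$ is checked by matching the $I$- and $Z_\pm$-coefficients using Proposition~\ref{prop:Bi}, and the $\mathrm{Sp}(2)$ example is verified directly. Where you genuinely diverge is in the two positivity checks, and your versions are cleaner. For $LL^\dagger\succ 0$ the paper runs a proof by contradiction, repeatedly squaring the relation $L^2v=0$ against $L^2=I+M^2$; you instead read off the spectrum of $I+M^2$ from the minimal polynomial $Z_\pm^2=Z_\pm+\tfrac{(n+1\pm1)^2-1}{4}I$ (note your exponent ``$n+2\pm2$'' is a slip --- the representations have dimensions $n+2$ and $n$ --- but the eigenvalues $\tfrac{n+3}{2},-\tfrac{n+1}{2}$ and $\tfrac{n+1}{2},-\tfrac{n-1}{2}$ you actually use are the right ones), obtaining the eigenvalues $1$ and $1-\tfrac{2(n+2)a^2}{n+1}$ on both blocks and hence the exact range of $a$. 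For the last condition of $\mathcal{M}$ the paper invokes Lemma~\ref{lemma:spherfinalcond} and another eigenvalue-bound contradiction; your observation that $\Delta(x)^\dagger\Delta(x)=L^\dagger L+(M-Ix)^\dagger(M-Ix)\succeq L^\dagger L\succ 0$ disposes of it for all $x\in\mathbb{H}$ at once whenever $L$ is square and nonsingular, and you correctly recognize that this shortcut fails for the rectangular $L$ of the $\mathrm{Sp}(2)$ example, where you fall back on the paper's Lemma~\ref{lemma:spherfinalcond} computation $(1+X_3^2)I-2X_3M_3$ with eigenvalues $(1+X_3^2)\pm 2X_3/\sqrt{3}$ and $1+X_3^2$. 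The only soft spot is the assertion that coefficient-matching ``produces'' the stated $(\alpha,\beta,\gamma,\delta)$ --- in a final write-up you should actually substitute and confirm the $I$-coefficient equations $\alpha^2+\beta^2\tfrac{(n+1)(n+3)}{4}+a^2=1$ and $\gamma^2+\delta^2\tfrac{n^2-1}{4}+\tfrac{n+2}{n}a^2=1$ --- but this is a computation, not a gap.
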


\begin{proof}
The Structure Theorem tells us that for a spherically symmetric monopole generated by $(V,\rho)$, there is a constant $\lambda_{1,2}\in\mathbb{R}$ such that the ADHM data $\hat{M}\in\mathcal{M}_{m,2n+2}$ satisfies
\begin{equation*}
M_i=\begin{bmatrix}
0 & \lambda_{1,2}B_i^n \\
\lambda_{1,2}(B_i^n)^T & 0
\end{bmatrix},
\end{equation*}
where the $\mathrm{U}(1)$ factor for the $B_i^1$ is chosen so they are all real matrices. Denoting $a:=\lambda_{1,2}$, $M$ has the form given in the statement of the proposition. Furthermore, any data with $M_i$ of this form is spherically symmetric by the Structure Theorem. 

To see that, up to gauge, we can take $a\geq 0$, consider $Q:=\begin{bmatrix}
I_{n+2} & 0 \\
0 & -I_n
\end{bmatrix}\in\mathrm{O}(2n+2)$. We see that $QM_iQ^T=-M_i$. 

The $\hat{M}$ in parts two and three of the proposition have $M$ given as above. We show that in both cases, if $\hat{M}$ satisfies the first three conditions in Definition~\ref{def:Mnk}, then the final condition is satisfied. Assuming the first three conditions are satisfied, by Lemma~\ref{lemma:spherfinalcond}, we need only check the final condition at $X=X_3k$ for $X_3\in[0,1]$. For such $X$, we find
\begin{equation*}
\Delta(X)^\dagger \Delta(X)=(1+X_3^2)I-2X_3M_3.
\end{equation*}
Suppose that $\Delta(X)^\dagger \Delta(X)$ is not positive definite for some $X_3\in[0,1]$. Then there is some unit vector $v$ such that $\Delta(X)^\dagger \Delta(X)v=0$. Thus,
\begin{equation*}
(1+X_3^2)v=2X_3M_3v.
\end{equation*}
Multiplying both sides on the left by their conjugate transpose, we find
\begin{equation*}
(1+X_3^2)^2=4X_3^2v^\dagger M_3^2v\leq 4X_3^2v^\dagger\left(\sum_{i=1}^3M_i^2\right)v.
\end{equation*}
Substituting $M_i$ and using Proposition~\ref{prop:Bi}, we see that 
\begin{equation*}
(1+X_3^2)^2\leq 4X_3^2a^2 v^\dagger \begin{bmatrix}
I_{n+2} & 0 \\ 0 & \frac{n+2}{n}I_n
\end{bmatrix}v.
\end{equation*}
The largest eigenvalue of $\begin{bmatrix}
I_{n+2} & 0 \\ 0 & \frac{n+2}{n}I_n
\end{bmatrix}$ is $\frac{n+2}{n}$, so as $0<a<\sqrt{\frac{n+1}{2(n+2)}}$,
\begin{equation*}
(1+X_3^2)^2\leq 4X_3^2a^2\frac{n+2}{n}<4X_3^2\frac{n+1}{2n}.
\end{equation*}
As $n\geq 1$, we have $\frac{n+1}{2n}\leq 1$, so $(1+X_3^2)^2<4X_3^2$. That is, $(1-X_3)^2<0$, contradiction! Therefore, $\Delta(X)^\dagger \Delta(X)$ is positive definite.

Now we prove the second part of the proposition. Suppose $0<a<\sqrt{\frac{n+1}{2(n+2)}}$. Let $\hat{M}$ be as in the statement. Let $Y^{n+2}:=Y_{1,1}i+Y_{2,1}j+Y_{3,1}k$ and similarly for $Y^n$. Then 
\begin{multline*}
L^\dagger L-M^2=\mathrm{diag}
\left(\left(\alpha^2+\beta^2\frac{(n+2)^2-1}{4}+a^2\right)I_{n+2}+\left(2\alpha\beta+\beta^2+\frac{2}{n+1}a^2\right)Y^{n+2},\right. \\
\left.\left(\gamma^2+\delta^2\frac{n^2-1}{4}+\frac{n+2}{n}a^2\right)I_n+\left(2\gamma\delta+\delta^2-\frac{2(n+2)}{n(n+1)}a^2\right)Y^n\right).
\end{multline*}
Substituting our expressions, we see that $L^\dagger L-M^2=I_{2n+2}$. Finally, we show that $LL^\dagger=L^2$ is positive definite. Suppose not, then there is some unit vector $v$ such that $L^2v=0$. As $L^2=L^\dagger L=I_{2n+2}+M^2$, we see that 
\begin{equation}
\begin{bmatrix}
(1-a^2)I_{n+2} & 0 \\
0 & \left(1-\frac{n+2}{n}a^2\right)I_n
\end{bmatrix}v=-a^2\begin{bmatrix}
-\frac{2}{n+1}Y^{n+2} & 0 \\
0 & \frac{2(n+2)}{n(n+1)}Y^n
\end{bmatrix}v.\label{eq:3+1LL}
\end{equation}
Multiply both sides on the left by $\begin{bmatrix}
(1-a^2)I_{n+2} & 0 \\
0 & \left(1-\frac{n+2}{n}a^2\right)I_n
\end{bmatrix}$. Note that this matrix commutes with the matrix on the right-hand side. Thus, we can use \eqref{eq:3+1LL} to obtain
\begin{equation*}
\begin{bmatrix}
(1-a^2)^2I_{n+2} & 0 \\
0 & \left(1-\frac{n+2}{n}a^2\right)^2I_n
\end{bmatrix}v=a^4\begin{bmatrix}
-\frac{2}{n+1}Y^{n+2} & 0 \\
0 & \frac{2(n+2)}{n(n+1)}Y^n
\end{bmatrix}^2v.
\end{equation*}
Simplifying the right-hand side, we note that 
\begin{align*}
\frac{4}{(n+1)^2}(Y^{n+2})^2&=\frac{(n+2)^2-1}{(n+1)^2}I_{n+2}+\frac{4}{(n+1)^2}Y^{n+2}, \quad\textrm{and}\\
\frac{4(n+2)^2}{n^2(n+1)^2}(Y^n)^2&=\frac{(n+2)^2(n^2-1)}{n^2(n+1)^2}I_n+\frac{4(n+2)^2}{n^2(n+1)^2}Y^n.
\end{align*}
Thus, we use \eqref{eq:3+1LL} to find
\begin{align*}
\begin{bmatrix}
(1-a^2)^2I_{n+2} & 0 \\
0 & \left(1-\frac{n+2}{n}a^2\right)^2I_n
\end{bmatrix}v&=a^4\begin{bmatrix}
\frac{(n+2)^2-1}{(n+1)^2}I_{n+2} & 0 \\ 0 & \frac{(n^2-1)(n+2)^2}{n^2(n+1)^2}I_n
\end{bmatrix}v\\
&\phantom{=}-a^2\begin{bmatrix}
-(1-a^2)\frac{2}{n+1}I_{n+2} & 0 \\ 
0 & \left(1-\frac{n+2}{n}a^2\right)\frac{2(n+2)}{n(n+1)}I_n
\end{bmatrix}v
\end{align*}
Thus, we find
\begin{equation*}
\frac{n+1-2(n+2)a^2}{n+1}v=0.
\end{equation*}
But on $0<a<\sqrt{\frac{n+1}{2(n+2)}}$, $\frac{n+1-2(n+2)a^2}{n+1}\neq 0$, contradiction! Therefore, $\hat{M}\in\mathcal{M}_{2n+2,2n+2}$ corresponds with a spherically symmetric $\mathrm{Sp}(2n+2)$ monopole.

Now we prove the final part of the proposition. Consider the additional $\hat{M}$ given in the statement. Let $a:=\frac{1}{\sqrt{3}}$ and $n:=1$. The following matrices induce $(\mathbb{R}^3,\varrho_3)\oplus (\mathbb{R}^1,\varrho_1)$: $Y_i:=\begin{bmatrix}
y_i & 0 \\ 0 & 0
\end{bmatrix}$, where
\begin{equation*}
y_1:=\begin{bmatrix}
0 & 1 & 0 \\
-1 & 0 & 0 \\
0 & 0 & 0
\end{bmatrix}, \quad
y_2:=\begin{bmatrix}
0 & 0 & -1 \\
0 & 0 & 0 \\
1 & 0 & 0 
\end{bmatrix}, \quad
y_3:=\begin{bmatrix}
0 & 0 & 0 \\
0 & 0 & 1 \\
0 & -1 & 0
\end{bmatrix}.
\end{equation*}
The $B_i$ matrices for this case are given, up to a sign, by
\begin{equation*}
B_1=\begin{bmatrix}
0 \\ 0 \\ 1
\end{bmatrix}, \quad
B_2=\begin{bmatrix}
0 \\ 1 \\ 0
\end{bmatrix}, \quad
B_3=\begin{bmatrix}
1 \\ 0 \\ 0
\end{bmatrix}.
\end{equation*}
Indeed, these matrices satisfy \eqref{eq:defbi} to \eqref{eq:finalbi}. Instead of solving these equations on a case-by-case basis, one can solve them in general~\cite[Appendix]{lang_solitons}. As above, consider $M_i=a\begin{bmatrix}
0 & B_i^1 \\
(B_i^1)^T & 0
\end{bmatrix}$. Using these generators, we see that
\begin{equation*}
M=\frac{1}{\sqrt{3}}\begin{bmatrix}
0 & 0 & 0 & k \\
0 & 0 & 0 & j \\
0 & 0 & 0 & i \\
k & j & i & 0
\end{bmatrix}.
\end{equation*}
Hence, our $\hat{M}$ is spherically symmetric, assuming it belongs to $\mathcal{M}_{2,4}$, which we now verify; indeed, $L^\dagger L-M^2=I_4$ $LL^\dagger =I_2$ and $LL^\dagger=I_2$. Therefore, we have $\hat{M}\in\mathcal{M}_{2,4}$ corresponds with a spherically symmetric $\mathrm{Sp}(2)$ monopole.
\end{proof}

\begin{note}
We compute the hyperbolic monopole corresponding to the additional ADHM data above in \eqref{eq:sp(2)mono}. 
While the hyperbolic monopole corresponding to the aforementioned ADHM data is spherically symmetric, in a given gauge, the Higgs field varies on spheres of constant radius. However, up to gauge, the Higgs monopole depends only on the radius $r$ from the origin. Hence, along any ray emanating from the origin, up to gauge, the Higgs field $\Phi$ is given by
\begin{equation}
\Phi(r):=\begin{bmatrix}
\frac{ir}{r^2+1} & 0 \\
0 & -\frac{ir(r^4+6r^2+1)}{(r^2+1)(3r^4+2r^2+3)}
\end{bmatrix}.
\end{equation}
Note that the eigenvalues of $\Phi(r)$ as $r\rightarrow 1$ are $\pm\frac{i}{2}$.
Then we have 
\begin{equation}
|\Phi(r)|^2=\frac{2(5r^8+12r^6+30r^4+12r^2+5)r^2}{(3r^4+2r^2+3)^2(r^2+1)^2}.
\end{equation}
The energy density of any hyperbolic monopole is given by $\varepsilon=\frac{1}{\sqrt{g}}\partial_i(\sqrt{g}g^{ij}\partial_j|\Phi|^2)$. In this case,
\begin{equation}
\varepsilon=\frac{(1-r^2)^4}{(1+r^2)^4}\cdot\frac{\begin{splitfrac}{135r^{16}+840r^{14}+5252r^{12}+13304r^{10}+18282r^8}{
+13304r^6+5252r^4+840r^2+135}\end{splitfrac}}{(3r^4+2r^2+3)^4}.
\end{equation}
In Figure~\ref{fig:sp2}, we see the norm of the Higgs field squared as well as the energy density of the monopole. From the figure, we see that the Higgs field only vanishes at the origin and the monopole looks like a point-particle, in that the energy density has a global maximum at the origin.

\begin{center}
\begin{minipage}[t]{0.885\textwidth}
\centering
\includegraphics[scale=0.5]{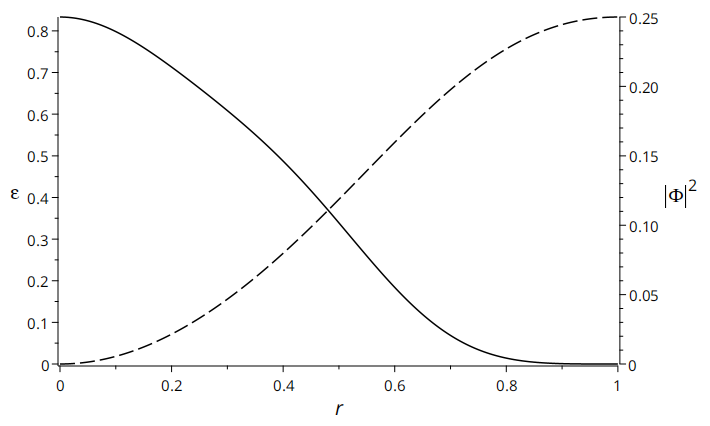}
\captionof{figure}{The norm of the Higgs field squared and the energy density for the additional ADHM data provided in Proposition~\ref{prop:n+2+n}. The solid line is $\varepsilon$ with the left vertical axis and the dashed line is $|\Phi|^2$ with the right vertical axis.}
\label{fig:sp2}
\end{minipage}
\end{center}
\end{note}

\begin{prop}
Let $n\in\mathbb{N}_+$ be odd. Suppose $(V,\rho)\simeq (\mathbb{R}^{n},\varrho_{n})\oplus (\mathbb{R}^n,\varrho_n)$ generates a spherically symmetric hyperbolic monopole with ADHM data $\hat{M}\in\mathcal{M}_{m,2n}$ for some $m\in\mathbb{N}_+$. Let $y_i:=\varrho_n(\upsilon_i)\in\mathfrak{so}(n)$. Up to gauge there is some $a\geq 0$ such that
\begin{equation*}
M_i=a\begin{bmatrix}
0 & y_i \\
-y_i & 0
\end{bmatrix}.
\end{equation*}
Note that if $n=1$, then $M=0$, which we have already covered.

Having identified all $M$ generated by $(V,\rho)$, we construct families of spherically symmetric hyperbolic monopoles, using $M$. For $n>1$ and $0<a< \frac{2}{n+1}$, let $Y:=y_1i+y_2+y_3k$ and let
\begin{equation}
\begin{aligned}
\beta:=\sqrt{\frac{4-a^2(n^2+1)\pm\sqrt{16+(n^2-1)^2a^4-8(n^2+1)a^2}}{2n^2}}&,\quad  \alpha:=-\frac{\beta^2+a^2}{2\beta},\quad\textrm{and}\\
L:=\alpha I_{2n}+\beta\begin{bmatrix}
Y & 0 \\
0 & Y
\end{bmatrix}&.
\end{aligned}
\end{equation}
Note that as $0<a< \frac{2}{n+1}$, $\beta>0$, so $L$ is well-defined. We have $\hat{M}\in\mathcal{M}_{2n,2n}$ and the corresponding $\mathrm{Sp}(2n)$ monopole is spherically symmetric.

In addition to the above families of spherically symmetric monopoles, the following $\hat{M}\in\mathcal{M}_{4,6}$ corresponds with a spherically symmetric $\mathrm{Sp}(4)$ monopole:
\begin{align}
Y&:=\begin{bmatrix}
0 & i & -j \\
-i & 0 & k \\
j & -k & 0
\end{bmatrix}, \quad l:= \begin{bmatrix}
\sqrt{2} & -\frac{i}{\sqrt{2}} & \frac{j}{\sqrt{2}} \\
0 & \sqrt{\frac{3}{2}} & -k\sqrt{\frac{3}{2}}
\end{bmatrix},\quad\textrm{and}\quad 
\hat{M}:=\frac{1}{2}\begin{bmatrix}
l & 0 \\
0 & l \\
0 & Y \\
-Y & 0
\end{bmatrix}.\label{eq:sp(4)mono}
\end{align}\label{prop:n+n}
\end{prop}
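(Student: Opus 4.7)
The plan is to follow the three-stage template established in Proposition~\ref{prop:n+2+n}, using the close analogy between $(\mathbb{R}^n,\varrho_n)\oplus(\mathbb{R}^n,\varrho_n)$ and $(\mathbb{R}^{n+2},\varrho_{n+2})\oplus(\mathbb{R}^n,\varrho_n)$. First I would pin down $M$ by a direct appeal to the Structure Theorem (Theorem~\ref{thm:struct}): with $n_1=n_2=n$ odd, case~(6) forces the diagonal blocks to vanish and case~(4) fixes the off-diagonals up to a single real scalar $\lambda_{1,2}$; setting $a:=\lambda_{1,2}$ yields the stated form. To further reduce to $a\geq 0$ up to gauge, I would conjugate by $Q:=\mathrm{diag}(I_n,-I_n)\in\mathrm{O}(2n)$, which commutes with the representation generators $\mathrm{diag}(y_i,y_i)$ but flips the sign of the off-diagonal blocks of $M_i$.

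For the family, the key algebraic identity is $Y^2=\frac{n^2-1}{4}I_n+Y$, which follows from $[y_a,y_b]=\epsilon_{abc}y_c$ combined with the Casimir $\sum_{i}y_i^2=-\frac{n^2-1}{4}I_n$. Using $L^\dagger=L$ (since $Y$ is quaternion-Hermitian), this reduces $L^\dagger L-M^2=I_{2n}$ to coefficient-matching on the basis $\{I_{2n},\mathrm{diag}(Y,Y)\}$, producing a quadratic in $\beta^2$ whose roots give the stated $\alpha,\beta$; the range $0<a<\frac{2}{n+1}$ keeps the discriminant positive and $\beta^2>0$. The non-degeneracy of $LL^\dagger$ and of $\Delta(X_3k)^\dagger\Delta(X_3k)$ on $[0,1]$ would then be handled by the same contradiction template used in Proposition~\ref{prop:n+2+n}: assume a null vector, invoke $L^2=I+M^2$, and push through a spectral estimate. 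For $LL^\dagger$, the bound relies on $v^\dagger\,\mathrm{diag}(Y,Y)\,v\leq\frac{n+1}{2}$ (the largest eigenvalue of $Y$, read off from $Y^2-Y-\frac{n^2-1}{4}I_n=0$), yielding a contradiction exactly when $a<\frac{2}{n+1}$. For $\Delta(X_3k)^\dagger\Delta(X_3k)=(1+X_3^2)I_{2n}-2X_3M_3$, squaring and using $\|y_3 v\|^2\leq\bigl(\frac{n-1}{2}\bigr)^2\|v\|^2$ yields $(1+X_3^2)^2\leq X_3^2a^2(n-1)^2$, which contradicts $1+X_3^2\geq 2X_3$ whenever $a(n-1)<2$ (implied by $a<\frac{2}{n+1}$).

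The additional $\mathrm{Sp}(4)$ example corresponds to $n=3$ and $a=\frac{1}{2}$, precisely the boundary of the family's range, where the generic $L$ formula degenerates; here I would verify the data directly. A short computation gives $l^\dagger l=2I_3-Y$, and combining this with the adjoint-representation identity $Y^2=2I_3+Y$ yields $l^\dagger l+Y^2=4I_3$, hence $L^\dagger L-M^2=I_6$. Similarly $ll^\dagger=3I_2$ gives $LL^\dagger=\frac{3}{4}I_4>0$, and the eigenvalues of $M_3$ lie in $\{0,\pm\tfrac{1}{2}\}$, reducing the non-singularity of $\Delta(X_3k)^\dagger\Delta(X_3k)$ on $[0,1]$ to the obvious $(X_3-\tfrac{1}{2})^2+\tfrac{3}{4}>0$.

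The main obstacle will be the $LL^\dagger$ positivity estimate: the threshold $a<\frac{2}{n+1}$ is tight, with both the reality of $\beta$ and the spectral contradiction hinging on the same bound. This tightness is precisely why the $\mathrm{Sp}(4)$ example sits at the boundary and must be established by hand rather than as a specialization of the family.
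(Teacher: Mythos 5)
Your proposal is correct and follows essentially the same route as the paper: the Structure Theorem plus the gauge $Q=\mathrm{diag}(I_n,-I_n)$ to pin down $M$ with $a\geq 0$, Lemma~\ref{lemma:spherfinalcond} together with a null-vector spectral estimate for the $\Delta$ and $LL^\dagger$ conditions, coefficient-matching on $\{I_{2n},\mathrm{diag}(Y,Y)\}$ via $Y^2=\frac{n^2-1}{4}I_n+Y$ for the family, and a direct verification (your identities $l^\dagger l=2I_3-Y$ and $ll^\dagger=3I_2$ check out) for the additional $\mathrm{Sp}(4)$ example. The only cosmetic differences are in the positivity arguments: you bound $v^\dagger\,\mathrm{diag}(Y,Y)\,v$ by the top eigenvalue $\frac{n+1}{2}$ where the paper squares the eigenvector relation and shows $1+\frac{(n^2-1)^2a^4}{16}-\frac{(n^2+1)a^2}{2}$ has no roots on $\left(0,\frac{2}{n+1}\right)$, and you estimate with $\|y_3v\|^2\leq\frac{(n-1)^2}{4}$ where the paper uses $\sum_iM_i^2$; both yield the same thresholds.
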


\begin{proof}
The Structure Theorem tells us that for a spherically symmetric monopole generated by $(V,\rho)$, there is a constant $\lambda_{1,2}\in\mathbb{R}$ such that the ADHM data $\hat{M}\in\mathcal{M}_{m,2n}$ satisfies 
\begin{equation*}
M_i=\begin{bmatrix}
0 & \lambda_{12}y_i \\
-\lambda_{12}y_i & 0
\end{bmatrix}.
\end{equation*}
Denoting $a:=\lambda_{12}\in\mathbb{R}$, $M$ has the form given in the statement of the proposition. Furthermore, any data with $M_i$ of this form is spherically symmetric by the Structure Theorem.

To see that, up to gauge, we can take $a\geq 0$, consider $Q:=\begin{bmatrix}
I_{n} & 0 \\
0 & -I_n
\end{bmatrix}\in\mathrm{O}(2n)$. We see that $QM_iQ^T=-M_i$. 

The $\hat{M}$ in parts two and three of the proposition have $M$ given as above. We show that in both cases, if $\hat{M}$ satisfies the first three conditions in Definition~\ref{def:Mnk}, then the final condition is satisfied. Assuming the first three conditions are satisfied, by Lemma~\ref{lemma:spherfinalcond}, we need only check the final condition at $X=X_3k$ for $X_3\in[0,1]$. For such $X$, we find
\begin{equation*}
\Delta(X)^\dagger \Delta(X)=(1+X_3^2)I-2X_3M_3.
\end{equation*}
Suppose that $\Delta(X)^\dagger \Delta(X)$ is not positive definite for some $X_3\in[0,1]$. Then there is some unit vector $v$ such that $\Delta(X)^\dagger \Delta(X)v=0$. Thus, 
\begin{equation*}
(1+X_3^2)v=2X_3M_3v.
\end{equation*}
Multiplying both sides by their conjugate transpose, we find
\begin{equation*}
(1+X_3^2)^2=4X_3^2v^\dagger M_3^2v\leq 4X_3^2v^\dagger\left(\sum_{i=1}^3 M_i^2\right)v.
\end{equation*}
Substituting $M_i$, we see that as $0<a<\frac{2}{n+1}$,
\begin{equation*}
(1+X_3^2)^2\leq X_3^2a^2(n^2-1)<4X_3^2\frac{n-1}{n+1}<4X_3^2.
\end{equation*}
Hence, $(1-X_3^2)^2<0$, contradiction! Thus, $\Delta(X)^\dagger \Delta(X)$ is positive definite.

Now we prove the second part of the proposition. Suppose $0<a<\frac{2}{n+1}$. Let $\hat{M}$ be as in the statement. Then
\begin{equation*}
L^\dagger L-M^2=\left(\alpha^2+(a^2+\beta^2)\frac{n^2-1}{4}\right)I_{2n}+(2\alpha\beta+\beta^2+a^2)\begin{bmatrix}
Y & 0 \\
0 & Y
\end{bmatrix}.
\end{equation*}
Substituting our expressions, we see that $L^\dagger L-M^2=I_{2n}$. Suppose that $LL^\dagger=L^2$ is not positive definite. Then there is some unit vector $v$ such that $L^2v=0$. As $L^2=L^\dagger L=I_{2n}+M^2$, we see that
\begin{equation}
\left(1-a^2\frac{n^2-1}{4}\right)v=a^2\begin{bmatrix}
Y & 0 \\
0 & Y 
\end{bmatrix}v.\label{eq:n+nLLpos}
\end{equation}
Multiplying by $\left(1-a^2\frac{n^2-1}{4}\right)$ and using \eqref{eq:n+nLLpos}, we see that
\begin{align*}
\left(1-a^2\frac{n^2-1}{4}\right)^2v=a^2\begin{bmatrix}
Y & 0 \\
0 & Y
\end{bmatrix}\left(1-a^2\frac{n^2-1}{4}\right)v=a^4\begin{bmatrix}
Y & 0 \\
0 & Y
\end{bmatrix}^2v.
\end{align*}
Note that $Y^2=\frac{n^2-1}{4}I_n+Y$. Simplifying,
\begin{equation*}
\left(1+\frac{(n^2-1)^2a^4}{16}-\frac{a^2(n^2+1)}{2}\right)v=0.
\end{equation*}
But on $\left(0,\frac{2}{n+1}\right)$ $1+\frac{(n^2-1)^2a^4}{16}-\frac{a^2(n^2+1)}{2}\neq 0$. Contradiction! Therefore, $\hat{M}\in\mathcal{M}_{2n,2n}$ corresponds with a spherically symmetric $\mathrm{Sp}(2n)$ monopole.

Now we prove the final part of the proposition. Consider the additional $\hat{M}$ given in the statement. Let $a:=\frac{1}{2}$ and $n:=3$. Recall from the proof of Proposition~\ref{prop:n+2+n} that the following matrices induce $(\mathbb{R}^3,\varrho_3)$: 
\begin{equation*}
y_1:=\begin{bmatrix}
0 & 1 & 0 \\
-1 & 0 & 0 \\
0 & 0 & 0
\end{bmatrix}, \quad
y_2:=\begin{bmatrix}
0 & 0 & -1 \\
0 & 0 & 0 \\
1 & 0 & 0 
\end{bmatrix}, \quad
y_3:=\begin{bmatrix}
0 & 0 & 0 \\
0 & 0 & 1 \\
0 & -1 & 0
\end{bmatrix}.
\end{equation*}
As above, consider $M_i=a\begin{bmatrix}
0 & y_i \\
-y_i & 0
\end{bmatrix}$. Using these generators, we see that $Y:=y_1i+y_2j+y_3k$ and $M$ are given by
\begin{equation*}
Y=\begin{bmatrix}
0 & i & -j \\
-i & 0 & k \\
j & -k & 0
\end{bmatrix} \quad\textrm{and}\quad 
M=\frac{1}{2}\begin{bmatrix}
0 & Y \\
-Y & 0
\end{bmatrix}.
\end{equation*}
Hence, our $\hat{M}$ is spherically symmetric, assuming it belongs to $\mathcal{M}_{4,6}$, which we now verify; indeed, $L^\dagger L-M^2=I_6$ and $LL^\dagger=\frac{3}{4}I_4$. Therefore, we have $\hat{M}\in\mathcal{M}_{4,6}$ corresponds with a spherically symmetric $\mathrm{Sp}(4)$ monopole.
\end{proof}

\begin{note}
We compute the hyperbolic monopole corresponding to the additional ADHM data above in \eqref{eq:sp(4)mono}. The spherically symmetric, hyperbolic monopole corresponding to the aforementioned ADHM data has Higgs field $\Phi$ satisfying
\begin{equation}
|\Phi|^2=\frac{r^{12}+9r^{10}+33r^8+58r^6+33r^4+9r^2+1}{4(r^4+r^2+1)^2(r^2+1)^2}.
\end{equation}
The energy density of this monopole is given by
\begin{equation}
\varepsilon=\frac{3(1-r^2)^4}{8(1+r^2)^4}\cdot\frac{5r^{16}+70r^{14}+381r^{12}+942r^{10}+1260r^8+942r^6+381r^4+70r^2+5}{(r^4+r^2+1)^4}.
\end{equation}
In Figure~\ref{fig:sp4}, we see the norm of the Higgs field squared as well as the energy density of the monopole. From the figure, we see that the Higgs field never vanishes and the monopole looks like a shell with a non-zero size, in that the energy density has a global maximum away from the origin.

\begin{center}
\begin{minipage}[t]{0.885\textwidth}
\centering
\includegraphics[scale=0.5]{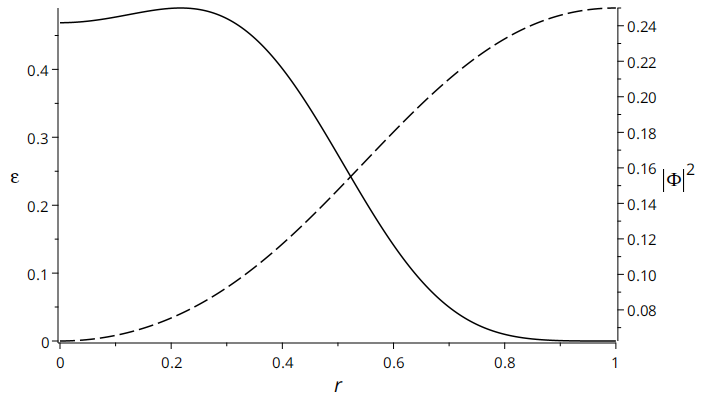}
\captionof{figure}{The norm of the Higgs field squared and the energy density for the additional ADHM data provided in Proposition~\ref{prop:n+n}. The solid line is $\varepsilon$ with the left vertical axis and the dashed line is $|\Phi|^2$ with the right vertical axis.}
\label{fig:sp4}
\end{minipage}
\end{center}

We also find that as $r\rightarrow 1$, 
\begin{equation*}
\Phi\rightarrow \begin{bmatrix}
\frac{17i}{50} & \frac{12}{175}\sqrt{14} & \frac{8}{3325}\sqrt{3990} & \frac{6i}{475}\sqrt{285} \\
-\frac{12}{175}\sqrt{14} & \frac{31i}{350} & -\frac{48i}{3325}\sqrt{285} & \frac{18}{3325}\sqrt{3990} \\
-\frac{8}{3325}\sqrt{3990} & -\frac{48i}{3325}\sqrt{285} & \frac{473i}{1330} & \frac{36}{665}\sqrt{14} \\
\frac{6i}{475}\sqrt{285} & -\frac{18}{3325}\sqrt{3990} & -\frac{36}{665}\sqrt{14} & \frac{41i}{190}
\end{bmatrix}.
\end{equation*}
This matrix has eigenvalues $\frac{i}{2},-\frac{i}{2}$ with multiplicity $3$ and $1$, respectively. 
\end{note}

\subsection{Constrained structure groups}\label{subsec:StructureGroups}

In this section, we prove that there is a constraint on the structure groups of spherically symmetric monopoles generated by a representation of $\mathfrak{sp}(1)$. 

Given a real representation $(V,\rho)$ of $\mathfrak{sp}(1)$, the Structure Theorem helps us find a vector space in which the $M$ component of ADHM data corresponding to spherically symmetric monopoles lies. Lemma~\ref{lemma:uniqueL} tells us that in $\mathcal{M}_{n,k}$, $L$ is uniquely determined by $M$, up to multiplication by a $\mathrm{Sp}(n)$ factor, which does not affect the corresponding monopole. Thus, for a given $n$, the Structure Theorem helps us find the unique family of spherically symmetric monopoles generated by $(V,\rho)$. However, if $n$ is allowed to vary, the same vector space may give rise to monopoles with different structure groups. For example, see Proposition~\ref{prop:n+2+n} or Proposition~\ref{prop:n+n}. In these examples, the exact $M$ matrices differ when looking at different structure groups, but they belong to the same vector space. Nevertheless, it may be that the exact same $M$ matrix can give rise to monopoles with different structure groups. We find a constraint on the structure group, identifying which values of $n$ will permit a spherically symmetric monopole generated by $(V,\rho)$.

In Theorem~\ref{thm:sphersym}, we obtain a representation $(V,\rho)$ of $\mathfrak{sp}(1)$ from a spherically symmetric monopole by focusing on the bottom of \eqref{eq:repgiver}. This representation induces another, $(\hat{V},\hat{\rho})$, which we use to determine what the $M$ part of the ADHM data of a spherically symmetric monopole looks like. Theorem~\ref{thm:sphersym} gives us a second representation by focusing on the top of \eqref{eq:repgiver}.
\begin{lemma}
Suppose a real representation $(V,\rho)$ generates a spherically symmetric monopole, with ADHM data $\hat{M}\in\mathcal{M}_{n,k}$. Let $Y_i:=\rho(\upsilon_i)$ and $y_i:=(LL^\dagger)^{-1}L(\upsilon_iI_k+Y_i)L^\dagger\in\mathfrak{sp}(n)$. Then $y_1,y_2,y_3$ induce a quaternionic representation of $\mathfrak{sp}(1)$ which we denote by $(W,\lambda)$, where $W:=\mathbb{H}^n$ and $\lambda$ is the linear map taking $\upsilon_i\mapsto y_i$.
\end{lemma}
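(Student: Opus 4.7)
The plan is to extract a Nahm-like intertwining relation between $y_i$ and $Y_i$ from the top row of \eqref{eq:repgiver}, and then push the bracket computation through this relation, in close analogy with how the earlier lemma deduced $\mu L = LM$ from the first three conditions of $\mathcal{M}$. First I set $\tilde{Y}_i := \upsilon_i I_k + Y_i$, so that $y_i = (LL^\dagger)^{-1} L \tilde{Y}_i L^\dagger$. The top row of \eqref{eq:repgiver} reads $-\upsilon L + (LL^\dagger)^{-1} L(\upsilon + \rho(\upsilon)) L^\dagger L - L\rho(\upsilon) = [-\upsilon, L]$. Expanding $[-\upsilon,L] = -\upsilon L + L\upsilon$ and cancelling the $-\upsilon L$ on both sides yields $(LL^\dagger)^{-1} L \tilde{Y}_i L^\dagger L = L \tilde{Y}_i$, which is exactly
\begin{equation*}
y_i L \;=\; L \tilde{Y}_i.
\end{equation*}
This is the key identity; it is the analogue of $\mu L = LM$, with $\tilde{Y}_i$ in place of $M$.

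Once this intertwining relation is in hand the rest is mechanical. That $y_i \in \mathfrak{sp}(n)$: since $\upsilon_i^\dagger = -\upsilon_i$ and $Y_i \in \mathfrak{so}(k)$, we have $\tilde{Y}_i^\dagger = -\tilde{Y}_i$. Right-multiply $y_i L = L\tilde{Y}_i$ by $L^\dagger$ to get $y_i LL^\dagger = L\tilde{Y}_i L^\dagger$, and take daggers to obtain $LL^\dagger y_i^\dagger = -L\tilde{Y}_i L^\dagger = -y_i LL^\dagger$; invertibility of $LL^\dagger$ gives $y_i^\dagger = -y_i$. For the bracket relation, I compute
\begin{equation*}
[y_i, y_j] L \;=\; y_i L \tilde{Y}_j - y_j L \tilde{Y}_i \;=\; L [\tilde{Y}_i, \tilde{Y}_j],
\end{equation*}
and then expand $[\tilde{Y}_i, \tilde{Y}_j] = [\upsilon_i I_k, \upsilon_j I_k] + [\upsilon_i I_k, Y_j] + [Y_i, \upsilon_j I_k] + [Y_i, Y_j]$. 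The two mixed commutators vanish because the entries of $Y_\ell$ are real and commute with the quaternion $\upsilon_\ell$; the first term is $\sum_l \epsilon_{ijl}\upsilon_l I_k$ by the bracket relation $[\upsilon_i, \upsilon_j] = \sum_l \epsilon_{ijl}\upsilon_l$ in $\mathfrak{sp}(1)$; and the last term is $\sum_l \epsilon_{ijl} Y_l$ because $\rho$ is a representation. Hence $[\tilde{Y}_i, \tilde{Y}_j] = \sum_l \epsilon_{ijl} \tilde{Y}_l$, and so
\begin{equation*}
\bigl([y_i, y_j] - \sum_{l=1}^3 \epsilon_{ijl} y_l\bigr) L \;=\; 0.
\end{equation*}
Right-multiplying by $L^\dagger$ and using invertibility of $LL^\dagger$ yields $[y_i, y_j] = \sum_l \epsilon_{ijl} y_l$.

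This is exactly the bracket relation of the basis $\upsilon_1,\upsilon_2,\upsilon_3$ of $\mathfrak{sp}(1)$, so the $\mathbb{R}$-linear extension $\lambda \colon \mathfrak{sp}(1) \to \mathfrak{sp}(n)$ with $\lambda(\upsilon_i) = y_i$ is a Lie algebra homomorphism, and composing with the standard action of $\mathfrak{sp}(n)$ on $\mathbb{H}^n$ gives the claimed quaternionic representation $(W, \lambda)$. The real obstacle in this proof is just pinning down the intertwining relation $y_i L = L\tilde{Y}_i$ from the top row of \eqref{eq:repgiver}; once that is stated correctly, the verification that $y_i \in \mathfrak{sp}(n)$ and the structure-constant calculation are forced by the same manipulations used to establish the existence of $\mu$ earlier, now applied to the shifted matrices $\tilde{Y}_i$.
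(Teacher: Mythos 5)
Your proof is correct, and it reaches the same bracket identity $[\tilde Y_i,\tilde Y_j]=\sum_l\epsilon_{ijl}\tilde Y_l$ that drives the paper's argument, but it enters through a different door. The paper's proof never touches the top row of \eqref{eq:repgiver}: it starts from the generating condition $[\upsilon_iI_k+Y_i,M]=0$, deduces $[\tilde Y_i,L^\dagger L]=0$ from $L^\dagger L=I_k+M^2$, concludes that $L\tilde Y_iL^\dagger$ commutes with $LL^\dagger$, and then computes $[y_i,y_j]=(LL^\dagger)^{-1}L[\tilde Y_i,\tilde Y_j]L^\dagger$ directly. You instead extract the intertwining relation $y_iL=L\tilde Y_i$ from the top row of \eqref{eq:repgiver} and push the bracket through $L$; this is exactly the identity the paper only establishes in the corollary \emph{after} this lemma, and it makes the structure-constant computation cleaner (no need to track which factors commute with $(LL^\dagger)^{-1}$). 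The one point worth flagging is that \eqref{eq:repgiver} was derived in the forward direction of Theorem~\ref{thm:sphersym} for the particular $\rho$ produced by the stabilizer argument, so invoking its top row for an arbitrary generating $(V,\rho)$ deserves a sentence of justification. It does hold: given the definition of $y_i$, the relation $y_iL=L\tilde Y_i$ is equivalent to $L\tilde Y_iL^\dagger L=LL^\dagger L\tilde Y_i$, which follows from $[\tilde Y_i,L^\dagger L]=0$, i.e.\ from the bottom row alone together with $L^\dagger L=I_k+M^2$ --- so your key identity can be made self-contained in one line, which is essentially what the paper's proof does. (Your skew-adjointness step also quietly uses that $y_i$ commutes with $LL^\dagger$ when cancelling; since both $y_iLL^\dagger$ and $LL^\dagger y_i$ equal $L\tilde Y_iL^\dagger$, this is immediate, but it is worth saying.)
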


\begin{proof}
By Theorem~\ref{thm:sphersym}, as the monopole with ADHM data $\hat{M}\in\mathcal{M}_{n,k}$ is generated by $(V,\rho)$, we have that $[\upsilon_iI_k+Y_i,M]=0$. Hence, $[\upsilon_iI_k+Y_i,L^\dagger L]=0$, as $L^\dagger L=I_k+M^2$. Also,
\begin{equation*}
LL^\dagger L(\upsilon_iI_k+Y_i)L^\dagger=L(\upsilon_iI_k+Y_i)L^\dagger LL^\dagger,
\end{equation*}
so $LL^\dagger$ commutes with $L(\upsilon_iI_k+Y_i)L^\dagger$. Let $y_i$ be as above. Then $y_i^\dagger=-y_i$ and as $\upsilon_i I_k$ and $Y_j$ commute,
\begin{align*}
[y_i,y_j]&=(LL^\dagger)^{-1} L\left[(\upsilon_iI_k+Y_i),(\upsilon_jI_k+Y_j)\right]L^\dagger=\sum_{l=1}^3\epsilon_{ijl}y_l.
\end{align*}
As the $y_i\in\mathfrak{sp}(n)$, they induce a quaternionic representation.
\end{proof}

We use $(W,\lambda)$ to determine what $\mathrm{Sp}(n)$ structure groups are possible, given $(V,\rho)$. Using $W=\mathbb{H}^n\simeq \mathbb{C}^{2n}$, we can restrict the scalars from $\mathbb{H}$ to $\mathbb{C}$. In doing so, the generators $y_i\in\mathfrak{sp}(n)$ correspond to elements of $\mathfrak{su}(2n)$. Thus, the induced complex representation is a $2n$-representation. Note that the complex representation obtained by restricting the scalars of an irreducible, quaternionic representation is isomorphic to either $(V_n,\rho_n)$ for some $n$ even, or $(V_n,\rho_n)^{\oplus 2}$ for some $n$ odd. 
\begin{definition}
Let $\iota\colon \mathfrak{sp}(1)\rightarrow \mathbb{H}$ be the inclusion map. Then $(\mathbb{H},\iota)$ is the fundamental representation. After restricting the scalars of the fundamental representation, the complex representation is isomorphic to $(V_2,\rho_2)$. Given representations $(V,\rho)$ and $(W,\lambda)$ of $\mathfrak{sp}(1)$, real and quaternionic, respectively, we define the real representation 
\begin{equation*}
(\hat{W},\hat{\lambda}):=\left((W,\lambda)\otimes_\mathbb{R} (V^*,\rho^*)\right)\otimes_\mathbb{H} (\mathbb{H}^*,\iota^*)\simeq (W,\lambda)\otimes_\mathbb{H} \left((V^*,\rho^*)\otimes_\mathbb{R} (\mathbb{H}^*,\iota^*)\right).
\end{equation*}
\end{definition}

Unraveling how $\hat{\lambda}$ acts on $\hat{W}=\mathrm{Mat}(n,k,\mathbb{H})$, let $\upsilon\in\mathfrak{sp}(1)$ and $A\in\hat{W}$. Then
\begin{equation*}
\hat{\lambda}(\upsilon)\left(A\right)=\lambda(\upsilon)A-A\rho(\upsilon) -A\upsilon.
\end{equation*}

The definition of $(\hat{W},\hat{\lambda})$ is well-motivated. First, note that $L\in\hat{W}$. Moreover, the following corollary tells us exactly where $L$ lives in $\hat{W}$.
\begin{cor}
Let $\hat{M}\in\mathcal{M}_{n,k}$ be spherically symmetric. Let $(V,\rho)$, $(W,\lambda)$ and $(\hat{W},\hat{\lambda})$ be as above. Then $\hat{\lambda}(\upsilon)(L)=0$ for all $\upsilon\in\mathfrak{sp}(1)$.
\end{cor}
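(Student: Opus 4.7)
The plan is to read off the result directly from the top row of \eqref{eq:repgiver}, which was already derived in the proof of Theorem~\ref{thm:sphersym}. Recall that for any $\upsilon\in\mathfrak{sp}(1)$, that row states
\[
-\upsilon L+(LL^\dagger)^{-1}L(\upsilon+\rho(\upsilon))L^\dagger L-L\rho(\upsilon)=[-\upsilon,L]=-\upsilon L+L\upsilon.
\]
The first thing I would do is cancel the $-\upsilon L$ appearing on both sides, leaving the cleaner identity
\[
(LL^\dagger)^{-1}L(\upsilon+\rho(\upsilon))L^\dagger L-L\rho(\upsilon)=L\upsilon.
\]

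Next I would match the left-hand side to $\lambda(\upsilon)L$. By definition, $y_i=(LL^\dagger)^{-1}L(\upsilon_iI_k+Y_i)L^\dagger$, and since $\lambda$ is by construction the $\mathbb{R}$-linear map with $\lambda(\upsilon_i)=y_i$, linearity gives $\lambda(\upsilon)=(LL^\dagger)^{-1}L(\upsilon+\rho(\upsilon))L^\dagger$ for every $\upsilon\in\mathfrak{sp}(1)$. Therefore
\[
\lambda(\upsilon)L=(LL^\dagger)^{-1}L(\upsilon+\rho(\upsilon))L^\dagger L,
\]
so the identity above becomes $\lambda(\upsilon)L-L\rho(\upsilon)-L\upsilon=0$, which is exactly $\hat{\lambda}(\upsilon)(L)=0$ in view of the unravelled formula for $\hat{\lambda}$ that immediately precedes the corollary.

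No obstacle arises: the whole content sits in the top row of \eqref{eq:repgiver}, which was already obtained by differentiating the equivariance equation in the proof of Theorem~\ref{thm:sphersym}. The only thing to check is the pure bookkeeping that the definition of $\lambda$ and the tensor-product action $\hat{\lambda}(\upsilon)(A)=\lambda(\upsilon)A-A\rho(\upsilon)-A\upsilon$ recast that row as the vanishing statement in the corollary, which I would state and then invoke linearity in $\upsilon$ to extend from the basis $\upsilon_1,\upsilon_2,\upsilon_3$ to all of $\mathfrak{sp}(1)$.
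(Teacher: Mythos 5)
Your proposal is correct and follows exactly the paper's own argument: the corollary is read off from the top row of \eqref{eq:repgiver} by cancelling $-\upsilon L$, recognizing the remaining terms as $\lambda(\upsilon)L - L\rho(\upsilon) - L\upsilon$, and extending linearly in $\upsilon$. The only difference is that you spell out the cancellation and the identification with $\lambda(\upsilon)$ more explicitly than the paper does.
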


\begin{proof}
As $\hat{M}$ is spherically symmetric, we have \eqref{eq:repgiver}. Recall that the $Y_i:=\rho(\upsilon_i)$. Then we see that for $\upsilon=\sum_{i=1}^3a_i\upsilon_i\in\mathfrak{sp}(1)$,
\begin{equation*}
\hat{\lambda}(\upsilon)(L)=\sum_{i=1}^3a_i(\lambda(\upsilon_i)L-L\rho(\upsilon_i)-L\upsilon_i)=0.
\end{equation*}
\end{proof}

\begin{cor}
Let $\hat{M}\in\mathcal{M}_{n,k}$ be spherically symmetric. Let $(\hat{W},\hat{\lambda})$ be as above. Then $(\hat{W},\hat{\lambda})$ has trivial summands and $L$ lives in the direct sum of them.
\end{cor}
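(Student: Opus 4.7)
My plan is to combine the previous corollary with the semisimplicity of $\mathfrak{sp}(1)$; the result is essentially a representation-theoretic repackaging. First, I would invoke the preceding corollary, which asserts $\hat{\lambda}(\upsilon)(L) = 0$ for every $\upsilon \in \mathfrak{sp}(1)$. This says precisely that $L$ is a fixed vector of the representation $(\hat{W}, \hat{\lambda})$, so the one-dimensional subspace $\mathrm{span}(L) \subseteq \hat{W}$ is a subrepresentation on which $\mathfrak{sp}(1)$ acts trivially.

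Next, I would use that $\mathfrak{sp}(1)$ is semisimple, so the representation $(\hat{W}, \hat{\lambda})$ is completely reducible and decomposes as a direct sum of irreducible summands. In any nontrivial irreducible summand, no nonzero vector can be annihilated by the entire algebra action, since the span of such a vector would be a proper nonzero invariant subspace, contradicting irreducibility. Consequently, the fixed subspace $\hat{W}^{\hat{\lambda}} := \{v \in \hat{W} \mid \hat{\lambda}(\upsilon)(v) = 0 \text{ for all } \upsilon \in \mathfrak{sp}(1)\}$ coincides exactly with the direct sum of the trivial summands appearing in the decomposition.

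Finally, to conclude that trivial summands genuinely exist, I would note that the definition of $\mathcal{M}$ requires $LL^\dagger$ to be positive definite, which forces $L \neq 0$. Hence $L$ is a nonzero fixed vector of $(\hat{W}, \hat{\lambda})$, simultaneously establishing the existence of trivial summands and exhibiting $L$ as an element of their direct sum. I do not foresee a hard step here; the argument is purely formal, and the content of the corollary is really just to record the representation-theoretic constraint on $L$ which will later be used (in combination with the structure theorem on $M$) to restrict the admissible structure groups.
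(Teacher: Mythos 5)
Your argument is correct and matches the paper's own proof, which simply observes that since $L\neq 0$ (from positive definiteness of $LL^\dagger$) and $\hat{\lambda}(\upsilon)(L)=0$ by the preceding corollary, $\mathrm{span}(L)$ is a one-dimensional invariant subspace acted on trivially, hence a trivial summand. Your additional remarks on complete reducibility and the absence of fixed vectors in nontrivial irreducible summands are a harmless elaboration of the same idea.
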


\begin{proof}
As $L\neq 0$, we know $\mathrm{span}(L)\subseteq \hat{W}$ is an invariant 1-dimensional subspace that is acted on trivially, so $(\mathrm{span}(L),0)$ is a trivial summand of the representation $(\hat{W},\hat{\lambda})$. 
\end{proof}

This constraint on the representation $(W,\lambda)$ narrows down the possibilities for the structure group. For instance, if we take $(V,\rho)=(\mathbb{R}^m,\varrho_m)\oplus (\mathbb{R}^{m+2},\varrho_{m+2})$ with $m>1$ odd, then $(\hat{W},\hat{\lambda})$ only has a trivial summand when the complex representation obtained by restricting the scalars of $(W,\lambda)$ has a $(V_{m-1},\rho_{m-1})$, $(V_{m+1},\rho_{m+1})$, or $(V_{m+3},\rho_{m+3})$ summand. Hence, we must have $2n\geq m-1$, giving us a lower bound on $n$. By Note~\ref{note:ineq}, we also have an upper bound, $n\leq 2m+2$. 

For example, taking $m=7$, we see that the complex representation obtained by restricting the scalars of $(W,\lambda)$ must have a $(V_6,\rho_6)$, $(V_8,\rho_8)$, or $(V_{10},\rho_{10})$ summand. Thus, $n\geq 3$, so there is no spherically symmetric $\mathrm{Sp}(1)$ or $\mathrm{Sp}(2)$ hyperbolic monopole generated by $(\mathbb{R}^7,\varrho_7)\oplus (\mathbb{R}^9,\varrho_9)$ with ADHM data in $\mathcal{M}_{n,16}$ for some $n\in\mathbb{N}_+$. 

We can use this constraint to extract information from low rank $(V,\rho)$. Consider the $n=1$ case of Proposition~\ref{prop:n+2+n}. That is $(V,\rho)=(\mathbb{R}^3,\varrho_3)\oplus (\mathbb{R}^1,\varrho_1)$. This representation generates spherically symmetric monopoles: one family with structure group $\mathrm{Sp}(4)$ and another with structure group $\mathrm{Sp}(2)$. It does not generate a spherically symmetric monopole with a lower rank structure group.
\begin{prop}
Consider the $n=1$ case of Proposition~\ref{prop:n+2+n}. That is $(V,\rho)=(\mathbb{R}^3,\varrho_3)\oplus (\mathbb{R}^1,\varrho_1)$. This representation does not generate a spherically symmetric $\mathrm{Sp}(1)$ hyperbolic monopole with ADHM data in $\mathcal{M}_{1,4}$.
\end{prop}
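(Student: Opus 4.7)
\smallskip\noindent\emph{Proof plan.} The strategy is to combine the Structure Theorem with the rank constraint forced by demanding the structure group be $\mathrm{Sp}(1)$. By Proposition~\ref{prop:n+2+n} (taking its $n$ equal to $1$), any spherically symmetric $\hat{M}\in\mathcal{M}$ generated by this $(V,\rho)$ has, up to gauge,
\[
M \;=\; a\begin{bmatrix} 0 & 0 & 0 & k \\ 0 & 0 & 0 & j \\ 0 & 0 & 0 & i \\ k & j & i & 0 \end{bmatrix}
\]
for some $a\geq 0$. My first step will be to compute $M^2$ entrywise and hence read off the Hermitian matrix $L^\dagger L = I_4 + M^2$; this is a short quaternionic calculation (the diagonal entries collapse to $1-a^2$, $1-a^2$, $1-a^2$, $1-3a^2$, while the off-diagonal entries in the top-left $3\times 3$ block are of the form $\pm a^2 i$, $\pm a^2 j$, $\pm a^2 k$, and the remaining entries vanish).

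Next, I will exploit the fact that when the structure group is $\mathrm{Sp}(1)$, $L = (l_1,l_2,l_3,l_4)$ is a $1\times 4$ quaternionic row vector, so $(L^\dagger L)_{ij} = \overline{l_i}\,l_j$. This yields both the rank-one identity $\lvert (L^\dagger L)_{ij}\rvert^{2} = (L^\dagger L)_{ii}(L^\dagger L)_{jj}$ and the positivity $(L^\dagger L)_{ii} = \lvert l_i\rvert^{2}\geq 0$. These two constraints are very rigid.

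The contradiction will then drop out of two entries. Applying the rank-one identity to the $(1,2)$ entry will give $a^{4} = (1-a^2)^{2}$, which together with $a\geq 0$ forces $a^2 = \tfrac{1}{2}$. But then the $(4,4)$ entry reads $\lvert l_4\rvert^{2} = 1-3a^2 = -\tfrac{1}{2}$, which is impossible. The degenerate case $a=0$ is handled separately and trivially: it would force $L^\dagger L = I_4$, a rank-$4$ matrix, whereas $L^\dagger L$ has rank at most one.

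I do not expect any serious obstacle. The only point requiring a little care is translating the $n=1$ condition into the rank-one identity above; after that only arithmetic remains. In particular, no further representation-theoretic machinery from Section~\ref{subsec:StructureGroups} is required, as the rigid form of $M$ dictated by Proposition~\ref{prop:n+2+n} is already incompatible with the $\mathrm{Sp}(1)$ rank constraint.
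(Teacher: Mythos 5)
Your proof is correct, but it takes a genuinely different route from the paper's. The paper's argument runs through the machinery of Section~\ref{subsec:StructureGroups}: it shows that the auxiliary quaternionic representation $(W,\lambda)$ must be the fundamental representation $(\mathbb{H},\iota)$, solves the intertwining equation $[\upsilon_i,\tilde L]=\tilde L Y_i$ to pin down $L=\begin{bmatrix}0&0&0&d\end{bmatrix}$ up to gauge, and only then reads off the contradiction from $L^\dagger L-M^2=I_4$ (the diagonal forces $|a|=1$ while the off-diagonal forces $a=0$). You bypass the representation-theoretic constraint on $L$ entirely and instead use only that, for structure group $\mathrm{Sp}(1)$, the matrix $L^\dagger L=I_4+M^2$ is the Gram matrix of a single quaternionic row, so $(L^\dagger L)_{ij}=\overline{l_i}\,l_j$ gives the multiplicativity identity $\lvert(L^\dagger L)_{ij}\rvert^2=(L^\dagger L)_{ii}(L^\dagger L)_{jj}$ together with nonnegative diagonal entries. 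Your entrywise computation of $I_4+M^2$ checks out (diagonal $1-a^2,1-a^2,1-a^2,1-3a^2$; off-diagonal moduli $a^2$ in the top-left block), the identity at the $(1,2)$ entry does force $a^2=\tfrac12$, and the $(4,4)$ entry then yields $|l_4|^2=-\tfrac12$, which is impossible; the $a=0$ case is already killed by the same identity (it reads $0=1$), so your separate rank remark is not even needed. The one point worth being explicit about is that the gauge action preserves both membership in $\mathcal{M}$ and the shape of $L$ as a $1\times4$ row, so reducing $M$ to the standard form of Proposition~\ref{prop:n+2+n} is harmless. What each approach buys: yours is shorter, self-contained, and needs no knowledge of where $L$ sits inside $(\hat W,\hat\lambda)$; the paper's is deliberately an illustration of the general constraint mechanism (which also yields bounds like $2n\geq m-1$ for other representations and handles the $\mathrm{Sp}(2)$ exclusions that follow), whereas your rank-one trick is specific to $n=1$ and would not extend cleanly to higher-rank structure groups.
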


\begin{proof}
Suppose $(V,\rho)$ generates some spherically symmetric $\mathrm{Sp}(1)$ monopole with ADHM data $\hat{M}\in\mathcal{M}_{1,4}$. We know that $(\hat{W},\hat{\lambda})$ has a trivial summand. Consider $(V^*,\rho^*)\otimes_\mathbb{R} (\mathbb{H}^*,\iota^*)$. Restricting the scalars, this representation is isomorphic, as a complex representation, to $(V_4,\rho_4)\oplus (V_2,\rho_2)^{\oplus 2}$. 

As $\hat{M}$ has $\mathrm{Sp}(1)$ as a structure group, after restricting the scalars of $(W,\lambda)$, we get a complex 2-representation. In order for $(\hat{W},\hat{\lambda})$ to have a trivial summand and for the previously mentioned complex representation to be a 2-representation, we must have $(W,\lambda)\simeq(\mathbb{H},\iota)$, the fundamental representation. Hence, there is some $p\in\mathrm{Sp}(1)$ such that $y_i=p\upsilon_ip^\dagger$ and 
\begin{equation*}
p\upsilon_ip^\dagger L-LY_i-L\upsilon_i=0.
\end{equation*}
Taking $\tilde{L}:=p^\dagger L$, we see $[\upsilon_i,\tilde{L}]=\tilde{L}Y_i$. Such a choice has no effect on the monopole.

Using the $Y_i$ matrices from the proof of Proposition~\ref{prop:n+2+n} and solving these equations, we find there is some $c,d\in\mathbb{R}$ such that $\tilde{L}=\begin{bmatrix}
ck & cj & ci & d
\end{bmatrix}$. Recall that with these $Y_i$ generators, we have
\begin{equation*}
M=a\begin{bmatrix}
0 & 0 & 0 & k \\
0 & 0 & 0 & j \\
0 & 0 & 0 & i \\
k & j & i & 0
\end{bmatrix}.
\end{equation*}
Hence, 
\begin{equation*}
I_4=L^\dagger L-M^2=\tilde{L}^\dagger \tilde{L}-M^2=\begin{bmatrix}
a^2+c^2 & (a^2+c^2)i & -(a^2+c^2)j & -cdk \\
(a^2+c^2)i & (a^2+c^2) & (a^2+c^2)k & -cdj \\
(a^2+c^2)j & -(a^2+c^2)k & (a^2+c^2) & -cdi \\
cdk & cdj & cdi & d^2+3a^2
\end{bmatrix}.
\end{equation*}
Hence, $a^2+c^2=0$ and $a^2+c^2=1$. Contradiction! Thus, $(V,\rho)$ does not generate any spherically symmetric $\mathrm{Sp}(1)$ hyperbolic monopoles with ADHM data in $\mathcal{M}_{1,4}$.
\end{proof}

\begin{prop}
Consider the $n=4$ case in Proposition~\ref{prop:irrep}. That is $(V,\rho)=(\mathbb{R}^4,\varrho_4)$. This representation does not generate a spherically symmetric $\mathrm{Sp}(1)$ or $\mathrm{Sp}(2)$ hyperbolic monopole with ADHM data in $\mathcal{M}_{n,4}$, where $n=1,2$, respectively.
\end{prop}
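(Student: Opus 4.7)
The plan is to combine Proposition~\ref{prop:irrep}, which pins down the possible $M$ arising from $(\mathbb{R}^4,\varrho_4)$, with the rank bound that falls out of the defining equation $L^\dagger L = I_4 + M^2$ of $\mathcal{M}$. By Proposition~\ref{prop:irrep}(2) with $k=4$, every such $\hat{M}\in\mathcal{M}$ is, up to gauge, of the form $M_i = \kappa\, Y_i T$ where $T := U^\dagger \begin{bmatrix} iI_2 & 0 \\ 0 & -iI_2 \end{bmatrix} U$ and $\kappa \geq 0$; the case $\kappa = 0$ reduces to Proposition~\ref{prop:Mzero} and forces $n=k=4$, so I focus on $\kappa > 0$.

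The main computation is $M^2$. Since $T$ commutes with each $Y_l$ and satisfies $T^2 = -I_4$, one has $M_l M_m = -\kappa^2 Y_l Y_m$, and hence $[M_l,M_m] = -\kappa^2 \sum_n \epsilon_{lmn} Y_n$. Plugging these into the quaternionic expansion
\[
M^2 = -\sum_l M_l^2 + \sum_n \Bigl(\sum_{l<m}\epsilon_{lmn}[M_l,M_m]\Bigr)e_n, \qquad (e_1,e_2,e_3):=(i,j,k),
\]
and applying the Casimir identity $\sum_l Y_l^2 = -\tfrac{3}{4}I_4$ for $\varrho_4$ yields $M^2 = -\tfrac{3\kappa^2}{4} I_4 - \kappa^2 Y$, where $Y := Y_1 i + Y_2 j + Y_3 k$. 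The same expansion applied to $Y$ itself gives $Y^2 = \tfrac{3}{4} I_4 + Y$, so the quaternionic Hermitian matrix $Y$ (Hermitian because $Y_l$ is real skew and $e_l$ is imaginary) has minimal polynomial $(Y - \tfrac{3}{2})(Y + \tfrac{1}{2})$. Combined with $\mathrm{tr}(Y) = 0$ (each $Y_l$ is skew-symmetric) and $\mathrm{tr}(Y^2) = 3$, the real eigenvalues $\tfrac{3}{2}$ and $-\tfrac{1}{2}$ are forced to occur with multiplicities $1$ and $3$, respectively.

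The conclusion is then immediate. We have
\[
L^\dagger L = I_4 + M^2 = \Bigl(1 - \tfrac{3\kappa^2}{4}\Bigr) I_4 - \kappa^2 Y,
\]
with eigenvalues $1 - \tfrac{9\kappa^2}{4}$ (multiplicity $1$) and $1 - \tfrac{\kappa^2}{4}$ (multiplicity $3$); positive semi-definiteness forces $\kappa \leq \tfrac{2}{3}$, and throughout that range $\mathrm{rank}(L^\dagger L) \geq 3$. Because $\mathrm{rank}(L) = \mathrm{rank}(L^\dagger L)$ (as $\ker(L^\dagger L) = \ker L$) and $L$ is an $n \times 4$ quaternionic matrix, this gives $n \geq 3$, ruling out both $\mathrm{Sp}(1)$ and $\mathrm{Sp}(2)$ structure groups. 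The only delicate step I anticipate is carefully organising the quaternionic product in the expansion of $M^2$; after that, the eigenvalue count and the rank bound are essentially automatic.
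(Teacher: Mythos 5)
Your proof is correct, and it takes a genuinely different route from the paper's. The paper attacks the problem through the constrained-structure-group machinery: it forms the auxiliary quaternionic representation $(W,\lambda)$ on $\mathbb{H}^n$ from the top row of \eqref{eq:repgiver}, uses the requirement that $(\hat{W},\hat{\lambda})$ contain $L$ in a trivial summand to enumerate the possible $(W,\lambda)$ for $n=1$ and $n=2$, solves the resulting linear equations for $L$ explicitly (e.g.\ $L=A\begin{bmatrix}1 & -i & -j & -k\end{bmatrix}$ in the $\mathrm{Sp}(1)$ case), and then derives a sign contradiction from $L^\dagger L-M^2=I_4$. You bypass $(W,\lambda)$ entirely: starting from the gauge-fixed form $M_i=\kappa Y_iT$ of Proposition~\ref{prop:irrep}(2), you compute $I_4+M^2=(1-\tfrac{3\kappa^2}{4})I_4-\kappa^2Y$ directly, diagonalize the quaternionic Hermitian matrix $Y$ via its minimal polynomial and the vanishing (real) trace, and conclude $\mathrm{rank}(L)=\mathrm{rank}(L^\dagger L)\geq 3$, hence $n\geq 3$. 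Your argument is more self-contained and handles both structure groups in one stroke with a sharper conclusion; it is worth noting that the paper's general bound from the $(\hat{W},\hat{\lambda})$ analysis would not by itself exclude $n=1$ here (since $(V^*,\rho^*)\otimes(\mathbb{H}^*,\iota^*)$ contains $(V_1,\rho_1)^{\oplus 2}$), which is why the paper must supplement it with explicit computation, whereas your rank bound is immediate. The only points requiring care, which you handle correctly, are that $T$ is real, commutes with the $Y_l$, and squares to $-I_4$, and that traces of quaternionic matrices are only cyclic in their real part, which is all the trace argument needs.
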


\begin{proof}
We know that $(\hat{W},\hat{\lambda})$ must have a trivial summand. Consider $(V^*,\rho^*)\otimes_\mathbb{R} (\mathbb{H}^*,\iota^*)$. Restricting the scalars, this representation is isomorphic, as a complex representation, to $(V_3,\rho_3)^{\oplus 2}\oplus (V_1,\rho_1)^{\oplus 2}$.

Suppose that $(V,\rho)$ generates a spherically symmetric $\mathrm{Sp}(1)$ hyperbolic monopole with ADHM data $\hat{M}\in\mathcal{M}_{1,4}$. Then the complex representation obtained by restricting the scalars of $(W,\lambda)$ is a 2-representation. In order for $(\hat{W},\hat{\lambda})$ to have a trivial summand and for the previously mentioned complex representation to be a 2-representation, we must have that the complex representation obtained by restricting the scalars of $(W,\lambda)$ is isomorphic to $(V_1,\rho_1)^{\oplus 2}$. Hence, $y_i=0$, so $L(Y_i+\upsilon_iI_4)=0$.

The following matrices induce the irreducible real $4$-representation:
\begin{equation*}
Y_1=\frac{1}{2}\begin{bmatrix}
0 & -1 & 0 & 0 \\
1 & 0 & 0 & 0 \\
0 & 0 & 0 & -1 \\
0 & 0 & 1 & 0
\end{bmatrix}, \quad
Y_2=\frac{1}{2}\begin{bmatrix}
0 & 0 & -1 & 0 \\
0 & 0 & 0 & 1 \\
1 & 0 & 0 & 0 \\
0 & -1 & 0 & 0 
\end{bmatrix}, \quad
Y_3=\frac{1}{2}\begin{bmatrix}
0 & 0 & 0 & -1 \\
0 & 0 & -1 & 0 \\
0 & 1 & 0 & 0 \\
1 & 0 & 0 & 0
\end{bmatrix}.
\end{equation*}
Indeed, the Casimir operator is $C=-\sum_{i=1}^3 Y_i^2=\frac{3}{4}I_4$. 

Using these matrices, we solve the equations $L(Y_i+\upsilon_i I_4)=0$, finding there is some $A\in\mathbb{H}$ such that $L=A\begin{bmatrix}
1 & -i & -j & -k
\end{bmatrix}$. In Proposition~\ref{prop:irrep}, we showed that there are some $\kappa_1,\kappa_2,\kappa_3\in\mathbb{R}$ and $U\in\mathrm{SU}(4)$ such that 
\begin{equation*}
M_i=Y_i U^\dagger\begin{bmatrix}
\kappa_1iI_{k/2} & (\kappa_2+i\kappa_3)I_{k/2} \\
(-\kappa_2+i\kappa_3)I_{k/2} & -\kappa_1iI_{k/2}
\end{bmatrix}U.
\end{equation*}
Using the fact that the matrix multiplied by $Y_i$ above commutes with all $Y_i$, we see that
\begin{equation*}
I_4=L^\dagger L-M^2=|A|^2\begin{bmatrix}
1 & -i & -j & -k \\
i & 1 & -k & j \\
j & k & 1 & -i \\
k & -j & i & 1
\end{bmatrix}-\frac{\kappa_1^2+\kappa_2^2+\kappa_3^2}{4}\begin{bmatrix}
-3 & 2i & 2j & 2k \\
-2i & -3 & 2k & -2j \\
-2j & -2k & -3 & 2i \\
-2k & 2j & -2i & -3
\end{bmatrix}.
\end{equation*}
Hence, $|A|^2+\frac{\kappa_1^2+\kappa_2^2+\kappa_3^2}{2}=0$, so $A=0$, meaning $L=0$. Contradiction! Thus, $(V,\rho)$ does not generate any spherically symmetric $\mathrm{Sp}(1)$ hyperbolic monopoles with ADHM data in $\mathcal{M}_{1,4}$.

Suppose that $(V,\rho)$ generates a spherically symmetric $\mathrm{Sp}(2)$ hyperbolic monopole with ADHM data $\hat{M}\in\mathcal{M}_{2,4}$. Then the complex representation obtained by restricting the scalars of $(W,\lambda)$ is a 4-representation. In order for $(\hat{W},\hat{\lambda})$ to have a trivial summand and for the previously mentioned complex representation to be a 4-representation, we must have the complex representation obtained by restricting the scalars of $(W,\lambda)$ to be isomorphic to $(V_1,\rho_1)^{\oplus 4}$, $(V_2,\rho_2)\oplus (V_1,\rho_1)^{\oplus 2}$, or $(V_3,\rho_3)\oplus (V_1,\rho_1)$. Note that we can ignore the final case, as this can not come from restricting the scalars of an irreducible quaternionic representation; the complex representations obtained by restricting the scalars of irreducible quaternionic representations are isomorphic as complex representations to $(V_i,\rho_i)$ with $i$ even and $(V_i,\rho_i)^{\oplus 2}$ with $i$ odd.

Consider the former case. We again have $y_i=0$, so from above, we see that there are some $A,B\in\mathbb{H}$ such that $L=\begin{bmatrix}
A & -Ai & -Aj & -Ak \\
B & -Bi & -Bj & -Bk
\end{bmatrix}$. Then from $I_4=L^\dagger L-M^2$, we obtain $|A|^2+|B|^2+\frac{\kappa_1^2+\kappa_2^2+\kappa_3^2}{2}=0$, which means $L=0$. Thus, we must have the middle case.

We know that $L$ lives in the direct sum of the trivial summands of $(\hat{W},\hat{\lambda})$. Thus, we know that $L$ lives in the $(V_1,\rho_1)^{\oplus 2}$ part of the complex representation obtained by restricting the scalars of $(W,\lambda)$. Thus, there is some $q\in\mathrm{Sp}(2)$ such that $y_i:=q\begin{bmatrix}
\upsilon_i & 0 \\ 0 & 0
\end{bmatrix}q^\dagger$. Let $\tilde{L}:=q^\dagger L$. We see that
\begin{equation*}
\begin{bmatrix}
\upsilon_i & 0 \\ 0 & 0
\end{bmatrix}\tilde{L}-\tilde{L}Y_i-\tilde{L}\upsilon_i=0.
\end{equation*}
As $L$ lives in the trivial summand of $(\hat{W},\hat{\lambda})$, we know that 
\begin{equation*}
\tilde{L}=\begin{bmatrix}
0 \\ \tilde{l}
\end{bmatrix}.
\end{equation*}
Hence, $\tilde{l}(\upsilon_iI_k+Y_i)=0$. From the above, then $\tilde{l}=0$, so $L=0$. Contradiction! Hence, $(V,\rho)$ does not generate any spherically symmetric $\mathrm{Sp}(2)$ hyperbolic monopoles with ADHM data in $\mathcal{M}_{2,4}$.
\end{proof}

\counterwithin*{equation}{section}
\counterwithin*{figure}{section}
\renewcommand\theequation{\thesection.\arabic{equation}}
\renewcommand\thefigure{\thesection.\arabic{figure}}

\section*{Acknowledgements}

Many thanks to Benoit Charbonneau and Ben Webster for fruitful discussions. I acknowledge the support of the Natural Sciences and Engineering Research Council of Canada (NSERC) [CGSD3-545542-2020].

Cette recherche a été financée par le Conseil de Recherches en Sciences Naturelles et en Génie du Canada (CRSNG), [CGSD3-545542-2020].

\bibliographystyle{halpha}
\addcontentsline{toc}{section}{References}
\bibliography{HyperbolicMonopoles}

\end{document}